\def\l{\langle}
\def\r{\rangle}
\newcommand{\mbb}{\mathbb}
\newcommand{\mrm}{\mathrm}
\newcommand{\msf}{\mathsf}
\title{Continuous-mode multi-photon filtering\thanks{This research is supported in part by National Natural Science Foundation of China (NSFC) grants (Nos. 61374057, 61134008 and 61227902) and a Hong Kong RGC grant (No. 531213).}}
\author{H.~T. Song\footnotemark[2]
\and G.~F. Zhang\footnotemark[3]
\and Z.~R. Xi\footnotemark[4]}
\begin{document}
\maketitle

\renewcommand{\thefootnote}{\fnsymbol{footnote}}
\footnotetext[2]{Qian Xuesen Laboratory of Space Technology, China Academy of Space Technology, Beijing 100094, China (\email{songhongting@qxslab.cn}).}
\footnotetext[3]{Department of Applied Mathematics, Hong Kong Polytechnic University, Hong Kong, China (\email{Guofeng.Zhang@polyu.edu.hk}).}
\footnotetext[4]{Academy of Mathematics and Systems Science, Chinese Academy of Sciences, Beijing 100190, China (\email{zrxi@iss.ac.cn}).}

%\slugger{mms}{xxxx}{xx}{x}{x--x}%slugger should be set to mms, siap, sicomp, sicon, sidma, sima, simax, sinum, siopt, sisc, or sirev

\begin{abstract}
The purpose of this paper is to derive filters for an arbitrary open quantum system driven by a light wavepacket prepared in a continuous-mode multi-photon state.  A continuous-mode multi-photon state is a state of a travelling light wavepacket that contains a definite number of photons and is characterised by a temporal (or equivalently spectral) profile.  After the interaction with the system, the outgoing light can be monitored by means of homodyne detection or photodetection. Filters for both measurement schemes are derived in this paper. Unlike the vacuum or the coherent state case, the annihilation operator of the light field acting on a multi-photon state changes the state by annihilating a photon, and this makes the traditional filtering techniques inapplicable. To circumvent this difficulty, we adopt a non-Markovian embedding technique proposed in \cite{gough:2013} for the study of the single-photon filtering problem. However, the multi-photon nature of the problem addressed in this paper makes the study much more mathematically involved. Moreover, as demonstrated by an example  --- a two-level system driven by a continuous-mode two-photon state, multi-photon filters can reveal interesting strong nonlinear optical phenomena absent in both the single-photon state case and the continuous-mode Fock state case. \end{abstract}

\begin{keywords}
Open quantum systems, Quantum filtering, Multi-photon states.
\end{keywords}

\begin{AMS}
93E11, 81Q93
\end{AMS}

\pagestyle{myheadings}
\thispagestyle{plain}
\markboth{Continuous-mode multi-photon filtering}{Continuous-mode multi-photon filtering}

\section{Introduction}\label{intro}

When light impinges on a quantum system, e.g., an atom or a quantum-mechanical oscillator, partial system information may be carried away by the outgoing light. The outgoing light can be directed to another quantum system, thus serving as a (directional) link to facilitate cascade connection \cite{carmichael:1993}-\cite{ZJ12}. Alternatively, the outgoing light may be continuously monitored to produce photocurrent, on which the state of the quantum system can be conditioned. The stochastic evolution of the conditional system state is commonly called a {\it quantum trajectory}. A quantum filter can be designed to estimate quantum trajectories \cite{carmichael:1993}, \cite{HA13}-\cite{APR14}; this is the basis of quantum measurement-based feedback control \cite{vHSM05}-\cite{james:2007}, \cite{wiseman:1993}-\cite{wiseman:2010}.

In quantum optics, the familiar formalism of quantum filtering considers incident lights in Gaussian states, including the vacuum state, coherent states, thermal states, and squeezed states \cite{wiseman:2010}-\cite{EWP15}.  This is natural as Gaussian states are commonly used in quantum optics laboratories and have been well studied. With the advent of modern experimental technology, nowadays, non-Gaussian states such as single-photon states, multi-photon states and Schrodinger cat states, can be reliably generated and manipulated. Therefore, very recently, there is a growing interest in deriving quantum filters for such non-Gaussian states. For example, filters have been derived for quantum systems driven by light fields prepared in single-photon states or cat states \cite{gough:2013,carvalho:2012,gough:2012b}.
% add the NJP paper by Yamamoto and James ?????

Single-photon states and multi-photon states are very useful resources in quantum computing, quantum communication, and quantum cryptography \cite{loudon:2000}-\cite{ANL+14} and references therein. Roughly speaking, a continuous-mode $n$-photon state is a state of a travelling light wavepacket containing exactly $n$ photons that share a common pulse shape superposed on a continuum of spectral modes.  When $n=0$, the wavepacket is in the vacuum state, whose filtering equation is well-known in the quantum optics community. When $n=1$, the wavepacket is in a single-photon state, whose filtering equations have recently been derived in \cite{gough:2013,gough:2012b}. When $n>1$, for convenience we call the state a {\it multi-photon state}.

In this paper, we study the problem of quantum filtering for arbitrary quantum systems driven by continuous-mode multi-photon states. Due to the multi-photon nature of the problem, it turns out that the derivation of multi-photon filters is very mathematically involved. For example, if the input is a wavepacket containing $n$ photons, we need a hierarchy of  $\frac{2^n (2^n+1)}{2}$ differential equations to determine the $n$-photon filter.  When $n=0$, namely the vacuum state case, a single differential equation is sufficient. When $n=1$, namely the single-photon state case, we need 3 coupled differential equations. When $n=2$, a system of 10 equations is required.  Similarly, a hierarchy of 36 differential equations is required for the case of  $3$-photon state, and so on.  Therefore, to present the main ideas clearly, we investigate the 2-photon case in detail before proceeding to the general $n$-photon case.

For the $2$-photon case, master equations are given in Theorem \ref{thm:master_Heisenberg} and Corollary \ref{cor:master_Heisenberg}, while quantum filters for homodyne detection are given in Theorem \ref{thm:2-photon_filter} and Corollary \ref{cor:filter_up}.   These results contain those in \cite{gough:2013} and \cite{gough:2012b} for the single-photon case as special cases. Numerical studies conducted in Examples 2 and 3 show that two-photon excitation of a two-level atom has highly nonlinear optical phenomena absent in both the single-photon state case and the continuous-mode Fock state case.  For the general $n$-photon case,  quantum filters are given in Theorem \ref{homodyne_n_fiter_general} for the homodyne detection case.  For photodetection, the quantum filter is given in Theorem \ref{thm:n_filter_PD}, which reduces to the single-photon filter for photodetection when $n=1$ which is studied in \cite{gough:2013,gough:2012b}. Finally, the multi-photon master equations are given in Theorem \ref{master_n}, which in the Fock state case is actually the master equation (20) in \cite{baragiola:2012} for continuous-mode Fock states. Therefore, the results presented in this paper are indeed very general. Due to the multi-photon nature, the mathematical description of general multi-photon filtering equations  are very messy; in fact, the lexicographical ordering \cite{Skiena:1990} plays an essential role.

% mathematical challenge for the general $n$-photon representation. ?????

The rest of the paper is organized in the following way. Section \ref{sec:preliminaries} introduces open quantum systems and poses the filtering problem. Section \ref{sec:two_photon} focuses on the two-photon case. Here, we first define two-photon states in Subsection \ref{subsec:state}, then present the master equations in Subsection \ref{subsec:2 photon mater eqn}. In order to derive the two-photon filtering equation, we define an extended system in Subsection \ref{subsec:extended_system}, and derive the filtering equation for this extended system in Subsection \ref{subsec:filter:extended}, based on which in Subsection \ref{subsec:filter:2photon} we derive the quantum filter for the original system driven by a two-photon state. After the study of the two-photon filtering in Section \ref{sec:two_photon}, we proceed to the general multi-photon case in Section \ref{sec:multi_photon_filtering}, where we present the general filtering equations for both the homodyne detection case and the photon-counting case.  Section \ref{sec:conclusion} concludes the paper.

{\it Notation.}  $|0\rangle$ is the vacuum state of the free field.  $|\eta\r$ is the initial state of the quantum system of interest. $\mbb{R}^+$ is the set of non-negative real numbers, $L_2(\mbb{R}^+,\mbb{C})$ is the space of Lebesgue measurable and square integrable functions from $\mbb{R}^+$ to $\mbb{C}$. For $\xi_1$ and $\xi_2 \in L_2(\mbb{R}^+, \mbb{C})$, their inner product is $\l \xi_1 | \xi_2 \rangle \triangleq \int_0^\infty \xi_1^\ast(t) \xi_2(t){{\rm d}} t$.  The norm of a  function $\xi\in L_2(\mbb{R}^+,\mbb{C})$ is $\|\xi\| \triangleq \sqrt{\l \xi|\xi\r}$.  $\delta_{jk}$ is the Kronecker delta, namely, $\delta_{jk} = 1$ if $j=k$ or 0 otherwise. $\otimes$ stands for tensor product. $X^\ast$ denotes the complex conjugate of $X$ if $X$ is a complex number or the adjoint operator of $X$ if $X$ is an operator. The commutator of two operators $A$ and $B$  is defined to be $[A,B]\triangleq AB-BA$.

\section{Preliminaries}\label{sec:preliminaries}

In this section we briefly introduce quantum systems and pose the multi-photon filtering problem.

\subsection{Quantum systems} \label{subsec:systems_states}

This subsection gives a very brief introduction to quantum systems, more details can be found in, e.g., \cite{gardiner:2000}, \cite{ZJ12},  \cite{james:2007},   \cite{wiseman:2010}, \cite{hudson:1984}-\cite{WM08}.

The model we study is an arbitrary quantum system $G$ driven by a single-channel
light field which can be effectively described by the so-called $(S,L,H)$ language \cite{GJ09,ZJ12}. Here, $S$ is a unitary scattering operator, $L$ is a coupling operator that describes how the system is coupled to the input field, and the self-adjoint operator $H$ is the initial system Hamiltonian. $S$, $L$, and $H$ are system operators on a separable Hilbert space $\mathsf{H}_{S}$ where the system states reside. The single-channel light field has an annihilation operator $b(t)$ and a creation operator $b^\ast(t)$, which are operators on a Fock space $\mathsf{H}_{F}$ (an infinite-dimensional Hilbert space). $B(t)\triangleq\int_0^t b(r){{\rm d}} r$ and $B^\ast(t)\triangleq\int_0^tb^\ast(r){{\rm d}} r$ are integrated annihilation and creation field operators respectively. The gauge process, often called counting process,
$\Lambda(t)\triangleq\int_0^t b^*(\tau)b(\tau){{\rm d}}\tau$
is also an integrated operator on the Fock space $\mathsf{H}_{F}$ for the input field. In this paper, the input field is \emph{canonical}, that is, the non-zero Ito products are
%\begin{eqnarray}
%&&{{\rm d}} B(t){{\rm d}} B^{\ast }(t)={{\rm d}} t,~~~~~~\qquad {{\rm d}} B(t){{\rm d}}\Lambda (t)={{\rm d}} B(t),\cr\nonumber
%&&{{\rm d}}\Lambda (t){{\rm d}}\Lambda(t)={{\rm d}}\Lambda (t),~~~~\qquad {{\rm d}} \Lambda (t){{\rm d}} B^{\ast}(t)={{\rm d}} B^{\ast}(t).
%\label{eq:ito}
%\end{eqnarray}
\[
{{\rm d}} B(t){{\rm d}} B^{\ast }(t)={{\rm d}} t, {{\rm d}} B(t){{\rm d}}\Lambda (t)={{\rm d}} B(t), {{\rm d}}\Lambda (t){{\rm d}}\Lambda(t)={{\rm d}}\Lambda (t),  {{\rm d}} \Lambda (t){{\rm d}} B^{\ast}(t)={{\rm d}} B^{\ast}(t).
\]

The temporal evolution of the composite system composed of the system and the field can be described by a unitary operator $U(t)$ on the tensor product Hilbert space $\mathsf{H}_{S}\otimes \mathsf{H}_{F}$, and is given by the the following Hudson-Parthasarathy (HP) quantum stochastic differential equation (QSDE)
\begin{equation}
{{\rm d}} U(t)=\left\{ (S-I){{\rm d}}\Lambda (t)+L{{\rm d}} B^{\ast }(t)-L^{\ast }S{{\rm d}} B(t)-(\frac{1}{2}
L^{\ast }L+{\rm i} H){{\rm d}} t\right\} U(t), ~~ t > 0\nonumber
\end{equation}
with the initial condition $U(0)=I$ (the identity operator) and ${\rm i}=\sqrt{-1}$.

In Heisenberg picture, the system operator $X$ at time $t$ is given by $X(t)\equiv j_t(X)\triangleq U^*(t)(X\otimes I)U(t)$, which is an operator on  $\mathsf{H}_{S}\otimes \mathsf{H}_{F}$, and whose temporal evolution is governed by the following Heisenberg equation of motion
%\begin{eqnarray}
%{{\rm d}} j_t(X) &=& j_t(\mathcal{L}_{00}(X)){{\rm d}} t+j_t(\mathcal{L}_{01}(X)){{\rm d}} B(t)+j_t(\mathcal{L}_{10}(X)){{\rm d}} B^*(t)\nonumber\\
%&&+j_t(\mathcal{L}_{11}(X)){{\rm d}} \Lambda(t),
%\label{eq:QDSE_X}
%\end{eqnarray}
\begin{equation}\label{eq:QDSE_X}
{{\rm d}} j_t(X)\!=\!j_t(\mathcal{L}_{00}(X)){{\rm d}} t+j_t(\mathcal{L}_{01}(X)){{\rm d}} B(t)+j_t(\mathcal{L}_{10}(X)){{\rm d}} B^*(t) +j_t(\mathcal{L}_{11}(X)){{\rm d}} \Lambda(t) ,
\end{equation}
where the Evans-Parthasarathy superoperators are
\begin{eqnarray}
\mathcal{L}_{00}(X)
&\triangleq&
\frac{1}{2}L^*[X,L]+\frac{1}{2}[L^*,X]L-{\rm i}[X,H],
\label{eq:L_00}
\\
\mathcal{L}_{01}(X)
&\triangleq&
[L^*,X]S,
\label{eq:L_01}
\\
\mathcal{L}_{10}(X)
&\triangleq&
S^*[X,L]=(\mathcal{L}_{01}(X^*))^*,
\label{eq:L_10}
\\
\mathcal{L}_{11}(X)
&\triangleq&
S^*XS-X.
\label{eq:L_11}
\end{eqnarray} % the one-channel case ?????
After interaction, the quantum field becomes $B_{\rm out}(t) \triangleq U^{\ast }(t)(I\otimes B(t))U(t)$, an operator on  $\mathsf{H}_{S}\otimes \mathsf{H}_{F}$, whose dynamics are given by the following QSDE
\begin{equation}
{{\rm d}} B_{\rm out}(t)=j_{t}(L){{\rm d}} t+j_{t}(S){{\rm d}} B(t).\nonumber
\end{equation}
The output field can be monitored. Homodyne detection and photodetection are the two most commonly used measurement methods in quantum optics. In homodyne detection, the noise quadrature
\begin{eqnarray*}
Y(t) \triangleq U^{\ast }(t)(I\otimes (B(t)+ B^{\ast }(t)))U(t)=B_{\rm out}(t)+B_{\rm out}^{\ast}(t)
\end{eqnarray*}
may be measured, while in photodetection (photon counting),
\begin{eqnarray*}
Y^{\Lambda }(t)
\triangleq
U^{\ast }(t)(I\otimes \Lambda(t))U(t)
\end{eqnarray*}
is measured. By Ito rules, the observation processes $Y(t)$ and $Y^{\Lambda }(t)$ satisfy
\begin{eqnarray*}
{{\rm d}} Y(t)
=
j_{t}(S^{\ast }){{\rm d}} B^{\ast }(t)+j_{t}(S){{\rm d}} B(t)+j_{t}(L+L^{\ast }){{\rm d}} t,
\label{eq:Y_W}
\end{eqnarray*}
and
\begin{eqnarray*}
{{\rm d}} Y^{\Lambda }(t)
=
{{\rm d}} \Lambda (t)+j_{t}(S^{\ast }L){{\rm d}} B^{\ast }(t)+j_{t}(L^{\ast}S){{\rm d}} B(t)+j_{t}(L^{\ast }L){{\rm d}} t,
 \label{eq:Y_Lambda}
\end{eqnarray*}
% how to derive it , formally by
%\[
%\Lambda_{\rm out}(t) = \int_0^t b_{\rm out}^\ast (\tau) b_{\rm out}(\tau) d\tau
%\] ?????
respectively. Moreover, $Y(t)$ and $Y^{\Lambda }(t)$  obey the so-called {\it self-nondemolition} property,  i.e.,
\begin{equation}
[Y(t),Y(s)]=[Y^{\Lambda}(t),Y^{\Lambda }(s)]=0, ~~~ 0\leq s \leq t.\nonumber
\end{equation}
 We denote by $\mathscr{Y}(t)$ and $\mathscr{Y}^{\Lambda }(t)$  the commutative von Neumann algebras generated by  $\{Y(s);~ 0\leq s\leq t\}$ and $\{ Y^{\Lambda }(s);~ 0\leq s\leq t \}$, respectively.

 % derive heterodyne detection ?????

\subsection{Quantum Filtering}\label{subsec:filtering problem}

%\begin{problem}
%What can be learned about the system observable $j_t(X)$ from the filter?????
%the first- and second-order statistical information of $j_t(X)$? the temporal
%evolution of the conditional system state?
%\end{problem}

Simply speaking, the quantum filtering problem studied in this paper is about finding a least mean-square estimate of system observables $j_t(X)$ based on the past measurement outcome information up to time $t$ for a quantum system driven by a continuous-mode multi-photon state. In the homodyne detection case,  it is about the computation of the quantum conditional expectation
\begin{equation} \label{eq:def}
\pi_t^{n;n}(X) \triangleq \mathbb{E}_{n;n}[j_t(X)|\mathscr{Y}(t)].
\end{equation}
Here, the subscript ``$n;n$'' in the expectation notation $\mathbb{E}$ is used to indicate that the input field is in an $n$-photon state. The exact form of  $n$-photon states and the notation $\mathbb{E}_{n;n}$ will be made clear in due course.  As introduced in Subsection \ref{subsec:systems_states}, $\mathscr{Y}(t)$ is the commutative von Neumann algebra generated by the observation processes $Y(s),~0\leq s\leq t$. The quantum conditional expectation in equation (\ref{eq:def}) is well-defined due to the fact that $j_t(X)$ satisfies the {\it non-demolition} condition $[j_t(X),Y(s)]=0$
for all $s\leq t$.   The quantum conditional expectation for the photodetection case can be defined in a similar manner, specifically,
\begin{equation}
\hat{\pi}_t^{n;n}(X) \triangleq \mathbb{E}_{n;n}[j_t(X)|\mathscr{Y}^\Lambda(t)].\label{eq:poisson}
\end{equation}

Due to the complexity of the multi-photon filtering problem, to better present the main ideas, we first focus on the two-photon case and conduct a detailed study of the two-photon filtering problem in Section \ref{sec:two_photon}. After that we proceed to the general $n$-photon case in Section \ref{sec:multi_photon_filtering}.

{\it Example }1. In this example we demonstrate the above-mentioned $(S,L,H)$ language by means of a toy model:  a  two-level system in a one-way waveguide. Here, ``one-way'' means that photons can only propagate along one direction in the waveguide \cite{FKS10}-\cite{SF05}. The state space of the two-level system $G$ is $\msf{H}_{S}=\mbb{C}^2$ whose basis vectors are the ground state $|g\r =[0~1]^T$ and the excited state $|e\r=[1~0]^T$. System operators are 2-by-2 matrices of complex numbers, for example $\sigma_z = |e\r \l e| - |g\r \l g|$, $\sigma_+ = |e\r \l g|$, and $\sigma_- =  |g\r \l e|$. In the interaction picture the total Hamiltonian of the composite system is ($\hbar=1$)
\begin{equation} \label{eq:nov_H_total}
H_{\rm total}=\frac{ \omega _{c}}{2}\sigma _{z}+\int_{-\infty }^{\infty }
\omega b^{\ast }(\omega )b(\omega )\; {{\rm d}}\omega +{\rm i} \sqrt{\frac{\kappa}{2\pi}}\int_{-\infty }^{\infty
}\left( \sigma _{+}b(\omega )-\sigma _{-}b^{\ast }(\omega )\right) {{\rm d}}\omega,\nonumber
\end{equation}
in which the first term and second term on the right-hand side are the free Hamiltonians of the system and the field respectively, while the third one is the interaction Hamiltonian.
The detuning $\omega_c = \Omega - \omega_0$ where $\Omega$  is the atomic transition frequency between the ground state $|g\r $ and the excited state $|e\r$ and $\omega_0$ is the carrier frequency of the input light field,  $b(\omega)$ and its adjoint  $b^\ast(\omega)$ are the annihilation operator and the creation operator of the input field, respectively, and $\kappa >0$ is related to the coupling constant between the two-level system $G$ and the field. For this model, in Heisenberg picture we have the following expressions at $t
\geq t_0$,
\begin{eqnarray}
&&\frac{{{\rm d}}}{{{\rm d}} t}b(\omega ,t) = -{\rm i}[b(\omega ,t),H_{\rm total}(t)] = -{\rm i}\omega b(\omega ,t)- \sqrt{\frac{\kappa}{2\pi}}\sigma _{-}(t), \label{sys2_a}
\\
&&\dot{\sigma}_{-}(t) =-{\rm i}[\sigma_-(t),H_{\rm total}(t)] = -{\rm i}\omega _{c}\sigma _{-}(t)- \sqrt{\frac{\kappa}{2\pi}}\sigma
_{z}(t)\int_{-\infty }^{\infty }b(\omega ,t){{\rm d}}\omega.  \label{sys2_b}
\end{eqnarray}
Here, $t_0$ is the initial time, namely, the time when the system and the field start to interact. Integrating equation (\ref{sys2_a}) from $t_0$ to $t$ yields
\begin{equation}
b(\omega ,t)={\rm e}^{-{\rm i}\omega (t-t_{0})}b(\omega
,t_{0})- \sqrt{\frac{\kappa}{2\pi}}\int_{t_{0}}^{t}{\rm e}^{-{\rm i}\omega (t-r)}\sigma _{-}(r){\rm d}r.
\label{eq:b_omega_t}
\end{equation}
Putting (\ref{eq:b_omega_t}) back into (\ref{sys2_b}) we have
\begin{eqnarray}
\dot{\sigma}_{-}(t) = -\left( \frac{\kappa }{2}+{\rm i}\omega _{c}\right) \sigma _{-}(t)+\sqrt{\kappa
}\sigma _{z}(t)b(t),\nonumber
\end{eqnarray}
where
\begin{equation}
b(t) \triangleq -\frac{1}{\sqrt{2\pi }}\int_{-\infty }^{\infty }{\rm e}^{-{\rm i}\omega
(t-t_{0})}b(\omega ,t_{0}){{\rm d}}\omega\nonumber    \label{eq:b_3}
\end{equation}
is the annihilation operator introduced in Subsection \ref{subsec:systems_states}.
On the other hand, let $t_{1}$ be the terminal time. Integrating equation (\ref{sys2_a}) from $t$ to $t_1$ we have
\begin{equation}
b(\omega ,t)={\rm e}^{-{\rm i}\omega (t-t_{1})}b(\omega
,t_{1})-\sqrt{\frac{\kappa}{2\pi}}\int_{t_{1}}^{t}{\rm e}^{-{\rm i}\omega (t-r)}\sigma _{-}(r){{\rm d}} r.\nonumber
\label{eq_b_omega_t_2}
\end{equation}
Define the output operator by
\begin{equation}
b_{\rm out}(t) \triangleq -\frac{1}{\sqrt{2\pi }}\int_{-\infty }^{\infty }{\rm e}^{-{\rm i}\omega
(t-t_{1})}b(\omega ,t_{1}){{\rm d}}\omega . \nonumber \label{eq:b_out_2}
\end{equation}
It can be easily shown that the input operator $b(t)$ and the output operator $b_{\rm out}(t)$ are related by
\[
 b_{\rm out}(t) = \sqrt{\kappa}\sigma _{-}(t)+b(t).
\]
%We have
%\begin{eqnarray}
% b_{\rm out}(t)
%% &=&-\frac{1}{\sqrt{2\pi }}\int_{-\infty }^{\infty }{\rm e}^{-{\rm i}\omega
%%(t-t_1)}b(\omega ,t_1){{\rm d}}\omega
%%\nonumber
%%\\
%&=&-\frac{1}{\sqrt{2\pi }}\int_{-\infty }^{\infty }{\rm e}^{-{\rm i}\omega (t-t_1)}\left\{
%{\rm e}^{-{\rm i}\omega (t_1-t_{0})}b(\omega ,t_{0})-\sqrt{\frac{\kappa}{2\pi}}\int_{t_{0}}^{t_1}{\rm e}^{-{\rm i}\omega (t_1-\tau
%)}\sigma _{-}(\tau ){{\rm d}}\tau \right\} {{\rm d}}\omega  \nonumber \\
%&=&-\frac{1}{\sqrt{2\pi }}\int_{-\infty }^{\infty }{\rm e}^{-{\rm i}\omega
%(t-t_{0})}b(\omega ,t_{0}){{\rm d}}\omega +\frac{\sqrt{\kappa}}{2\pi}\int_{t_{0}}^{t_1}\sigma _{-}(\tau ){{\rm d}}\tau \int_{-\infty }^{\infty }{\rm e}^{-{\rm i}\omega (t-\tau
%)}{{\rm d}}\omega
%\nonumber
%\\
%&=&\sqrt{\kappa}\sigma _{-}(t)+b(t).\nonumber
%\label{eq:sept13_3}
%\end{eqnarray}
Finally, the following rotations
\begin{equation}
\sigma _{-}(t) \to {\rm e}^{{\rm i}\omega _{c}t}\sigma _{-}(t), ~~
b(t) \to  {\rm e}^{{\rm i}\omega _{c}t}b(t), ~~
b_{\rm out}(t) \to {\rm e}^{{\rm i}\omega _{c}t}b_{\rm out}(t)\nonumber
\end{equation} % how to justify it ?????
yield the final model of interest
\begin{eqnarray}
\dot{\sigma}_-(t)
&=&
-\frac{\kappa }{2}\sigma_{-}(t)+\sqrt{\kappa }\sigma _{z}(t)b(t),
\label{eq:nov08_1a}
\\
b_{\rm out}(t) &=&\sqrt{\kappa }\sigma_{-}(t)+b(t).
\label{eq:nov08_1b}
\end{eqnarray}
For the model (\ref{eq:nov08_1a})--(\ref{eq:nov08_1b}) we have $S=I$, $L=\sqrt{\kappa }\sigma_{-}$, and $H=0$.

{\it Remark }1. It is worthwhile to notice that the model (\ref{eq:nov08_1a})-(\ref{eq:nov08_1b})  can also be used to describe a two-level atom in free space, as previously studied in \cite{gough:2013}, \cite{gough:2012b}, \cite{baragiola:2012}, \cite{GEP+98}-\cite{wang:2011}.

%We have not yet specified the state of the input field. Let the two-level system $G$ be initialised in the ground state $|g\r$. If the input field is in the vacuum state  $|0\r$ (namely, no photons in the field), then at any time instant the state for the combined system is always the tensor product $|g\r\otimes |0\r$.  If the input is in a single-photon state, as studied in \cite{gough:2013,gough:2012b,baragiola:2012,GEP+98,stobinska:2009,wang:2011}, the two-level system may be excited to the excited state $|e\r$ with non-zero probability. In Example 2 we will demonstrate the average excitation behaviour of this system driven by a $2$-photon state by means of master equations, while in Example 3 we will demonstrate the individual excitation behaviour of this system by means of quantum filters.

%%%%%%%%%%%%%%%%%%%%%%%%%%%%%%%
%%%%%%%%%%%%%%%%%%%%%%%%%%%%%%%
%%%%%%%%%%%%%%%%%%%%%%%%%%%%%%%
\section{Two-photon filtering}\label{sec:two_photon}

In this section we present a detailed study of the quantum filtering problem for an arbitrary quantum system driven by a two-photon state. Two-photon states are defined in Subsection \ref{subsec:state}, the mater equations are presented in Subsection \ref{subsec:2 photon mater eqn}, and the filtering equations for the homodyne detection case are derived in Subsections \ref{subsec:extended_system}--\ref{subsec:filter:2photon}.

%%%%%%%%%%%%%%%%%%%%%%%%%%%%%%%
%%%%%%%%%%%%%%%%%%%%%%%%%%%%%%%
\subsection{Two-photon states}\label{subsec:state}

Given a function $\xi\in L_2(\mbb{R}^+,\mbb{C})$, define an operator
\begin{equation}
B(\xi) \triangleq \int_0^\infty \xi^\ast(t)b(t){\rm d} t,\nonumber
\end{equation}
whose adjoint operator is
\begin{equation} \label{eq:B^ast_xi}
B^\ast(\xi) \triangleq \int_0^\infty \xi(t) b^\ast(t){\rm d} t.
\end{equation}

Given two functions  $\xi_1,\xi_2\in L_2(\mbb{R}^+,\mbb{C})$ satisfying $\|\xi_1\|=\|\xi_2\|=1$,  define single-photon states $|\Phi_{10}\r$ and $|\Phi_{01}\r$ to be
\begin{equation}
|\Phi_{10}\r \triangleq B^\ast(\xi_1) |0\r\qquad\mbox{and}\qquad|\Phi_{01}\r \triangleq B^\ast(\xi_2) |0\r,\label{eq:Phi_1}
\end{equation}
%\begin{equation}
%|\Phi_{10}\r &\triangleq& B^\ast(\xi_1) |0\r,
%\label{eq:Phi_10}
%\\
%|\Phi_{01}\r &\triangleq& B^\ast(\xi_2) |0\r,
%\label{eq:Phi_01}
%\end{equation}
respectively. Then we define a two-photon state
\begin{eqnarray}
|\Phi_{11}\rangle\triangleq\frac{1}{\sqrt{N_2}}B^*(\xi_1)B^*(\xi_2)|0\rangle,
\label{eq:2_photon_state}
\end{eqnarray}
where
$N_2 = 1+|\l \xi_1|\xi_2 \r|^2$ is a normalization coefficient. If $\xi_1\equiv \xi_2$, then $|\Phi_{11}\r$ is a continuous-mode two-photon Fock state. Finally, for notational convention, denote $|\Phi_{00}\rangle \triangleq |0\r$.

For these states we have
\begin{eqnarray}\label{eq:aug17_temp1}
&&\qquad{\rm d} B(t)|\Phi_{00}\r =0, ~~~~{\rm d} B(t)|\Phi_{10}\rangle=\xi_1(t)|\Phi_{00}\rangle {\rm d} t,~~~~{\rm d} B(t)|\Phi_{01}\rangle=\xi_2(t)|\Phi_{00}\rangle {\rm d} t, \\
&&\qquad{\rm d} B(t)|\Phi_{11}\rangle=\frac{\xi_1(t)}{\sqrt{N_2}}|\Phi_{01}\rangle {\rm d} t+\frac{\xi_2(t)}{\sqrt{N_2}}|\Phi_{10}\rangle {\rm d} t.\nonumber
\end{eqnarray}

\subsection{Master equations}\label{subsec:2 photon mater eqn}
In this subsection we present the master equations for a quantum system $G$ driven by the two-photon state $|\Phi_{11}\r$ defined in equation (\ref{eq:2_photon_state}).

For a given system operator $X$ on $\mathsf{H}_{S}$, define expectations
\begin{eqnarray}\label{om}
\qquad \qquad \omega_t^{jk;mn}(X)
\triangleq
\mbb{E}_{jk;mn}[j_t(X)]
\equiv
\langle\eta\Phi_{jk}|j_t(X)|\eta\Phi_{mn}\rangle,~~~~~\forall ~  j,k,m,n=0,1,
\end{eqnarray}
where $|\eta\rangle$ is the initial state of the system. It can be easily verified that %$\omega_t^{mn;jk}(X)=\omega_t^{jk;mn}(X^*)^*$, for all $j,k,m,n=0,1.$
\begin{eqnarray}\label{eq:master_symmetry}
\omega_t^{mn;jk}(X)=(\omega_t^{jk;mn}(X^*))^*,~~~~~\forall ~ j,k,m,n=0,1.
\end{eqnarray}

In view of equations (\ref{eq:QDSE_X}) and (\ref{eq:aug17_temp1}),  if we differentiate $\omega_t^{11;11}(X)$, we will get such expressions as $\omega_t^{11;01}(X)$, $\omega_t^{11;10}(X)$, $\omega_t^{10;11}(X)$, and $\omega_t^{01;11}(X)$. Following this logic,  in order to derive the master equation for $\omega_t^{11;11}(X)$, we have to find derivatives of $\omega_t^{jk;mn}(X)$, for all $j,k,m,n=0,1$.

\begin{theorem} \label{thm:master_Heisenberg}
The master equation in Heisenberg picture for the quantum system $G$ driven by the two-photon input field state $|\Phi_{11}\rangle$ is given by the system of differential equations
\begin{eqnarray*}
\dot{\omega}_t^{00;00}(X)&=&\omega_t^{00;00}(\mathcal{L}_{00}(X)),
\label{eq:mastereq_first}
\\
\dot{\omega}_t^{00;10}(X)&=&\omega_t^{00;10}(\mathcal{L}_{00}(X))+\xi_1(t)\omega_t^{00;00}(\mathcal{L}_{01}(X)),
\label{eq:mastereq_2nd}
\\
\dot{\omega}_t^{00;01}(X)&=&\omega_t^{00;01}(\mathcal{L}_{00}(X))+\xi_2(t)\omega_t^{00;00}(\mathcal{L}_{01}(X)),
\label{eq:mastereq_third}
\\
\dot{\omega}_t^{10;10}(X)&=&\omega_t^{10;10}(\mathcal{L}_{00}(X))+\xi_1(t)\omega_t^{10;00}(\mathcal{L}_{01}(X))+\xi_1^*(t)\omega_t^{00;10}(\mathcal{L}_{10}(X))\nonumber\\
&&+|\xi_1(t)|^2\omega_t^{00;00}(\mathcal{L}_{11}(X)),
\label{eq:mastereq_5th}
\\
\dot{\omega}_t^{10;01}(X)&=&\omega_t^{10;01}(\mathcal{L}_{00}(X))+\xi_2(t)\omega_t^{10;00}(\mathcal{L}_{01}(X))+\xi_1^*(t)\omega_t^{00;01}(\mathcal{L}_{10}(X))\nonumber\\
&&+\xi_1^*(t)\xi_2(t)\omega_t^{00;00}(\mathcal{L}_{11}(X)),
\\
\dot{\omega}_t^{01;01}(X)&=&\omega_t^{01;01}(\mathcal{L}_{00}(X))+\xi_2(t)\omega_t^{01;00}(\mathcal{L}_{01}(X))+\xi_2^*(t)\omega_t^{00;01}(\mathcal{L}_{10}(X))\nonumber\\
&&+|\xi_2(t)|^2\omega_t^{00;00}(\mathcal{L}_{11}(X)),
\\
\dot{\omega}_t^{00;11}(X)&=&\omega_t^{00;11}(\mathcal{L}_{00}(X))+\frac{1}{\sqrt{N_2}}\xi_1(t)\omega_t^{00;01}(\mathcal{L}_{01}(X))+\frac{1}{\sqrt{N_2}}\xi_2(t)\omega_t^{00;10}(\mathcal{L}_{01}(X)),
\label{eq:mastereq_4th}
\\
\dot{\omega}_t^{10;11}(X)&=&\omega_t^{10;11}(\mathcal{L}_{00}(X))+\xi_1^*(t)\omega_t^{00;11}(\mathcal{L}_{10}(X))\nonumber\\
&&+\frac{1}{\sqrt{N_2}}\xi_1(t)\omega_t^{10;01}(\mathcal{L}_{01}(X))+\frac{1}{\sqrt{N_2}}\xi_2(t)\omega_t^{10;10}(\mathcal{L}_{01}(X))\nonumber\\
&&+\frac{1}{\sqrt{N_2}}|\xi_1(t)|^2\omega_t^{00;01}(\mathcal{L}_{11}(X))+\frac{1}{\sqrt{N_2}}\xi_1^*(t)\xi_2(t)\omega_t^{00;10}(\mathcal{L}_{11}(X)),
\\
\dot{\omega}_t^{01;11}(X)&=&\omega_t^{01;11}(\mathcal{L}_{00}(X))+\xi_2^*(t)\omega_t^{00;11}(\mathcal{L}_{10}(X))\nonumber\\
&&+\frac{1}{\sqrt{N_2}}\xi_1(t)\omega_t^{01;01}(\mathcal{L}_{01}(X))+\frac{1}{\sqrt{N_2}}\xi_2(t)\omega_t^{01;10}(\mathcal{L}_{01}(X))\nonumber\\
&&+\frac{1}{\sqrt{N_2}}|\xi_2(t)|^2\omega_t^{00;10}(\mathcal{L}_{11}(X))+\frac{1}{\sqrt{N_2}}\xi_1(t)\xi_2^*(t)\omega_t^{00;01}(\mathcal{L}_{11}(X)),
\\
\dot{\omega}_t^{11;11}(X)&=&\omega_t^{11;11}(\mathcal{L}_{00}(X))+\frac{1}{\sqrt{N_2}}\xi_1(t)\omega_t^{11;01}(\mathcal{L}_{01}(X))+\frac{1}{\sqrt{N_2}}\xi_2(t)\omega_t^{11;10}(\mathcal{L}_{01}(X))
\nonumber\\
&&+\frac{1}{\sqrt{N_2}}\xi_1^*(t)\omega_t^{01;11}(\mathcal{L}_{10}(X))+\frac{1}{\sqrt{N_2}}\xi_2^*(t)\omega_t^{10;11}(\mathcal{L}_{10}(X))
\nonumber \\
&&+\frac{1}{N_2}|\xi_1(t)|^2\omega_t^{01;01}(\mathcal{L}_{11}(X))+\frac{1}{N_2}\xi_1^*(t)\xi_2(t)\omega_t^{01;10}(\mathcal{L}_{11}(X))
 \nonumber\\
&&+\frac{1}{N_2}\xi_1(t)\xi_2^*(t)\omega_t^{10;01}(\mathcal{L}_{11}(X))+\frac{1}{N_2}|\xi_2(t)|^2\omega_t^{10;10}(\mathcal{L}_{11}(X)),
\label{eq:mastereq_1111}
\end{eqnarray*}
with the initial conditions $\omega_0^{jk;mn}(X)=\langle\eta|X|\eta\rangle\langle\Phi_{jk}|\Phi_{mn}\r$ for all $j,k,m,n=0,1$.
%\begin{eqnarray*}
%\omega_0^{jk;mn}(X)=\langle\eta|X|\eta\rangle\langle\Phi_{jk}|\Phi_{mn}\rangle,~~~~~~j,k,m,n=0,1.
%\end{eqnarray*}
Moreover, the differential equations for $\omega^{10;00}_t(X)$, $\omega^{01;00}_t(X)$, $\omega^{01;10}_t(X)$, $\omega^{11;00}_t(X)$, $\omega^{11;10}_t(X)$, and $\omega^{11;01}_t(X)$ can be obtained from the above differential equations by means of the property (\ref{eq:master_symmetry}).
\end{theorem}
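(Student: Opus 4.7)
My plan is to derive each equation by sandwiching the Heisenberg QSDE~(\ref{eq:QDSE_X}) between $\langle \eta\otimes\Phi_{jk}|$ and $|\eta\otimes\Phi_{mn}\rangle$, then evaluating the four resulting matrix elements. The $dt$ piece yields the drift $\omega_t^{jk;mn}(\mathcal{L}_{00}(X))$ immediately. For the $dB(t)$ and $dB^\ast(t)$ pieces, I exploit the adaptedness of $j_t(\mathcal{L}_{01}(X))$ and $j_t(\mathcal{L}_{10}(X))$ to the past of time $t$, so that $dB(t)=b(t)\,dt$ acts directly on the ket field state and $dB^\ast(t)$ acts directly on the bra. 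The action table~(\ref{eq:aug17_temp1}) reduces $b(t)|\Phi_{mn}\rangle$ to a linear combination of vectors of strictly lower photon number, weighted by $\xi_1(t)$ and $\xi_2(t)$; the mirror identities for $\langle \Phi_{jk}| b^\ast(t)$ involve the complex conjugate profiles. For $d\Lambda(t)$ I write $d\Lambda(t)=b^\ast(t)b(t)\,dt$ and let $b$ act to the right while $b^\ast$ acts to the left, producing a quadratic combination in the profile functions that couples $\omega_t^{jk;mn}(\mathcal{L}_{11}(X))$ to lower-order expectations.

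\textbf{Case enumeration.} I would proceed from the lowest total photon number upwards. The case $(00;00)$ is trivial: the three noise terms vanish because $b(t)|0\rangle=0$, yielding~(\ref{eq:mastereq_first}). The four single-photon cases $(00;10),(00;01),(10;00),(01;00)$ have only one surviving noise term, producing~(\ref{eq:mastereq_2nd})--(\ref{eq:mastereq_third}) and their adjoints. The four cases $(10;10),(10;01),(01;10),(01;01)$ with one photon on each side activate all four noise contributions; in particular, the $d\Lambda$ piece collapses to a scalar multiple $\xi_{(\cdot)}^\ast(t)\xi_{(\cdot)}(t)\,\omega_t^{00;00}(\mathcal{L}_{11}(X))$, giving~(\ref{eq:mastereq_5th}) and its partners. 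Cases with a single $|\Phi_{11}\rangle$ on one side use the two-term decomposition of $b(t)|\Phi_{11}\rangle$ from~(\ref{eq:aug17_temp1}), so the coupling is to single-photon expectations, weighted by $1/\sqrt{N_2}$. Finally, $(11;11)$ mirrors this decomposition on both sides, producing the four cross terms weighted by $1/N_2$ that appear in~(\ref{eq:mastereq_1111}).

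\textbf{Initial conditions and the remaining six equations.} At $t=0$ we have $j_0(X)=X\otimes I$ and the initial vector factors as $|\eta\rangle\otimes|\Phi_{mn}\rangle$, yielding $\omega_0^{jk;mn}(X)=\langle\eta|X|\eta\rangle\langle\Phi_{jk}|\Phi_{mn}\rangle$. The six differential equations for $\omega^{10;00}_t,\omega^{01;00}_t,\omega^{11;00}_t,\omega^{01;10}_t,\omega^{11;10}_t,\omega^{11;01}_t$ follow by conjugating the listed equations via~(\ref{eq:master_symmetry}), using the identities $\mathcal{L}_{00}(X)^\ast=\mathcal{L}_{00}(X^\ast)$, $\mathcal{L}_{11}(X)^\ast=\mathcal{L}_{11}(X^\ast)$, and $\mathcal{L}_{01}(X)^\ast=\mathcal{L}_{10}(X^\ast)$ that are visible in~(\ref{eq:L_00})--(\ref{eq:L_11}).

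\textbf{Main obstacle.} The algebraic derivation is largely bookkeeping; the delicate point that I would want to fix rigorously is the justification for sliding a system-adapted operator $j_t(Y)$ past an infinitesimal noise increment and then letting $b(t)$ and $b^\ast(t)$ act pointwise on the field vectors $|\Phi_{jk}\rangle$ as prescribed by~(\ref{eq:aug17_temp1}). This requires checking that each $|\Phi_{mn}\rangle$ lies in the common domain (exponential or continuous tensor product domain) on which such pointwise action is meaningful, and verifying that no additional Ito correction arises beyond those already built into~(\ref{eq:QDSE_X}) through the table~(\ref{eq:ito}). Once this is in hand, the remaining work is the mechanical enumeration of the sixteen matrix-element cases and their consolidation via the symmetry~(\ref{eq:master_symmetry}).
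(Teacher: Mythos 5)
Your proposal is correct, and it is precisely the ``direct'' route that the paper mentions but does not write out: establishing the equations by means of the action table~(\ref{eq:aug17_temp1}) applied to the Heisenberg QSDE~(\ref{eq:QDSE_X}). The paper's own formal justification is different in organization: it declares Theorem~\ref{thm:master_Heisenberg} an immediate consequence of Theorem~\ref{thm:master_extended}, the master equation $\dot{\tilde{\omega}}_t(A\otimes X)=\tilde{\omega}_t(\mathcal{G}(A\otimes X))$ for the non-Markovian embedding, and then reads off each $\omega_t^{jk;mn}(X)$ via Remark~\ref{rem:master} by setting $A=|e_{jk}\rangle\langle e_{mn}|$ and dividing by $\alpha_{jk}^\ast\alpha_{mn}$ as in~(\ref{re}). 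Substantively the two arguments rest on the same ingredients --- the QSDE and the action of $dB(t)$, $dB^\ast(t)$, $d\Lambda(t)$ on the photon states, which in the embedded version appear as the relations~(\ref{eq_K_00_A})--(\ref{eq_K_11_A}) defining $\mathcal{K}_{jk}$ --- so your element-by-element derivation and the paper's route are two packagings of one computation. What the embedding buys is economy (one compact equation instead of your sixteen-case enumeration, and it is reused verbatim for the filters), while your direct derivation is more elementary, makes the coefficient pattern in~(\ref{eq:mastereq_first})--(\ref{eq:mastereq_1111}) transparent, and avoids introducing the auxiliary space and the nonzero weights $\alpha_{jk}$ altogether. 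Your verification of the initial conditions and the reduction of the remaining six equations via~(\ref{eq:master_symmetry}), using $\mathcal{L}_{00}(X)^\ast=\mathcal{L}_{00}(X^\ast)$, $\mathcal{L}_{01}(X)^\ast=\mathcal{L}_{10}(X^\ast)$, $\mathcal{L}_{11}(X)^\ast=\mathcal{L}_{11}(X^\ast)$, matches the paper; the domain/adaptedness caveat you flag is indeed the only point the paper also leaves implicit, since it takes~(\ref{eq:aug17_temp1}) as given.
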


{\it Remark }2.  The system of equations in Theorem \ref{thm:master_Heisenberg} can be established directly by means of equations (\ref{eq:QDSE_X})
and (\ref{eq:aug17_temp1}). Alternatively, they can be obtained from the system of filtering equations
 %(\ref{filter_two})--(\ref{filter_two_general_8})
by averaging over the environment  (to be discussed in Subsection \ref{subsec:filter:2photon}).  Finally, as will be pointed out in Remark 5,
%\ref{rem:master},
Theorem \ref{thm:master_Heisenberg} is an immediate consequence of Theorem \ref{thm:master_extended} in Subsection \ref{subsec:extended_system}. Thus, the  proof of Theorem \ref{thm:master_Heisenberg} is omitted.

%{\it Remark }3. There are clear patterns in equations of Theorem \ref{thm:master_Heisenberg},
%%(\ref{eq:mastereq_first})--(\ref{eq:mastereq_1111}),
%which are indicated by (\ref{eq:aug17_temp1}).

{\it Remark }3. The first equation in Theorem \ref{thm:master_Heisenberg}
%Equation (\ref{eq:mastereq_first})
is nothing else but the master equation when the input state is the vacuum state $|0\r$. Moreover,
the first, second and fourth equations in the theorem
%(\ref{eq:mastereq_first})--(\ref{eq:mastereq_2nd}), and (\ref{eq:mastereq_5th})
are the system of master equations when the input state is the single-photon state $|\Phi_{10}\r$, as derived in \cite{gough:2013,gough:2012b}.

Next, we present the master equations in Schrodinger picture. Define operators $\varrho_t^{jk;mn}$ on $\mathsf{H}_{S}$ via
\begin{eqnarray} \label{eq:varrho_omega}
\mrm{Tr}[(\varrho^{jk;mn}_t)^*X]=\omega_t^{jk;mn}(X), ~~~\forall ~ j,k,m,n=0,1.
\end{eqnarray}
Clearly, $\varrho_t^{jk;mn}$ are reduced system density operators.  Given a system operator $\varrho$ on $\mathsf{H}_{S}$, define superoperators
%\begin{eqnarray*}
%\mathcal{D}_{00}(\varrho)&\triangleq&\frac{1}{2}[L\varrho,L^*]+\frac{1}{2}[L,\varrho L^*]-\rm i[H,\varrho],\\
%\mathcal{D}_{01}(\varrho) &\triangleq& [S\varrho,L^*],\\
%\mathcal{D}_{10}(\varrho) &\triangleq& [L,\varrho S^*],\\
%\mathcal{D}_{11}(\varrho)& \triangleq& S\varrho S^*-\varrho.
%\end{eqnarray*}
\begin{eqnarray*}
&&\mathcal{D}_{00}(\varrho) \triangleq \frac{1}{2}[L\varrho,L^*]+\frac{1}{2}[L,\varrho L^*]-{\rm i}[H,\varrho],\\
&&\mathcal{D}_{01}(\varrho) \triangleq [S\varrho,L^*], ~
\mathcal{D}_{10}(\varrho) \triangleq [L,\varrho S^*], ~
\mathcal{D}_{11}(\varrho) \triangleq S\varrho S^*-\varrho.
\end{eqnarray*}

The master equation in Schrodinger picture is given in the following corollary, which is a direct consequence of Theorem \ref{thm:master_Heisenberg} and equation (\ref{eq:varrho_omega}).

\begin{corollary} \label{cor:master_Heisenberg}
The master equation in Schrodinger picture for the quantum system $G$ driven by the two-photon input field state $|\Phi_{11}\rangle$ is given by the system of differential equations
\begin{eqnarray*}{\label{master_2_example}}
\dot{\varrho}^{00;00}_t&=&\mathcal{D}_{00}(\varrho_t^{00;00}),
 \\
\dot{\varrho}^{00;10}_t&=&\mathcal{D}_{00}(\varrho_t^{00;10})+\xi_1^*(t)\mathcal{D}_{10}(\varrho_t^{00;00}),
 \\
 \dot{\varrho}^{00;01}_t&=&\mathcal{D}_{00}(\varrho_t^{00;01})+\xi_2^*(t)\mathcal{D}_{10}(\varrho_t^{00;00}),
 \\
\dot{\varrho}^{10;10}_t&=&\mathcal{D}_{00}(\varrho_t^{10;10})+\xi_1^*(t)\mathcal{D}_{10}(\varrho_t^{10;00})+\xi_1(t)\mathcal{D}_{01}(\varrho_t^{00;10})+|\xi_1(t)|^2\mathcal{D}_{11}(\varrho_t^{00;00}),
 \\
\dot{\varrho}_t^{10;01}&=&\mathcal{D}_{00}(\varrho_t^{10;01})+\xi_2^*(t)\mathcal{D}_{10}(\varrho_t^{10;00})+\xi_1(t)\mathcal{D}_{01}(\varrho_t^{00;01})+\xi_1(t)\xi_2^*(t)\mathcal{D}_{11}(\varrho_t^{00;00}),
 \\
\dot{\varrho}_t^{01;01}&=&\mathcal{D}_{00}(\varrho_t^{01;01})+\xi_2^*(t)\mathcal{D}_{10}(\varrho_t^{01;00})+\xi_2(t)\mathcal{D}_{01}(\varrho_t^{00;01})+|\xi_2(t)|^2\mathcal{D}_{11}(\varrho_t^{00;00}),
  \\
\dot{\varrho}_t^{00;11}&=&\mathcal{D}_{00}(\varrho_t^{00;11})+\frac{1}{\sqrt{N_2}}\xi_1^*(t)\mathcal{D}_{10}(\varrho_t^{00;01})+\frac{1}{\sqrt{N_2}}\xi_2^*(t)\mathcal{D}_{10}(\varrho_t^{00;10}),
\\
\dot{\varrho}_t^{10;11}&=&\mathcal{D}_{00}(\varrho_t^{10;11})+\frac{1}{\sqrt{N_2}}\xi_1^*(t)\mathcal{D}_{10}(\varrho_t^{10;01})+\frac{1}{\sqrt{N_2}}\xi_2^*(t)\mathcal{D}_{10}(\varrho_t^{10;10})+\xi_1(t)\mathcal{D}_{01}(\varrho_t^{00;11})
 \nonumber
 \\
&&+\frac{1}{\sqrt{N_2}}|\xi_1(t)|^2\mathcal{D}_{11}(\varrho_t^{00;01})+\frac{1}{\sqrt{N_2}}\xi_1(t)\xi_2^*(t)\mathcal{D}_{11}(\varrho_t^{00;10}),
\\
\dot{\varrho}_t^{01;11}&=&\mathcal{D}_{00}(\varrho_t^{01;11})+\frac{1}{\sqrt{N_2}}\xi_1^*(t)\mathcal{D}_{10}(\varrho_t^{01;01})+\frac{1}{\sqrt{N_2}}\xi_2^*(t)\mathcal{D}_{10}(\varrho_t^{01;10})+\xi_2(t)\mathcal{D}_{01}(\varrho_t^{00;11})
 \nonumber
\\
&&+\frac{1}{\sqrt{N_2}}|\xi_2(t)|^2\mathcal{D}_{11}(\varrho_t^{00;10})+\frac{1}{\sqrt{N_2}}\xi_1^*(t)\xi_2(t)\mathcal{D}_{11}(\varrho_t^{00;01}),
\end{eqnarray*}
\begin{eqnarray*}
\dot{\varrho}_t^{11;11}&=&\mathcal{D}_{00}(\varrho_t^{11;11})+\frac{1}{\sqrt{N_2}}\xi_1^*(t)\mathcal{D}_{10}(\varrho_t^{11;01})+\frac{1}{\sqrt{N_2}}\xi_2^*(t)\mathcal{D}_{10}(\varrho_t^{11;10})
\nonumber
\\
&&+\frac{1}{\sqrt{N_2}}\xi_1(t)\mathcal{D}_{01}(\varrho_t^{01;11})+\frac{1}{\sqrt{N_2}}\xi_2(t)\mathcal{D}_{01}(\varrho_t^{10;11})+\frac{1}{N_2}|\xi_1(t)|^2\mathcal{D}_{11}(\varrho_t^{01;01})
\nonumber\\
&&+\frac{1}{N_2}\xi_1(t)\xi_2^*(t)\mathcal{D}_{11}(\varrho_t^{01;10})+\frac{1}{N_2}\xi_1^*(t)\xi_2\mathcal{D}_{11}(\varrho_t^{10;01})+\frac{1}{N_2}|\xi_2(t)|^2\mathcal{D}_{11}(\varrho_t^{10;10}),
\label{eq:varpho_22}
\end{eqnarray*}
and
\begin{eqnarray}
&&\varrho^{10;00}_t = (\varrho^{00;10}_t)^\ast, ~~~ \varrho^{01;00}_t = (\varrho^{00;01}_t)^\ast, ~~~\varrho^{01;10}_t = (\varrho^{10;01}_t)^\ast,
\nonumber
\\
&&\varrho^{11;00}_t = (\varrho^{00;11}_t)^\ast, ~~~ \varrho^{11;10}_t = (\varrho^{10;11}_t)^\ast, ~~~ \varrho^{11;01}_t = (\varrho^{01;11}_t)^\ast,\nonumber
\end{eqnarray}
%\[
%\varrho^{jk;mn}_t = (\varrho^{mn;jk}_t)^\ast,
%\]
with the initial conditions
\begin{eqnarray}
\varrho^{jk;mn}_0
&=&
\langle\Phi_{mn}|\Phi_{jk}\rangle|\eta\rangle\langle\eta|, ~~~~\forall ~ j,k,m,n=0,1.
\label{eq_varrho_0}
\end{eqnarray}
\end{corollary}

%{\it Remark }5. Corollary \ref{cor:master_Heisenberg} is a direct consequence of Theorem \ref{thm:master_Heisenberg} and equation (\ref{eq:varrho_omega}).

{\it Remark }4. Restricted to the 2-photon Fock state case, i.e., $\xi_1 \equiv \xi_2$, the above master equations reduce to the master equation (41) in \cite{baragiola:2012}, while the initial conditions (\ref{eq_varrho_0}) reduce to (42)--(43) in \cite{baragiola:2012}. It is clear that the general 2-photon case is much more complicated than the 2-photon Fock state case.

%%%%%%%%%%%%%%%%%%%%%%%%%%%%%%%
%%%%%%%%%%%%%%%%%%%%%%%%%%%%%%%
\begin{figure}
\begin{center}
\includegraphics[width=3.5in]{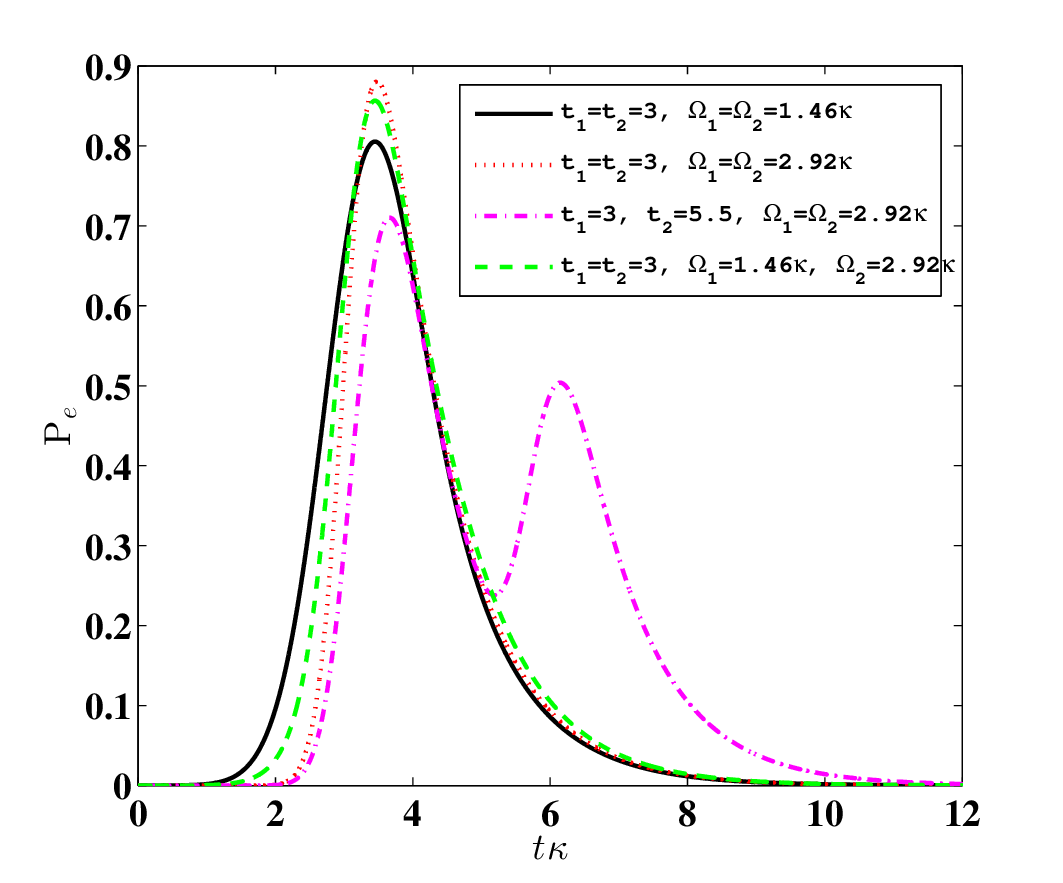}
\caption{Excitation probability for the model in Example 2.}
\label{fig_ex1}
\end{center}
\end{figure}

{\it Example }2.  We illustrate the two-photon master equations derived above by means of  the model (\ref{eq:nov08_1a})--(\ref{eq:nov08_1b}) studied in Example 1.  Let the two-level system $G$ be driven by a wavepacket prepared in the two-photon state $|\Phi_{11}\r$, as defined in equation (\ref{eq:2_photon_state}). We show that, in this two-photon case, interesting phenomena can be observed which are absent from both the single-photon case and the two-photon Fock state case, as previously studied in \cite{gough:2013}, \cite{gough:2012b}, \cite{baragiola:2012}, \cite{GEP+98}-\cite{RSF10}. Here, we assume $\kappa=1$ and the two-level system is initialized in the ground state $|g\r$.  For the two-photon input state $|\Phi_{11}\r$, we use Gaussian pulse shapes. Specifically, we choose
\begin{eqnarray}
\xi_i(t)=\Bigl(\frac{\Omega_i^2}{2\pi}\Bigr)^{1/4}\exp\Bigl(-\frac{\Omega_i^2}{4}(t-t_i)^2\Bigr),~~~i=1,2.
\label{eq:Gaussian_pulse}
\end{eqnarray}
For the single-photon state $|\Phi_{10}\r$ or $|\Phi_{01}\r$  defined in equation (\ref{eq:Phi_1}), $t_i$ can be interpreted as the peak arrival time of the photon, and $\Omega_i$ is the frequency  bandwidth. More discussions on the physical aspect of the model can be found in, e.g., \cite{baragiola:2012}, \cite{stobinska:2009}-\cite{RSF10}. Let $\varrho_t^{11;11}$ be the solution to the master equations in Corollary \ref{cor:master_Heisenberg}. Then the unconditional excitation probability (the probability of finding the two-level system in the excited state $|e\r$) is $\mathbb{P}_e(t) \triangleq \mrm{Tr}[\varrho_t^{11;11}|e\rangle\langle e|]$.
%\begin{eqnarray*}
%\mathbb{P}_e(t) \triangleq \mrm{Tr}[\varrho_t^{11;11}|e\rangle\langle e|].
%\end{eqnarray*}

%time and frequency length of the single-photon pulse for optimal excitation of the two-level atom; the 2-photon case ?????

The excitation of a two-level system by a light field in a single-photon state has been studied extensively, e.g., \cite{baragiola:2012,stobinska:2009,wang:2011,RSF10}. If the two-level system $G$ is driven by $|\Phi_{10}\r$ with the Gaussian pulse shape in equation (\ref{eq:Gaussian_pulse}), it is found in \cite{stobinska:2009} that the largest value, denoted $\mathbb{P}_e^{\mrm{max}}$, of the excitation probability $\mathbb{P}_e(t)$ is around 0.8, which is achieved when $t_1=3$ and $\Omega_1=1.46\kappa$ (the optimal bandwidth). This value has been confirmed in \cite{gough:2012b,baragiola:2012,wang:2011}. 

The situation of the excitation of a two-level atom by a 2-photon state is much more complicated than the single-photon case. A two-level atom is a nonlinear system, it can at most absorb or emit one photon at a given time; and the absorption of one photon by the two-level atom may have drastic effect of the atom's response to the second coming photon. This nonlinear photon-photon interaction meditated by a two-level system gives rise to this interesting phenomenon. In what follows we study this by means of the master equations derived above. We have the following observations

\begin{itemize}
\item If the two peak arrival times $t_1$ and $t_2$ are very far away from each other, then the two photon interacts the two-level system one-by-one, thus the meditated photon-photon interaction does not happen. This is similar to the single-photon case and the maximal excitation probability $\mathbb{P}_e(t)=0.805$.

\item When $t_1=t_2=3$ and $\Omega_1=\Omega_2=1.46\kappa$, it can be seen from the black solid line in figure \ref{fig_ex1} that $\mathbb{P}_e^{\mrm{max}}=0.805$, which is consistent  with  figure 2(a) in \cite{baragiola:2012}.  In this case, it appears that one photon is absorbed by the two-level system while the other one just goes through without interaction. Therefore, this case is similar to the one-photon case. 

\item when $t_1=t_2=3$, and $\Omega_1= \Omega_2=2.92\kappa$, it can be seen from the red dotted line in figure \ref{fig_ex1} that  $\mathbb{P}_e^{\mrm{max}}=0.8796$.  This cannot occur in the single-photon state case. It is also interesting to notice that in this case the optimal bandwidth (equivalently, the ratio of the bandwidth and the decay rate) is exactly twice of that for the single-photon case. To the best knowledge of the author, this has never been reported in the literature.

\item When $t_1=t_2=3$,  $\Omega_1=1.46\kappa$, and $\Omega_2=2.92\kappa$, as can be seen from the green dashed line in figure \ref{fig_ex1}, there is one peak whose value  is approximately $0.8556$, which is still bigger than $0.805$ for the optimal single-photon case. That is, the nonlinear photon-photon interaction meditated by the two-level system still exists.

\item When $\Omega_1= \Omega_2=2.92\kappa$, and $t_1=3, t_2=5.5$,  as shown by the magenta dash dotted line in figure \ref{fig_ex1},  the value of the first peak is around $0.7102$ which is even than $0.805$ for the optimal single-photon case (the black solid line in figure \ref{fig_ex1}), while the value of the second peak is only around 0.5. Moreover $\mathbb{P}_e(t)$ does not drop to zero after the first  peak. This means that the excited two-level system is being affected by the other photon in the field in its decay process.  The authors are not aware of existing literature that reports such interesting nonlinear atom-photon interaction phenomena to the mathematical rigour presented here.

\end{itemize}

We will return to this example later and study its two-photon filters. We will show that there are many quantum trajectories whose largest excitation probability can be very close to 1 in all the above senarios.

%{\it Remark }7. Example 2 shows that the master equations presented in Theorem \ref{thm:master_Heisenberg} are able to reveal complex interaction between a two-level system and a wavepacket in a general two-photon state.

%%%%%%%%%%%%%%%%%%%%%%%%%%%%%%%
%%%%%%%%%%%%%%%%%%%%%%%%%%%%%%%
\subsection{Master equations for an extended system} \label{subsec:extended_system}
In this subsection we define an ancilla, then derive the master equations for the extended system: system plus field plus ancilla.

Let
%\begin{eqnarray}
%|e_{11}\rangle=\left[
%\begin{array}{c}
%  1 \\
%  0 \\
%  0 \\
%  0
%\end{array}
%\right],~~|e_{10}\rangle=\left[\begin{array}{c}
%  0 \\
%  1 \\
%  0 \\
%  0
%\end{array}
%\right],~~|e_{01}\rangle=\left[\begin{array}{c}
%  0 \\
%  0 \\
%  1 \\
%  0
%\end{array}
%\right],~~|e_{00}\rangle=\left[\begin{array}{c}
%  0 \\
%  0 \\
%  0 \\
%  1
%\end{array}
%\right]
%\label{eq:e_jk}
%\end{eqnarray}
\begin{equation} \label{eq:e_jk}
|e_{11}\rangle = |e\rangle \otimes |e\rangle,  ~ |e_{10}\rangle = |e\rangle \otimes |g\rangle, ~ |e_{01}\rangle = |g\rangle \otimes |e\rangle, ~ |e_{00}\rangle = |g\rangle \otimes |g\rangle
\end{equation}
be an orthonormal basis for $\mbb{C}^4$.
Define a state $ |\Sigma\rangle \in \mbb{C}^4\otimes\mathsf{H}_{S}\otimes\mathsf{H}_{F}$ to be
%\begin{eqnarray} \label{eq:sigma_2_photon}
%|\Sigma\rangle
%&\triangleq&
% \alpha_{11}|e_{11}\eta\Phi_{11}\rangle+\alpha_{10}|e_{10}\eta\Phi_{10}\rangle+\alpha_{01}|e_{01}\eta\Phi_{01}\rangle+\alpha_{00}|e_{00}\eta\Phi_{00}\rangle\nonumber\\
%&=&\sum_{j,k=0}^1 \alpha_{jk}|e_{jk}\eta\Phi_{jk}\r,
%\end{eqnarray}
\begin{equation}  \label{eq:sigma_2_photon}
|\Sigma\rangle \triangleq \alpha_{11}|e_{11}\eta\Phi_{11}\rangle+\alpha_{10}|e_{10}\eta\Phi_{10}\rangle+\alpha_{01}|e_{01}\eta\Phi_{01}\rangle+\alpha_{00}|e_{00}\eta\Phi_{00}\rangle,
\end{equation}
where $\alpha_{11},~\alpha_{10},~\alpha_{01}$ and $\alpha_{00}$ are nonzero complex numbers satisfying the normalization condition $\sum_{j,k=0}^1|\alpha_{jk}|^2=1$.
%\[
%|\Sigma\rangle \triangleq  \sum_{j,k=0,1} \alpha_{jk}|e_{jk}\eta \Phi_{jk}\rangle
%\]

Now we have an extended system defined on the tensor product space $\mbb{C}^4\otimes\mathsf{H}_{S}\otimes\mathsf{H}_{F}$.  We assume that operators defined on $\mbb{C}^4$ do not evolve temporally. More specifically, for an arbitrary $4\times 4$ complex matrix $A$ on $\mbb{C}^4$ and $X$ on $\mathsf{H}_{S}$, the temporal evolution of $A\otimes X\otimes I$ is governed by $(I\otimes U^\ast(t)) (A\otimes X\otimes I)(I\otimes U(t)) = A\otimes j_t(X)$. The adoption of the auxiliary space $\mbb{C}^4$ allows us to define conditional expectations and derive their filtering equations on the extended space $\mbb{C}^4\otimes\mathsf{H}_{S}\otimes\mathsf{H}_{F}$ with respect to the superposition state $|\Sigma\rangle$. Such conditional expectations in terms of the orthonormal vectors $|e_{jk}\r$ in equation (\ref{eq:e_jk}) help us find the conditional expectations for the original quantum system $G$ driven by the 2-photon state $|\Phi_{11}\r$, cf. (\ref{exp_con_2}) in the sequel. Therefore, careful manipulation on the quantum filters for the extended system enables us to derive 2-photon filters for the original system $G$. This is the so-called non-Markovian embedding method, which has already been used in \cite{gough:2013} for the problem of single-photon filtering.

For an arbitrary $4\times 4$ complex matrix $A$,  define superoperators
\begin{eqnarray}
&&\mathcal{K}_{00}(A)
\triangleq
A,
\label{eq_K_00}
\\
&&\mathcal{K}_{01}(A)
\triangleq
\frac{\alpha_{11}}{\alpha_{01}\sqrt{N_2}}\xi_1(t)A|e_{11}\rangle\langle e_{01}|+\frac{\alpha_{11}}{\alpha_{10}\sqrt{N_2}}\xi_2(t)A|e_{11}\rangle\langle e_{10}|
\label{eq_K_01}
\\
&&\hspace{4.5em}+\frac{\alpha_{10}}{\alpha_{00}}\xi_1(t)A|e_{10}\rangle\langle e_{00}|+\frac{\alpha_{01}}{\alpha_{00}}\xi_2(t)A|e_{01}\rangle\langle e_{00}|,\nonumber
\\
&&\mathcal{K}_{10}(A)
\triangleq
\mathcal{K}_{01}(A^\ast)^*,
\label{eq_K_10}
\\
&&\mathcal{K}_{11}(A)
\triangleq
\mathcal{K}_{10}(\mathcal{K}_{01}(A)).
\label{eq_K_11}
\end{eqnarray}
For these superoperators the following relations hold:
%\begin{eqnarray}
%&&\mathbb{E}_{\Sigma}[A{\rm d} t]
%=
% \mathbb{E}_\Sigma[\mathcal{K}_{00}(A)]{\rm d} t,
%\label{eq_K_00_A}
%\\
%&&\mathbb{E}_{\Sigma}[A\otimes {\rm d} B(t)]
%=
%\mathbb{E}_\Sigma[\mathcal{K}_{01}(A)]{\rm d} t,
%\label{eq_K_01_A}
%\\
%&&\mathbb{E}_{\Sigma}[A\otimes {\rm d} B^*(t)]
%=
%\mathbb{E}_\Sigma[\mathcal{K}_{10}(A)]{\rm d} t,
% \label{eq_K_10_A}
%\\
%&&\mathbb{E}_{\Sigma}[A\otimes {\rm d} \Lambda(t)]
%=
% \mathbb{E}_\Sigma[\mathcal{K}_{11}(A)]{\rm d} t.
%\label{eq_K_11_A}
%\end{eqnarray}
\begin{eqnarray}
&&\mathbb{E}_{\Sigma}[A{\rm d} t]
=
 \mathbb{E}_\Sigma[\mathcal{K}_{00}(A)]{\rm d} t, ~
\mathbb{E}_{\Sigma}[A\otimes {\rm d} B(t)]
=
\mathbb{E}_\Sigma[\mathcal{K}_{01}(A)]{\rm d} t,
\label{eq_K_00_A}
\\
&&\mathbb{E}_{\Sigma}[A\otimes {\rm d} B^*(t)]
=
\mathbb{E}_\Sigma[\mathcal{K}_{10}(A)]{\rm d} t, ~
\mathbb{E}_{\Sigma}[A\otimes {\rm d} \Lambda(t)]
=
 \mathbb{E}_\Sigma[\mathcal{K}_{11}(A)]{\rm d} t.
\label{eq_K_11_A}
\end{eqnarray}

The expectation of $A\otimes j_t(X)$ with respect to the superposition state $|\Sigma\rangle$ is defined by $\tilde{\omega}_t(A\otimes X)\triangleq\mathbb{E}_\Sigma[A\otimes j_t(X)] \equiv \l \Sigma |  A\otimes j_t(X) |\Sigma \r$.
%\begin{eqnarray*}\label{ex1}
%\tilde{\omega}_t(A\otimes X)\triangleq\mathbb{E}_\Sigma[A\otimes j_t(X)] \equiv \l \Sigma |  A\otimes j_t(X) |\Sigma \r.
%\end{eqnarray*}
This expectation is normalized, that is, $\tilde{\omega}_t(I\otimes I)=1$.

\begin{theorem}\label{thm:master_extended}
The temporal evolution of the expectation $\tilde{\omega}_t(A\otimes X)$ is governed by the following master equation
\begin{eqnarray}\label{zhu1}
\dot{\tilde{\omega}}_t(A\otimes X)=\tilde{\omega}_t(\mathcal{G}(A\otimes X)),
\end{eqnarray}
where the superoperator $\mathcal{G}(A\otimes X)$ is defined as
\begin{eqnarray} \label{eq:G}
\mathcal{G}(A\otimes X) \triangleq \sum_{j,k=0}^1\mathcal{K}_{jk}(A)\otimes \mathcal{L}_{jk}(X).
\end{eqnarray}
\end{theorem}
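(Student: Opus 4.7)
The plan is to apply quantum It\^o calculus to the extended operator $A \otimes j_t(X)$ and then use the adaptedness of $j_t(\cdot)$ together with the conversion identities (\ref{eq_K_00_A})--(\ref{eq_K_11_A}) to recast every field increment as an ordinary time differential. Since $A$ on $\mbb{C}^4$ does not evolve in time, $d(A \otimes j_t(X)) = A \otimes dj_t(X)$, and substituting the Heisenberg QSDE (\ref{eq:QDSE_X}) produces four terms of the form $(A \otimes j_t(\mathcal{L}_{jk}(X)))\,dF_{jk}(t)$, where I write $dF_{00}(t)=dt$, $dF_{01}(t)=dB(t)$, $dF_{10}(t)=dB^*(t)$, $dF_{11}(t)=d\Lambda(t)$ for brevity.

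Taking $\mbb{E}_\Sigma$ on both sides, the proof reduces to establishing, for each $(j,k)\in\{0,1\}^2$, the ``adapted'' version of (\ref{eq_K_00_A})--(\ref{eq_K_11_A}),
\begin{equation*}
\mbb{E}_\Sigma[(A \otimes j_t(Y))\,dF_{jk}(t)] = \mbb{E}_\Sigma[\mathcal{K}_{jk}(A) \otimes j_t(Y)]\,dt,
\end{equation*}
with $Y = \mathcal{L}_{jk}(X)$. I would prove this by adaptedness: $j_t(Y)$ acts trivially on the Fock factor over $[t,t+dt)$ and therefore commutes with the forward increment $dF_{jk}(t)$, so one may let $dF_{jk}(t)$ act on $|\Sigma\rangle$ first. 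For $dF_{01}=dB(t)$, the formulas (\ref{eq:aug17_temp1}) show precisely how $dB(t)$ maps each $|\Phi_{jk}\rangle$ onto a scaled copy of another basis photon state, with the coefficients $\xi_1(t),\xi_2(t),\xi_1(t)/\sqrt{N_2},\xi_2(t)/\sqrt{N_2}$ matching the four terms in the definition (\ref{eq_K_01}) of $\mathcal{K}_{01}$ (the $\alpha_{jk}$-ratios arise from renormalizing the summation carried out by $\mbb{E}_\Sigma$). The $dF_{10}=dB^*(t)$ case then follows by taking the adjoint, in agreement with (\ref{eq_K_10}). Summing the four contributions collapses to $\tilde{\omega}_t(\mathcal{G}(A \otimes X))\,dt$ with $\mathcal{G}$ as in (\ref{eq:G}), yielding (\ref{zhu1}).

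The main obstacle will be the gauge term $dF_{11}=d\Lambda(t)$, whose action on $|\Phi_{jk}\rangle$ is not listed in (\ref{eq:aug17_temp1}) and generically produces states outside the span of the four basis photon states (for instance $b^*(t)B^*(\xi_2)|0\rangle$ is not a $|\Phi_{jk}\rangle$). My plan here is to compute directly using $\Lambda(t) = \int_0^t b^*(s)b(s)\,ds$ and the canonical commutation relation $[b(t),B^*(\xi)]=\xi(t)$: first evaluate $b(t)|\Phi_{jk}\rangle$, which falls back into the basis with the same coefficients that appeared for $dB(t)$, then pair the remaining $b^*(t)$ against $\langle\Phi_{mn}|$ using the conjugate formulas. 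The resulting bilinear form factors precisely through $\mathcal{K}_{11}=\mathcal{K}_{10}\circ\mathcal{K}_{01}$: the annihilation step contributes $\xi_a(t)$-factors matching $\mathcal{K}_{01}$ and the creation step (from the bra side) contributes the conjugate $\xi_b^*(t)$-factors matching $\mathcal{K}_{10}$. A sanity check with $A=|e_{11}\rangle\langle e_{11}|$ gives $\mbb{E}_\Sigma[\mathcal{K}_{11}(A)] = \frac{|\alpha_{11}|^2}{N_2}\bigl(|\xi_1(t)|^2+|\xi_2(t)|^2+2\,\mrm{Re}[\xi_1^*(t)\xi_2(t)\langle\xi_2|\xi_1\rangle]\bigr)$, which equals $|\alpha_{11}|^2\|b(t)|\Phi_{11}\rangle\|^2=\mbb{E}_\Sigma[A\otimes d\Lambda(t)]/dt$, confirming that the composition structure is the correct one and that (\ref{zhu1}) holds.
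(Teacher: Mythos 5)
Your proposal is correct and follows essentially the same route as the paper's proof: differentiate $A\otimes j_t(X)$ via the Heisenberg QSDE (\ref{eq:QDSE_X}) and convert each field-increment expectation into a $\mathcal{K}_{jk}$ term, exactly as in Eqs. (\ref{eq_K_00_A})--(\ref{eq_K_11_A}). The only difference is that you also verify those conversion identities (including the gauge term and the adapted factor $j_t(\mathcal{L}_{jk}(X))$), which the paper states without proof before invoking them.
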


{\em Proof}.
By equations (\ref{eq:QDSE_X}) and (\ref{eq_K_00_A})--(\ref{eq_K_11_A}), we have
\begin{eqnarray}
{\rm d}\tilde{\omega}_t(A\otimes X)
%&=&
%\mathbb{E}_\Sigma\bigl[A\otimes {\rm d} j_t(X)\bigr]
%\nonumber
%\\
&=&
\mbb{E}_\Sigma\bigl[A\otimes j_t(\mathcal{L}_{00}(X)){\rm d} t\bigr]+\mathbb{E}_\Sigma\bigl[A\otimes j_t(\mathcal{L}_{01}(X)){\rm d} B(t)\bigr]
\nonumber
\\
&&+\mathbb{E}_\Sigma\bigl[A\otimes j_t(\mathcal{L}_{10}(X)){\rm d} B^*(t)\bigr]+\mathbb{E}_\Sigma\bigl[A\otimes j_t(\mathcal{L}_{11}(X)){\rm d} \Lambda(t)\bigr]
\nonumber
%\\
%&=&
%\mathbb{E}_\Sigma\bigl[\mathcal{K}_{00}(A)\otimes j_t(\mathcal{L}_{00}(X))\bigr]{\rm d} t+\mathbb{E}_\Sigma\bigl[\mathcal{K}_{01}(A)\otimes j_t(\mathcal{L}_{01}(X))\bigr]{\rm d} t
%\nonumber
%\\
%&&+\mathbb{E}_\Sigma\bigl[\mathcal{K}_{10}(A)\otimes j_t(\mathcal{L}_{10}(X))\bigr]{\rm d} t+\mathbb{E}_\Sigma\bigl[\mathcal{K}_{11}(A)\otimes j_t(\mathcal{L}_{11}(X))\bigr]{\rm d} t
%\nonumber
\\
&=&\tilde{\omega}_t(\mathcal{G}(A\otimes X)){\rm d} t.\nonumber\qquad\endproof
\end{eqnarray}

{\it Remark }5.
It can be easily verified that the expectations $\omega_t^{jk;mn}(X)$, defined in equation (\ref{om}),  are scaled components of $\tilde{\omega}_t(A\otimes X)$, that is,
\begin{eqnarray*}\label{re}
I\otimes \omega_t^{jk;mn}(X)=\frac{\tilde{\omega}_t(|e_{jk}\rangle\langle e_{mn}|\otimes X)}{\alpha^*_{jk}\alpha_{mn}}, ~~~~\forall ~ j,k,m,n=0,1.
\end{eqnarray*}
As a result, the system of master equations for $\omega_t^{jk;mn}(X)$ in Theorem \ref{thm:master_Heisenberg} can be derived from equation (\ref{zhu1}) by setting $A=|e_{jk}\rangle\langle e_{mn}|$ with $j,k,m,n=0,1$.

{\it Remark }6.
 Note that $\omega_t^{jk;mn}(X)$ can be alternatively re-written as
\begin{eqnarray*}
I\otimes\omega_t^{jk;mn}(X)\tilde{\omega}_t(|e_{11}\rangle\langle e_{11}|\otimes I)=\frac{|\alpha_{11}|^2}{\alpha^*_{jk}\alpha_{mn}}\tilde{\omega}_t(|e_{jk}\rangle\langle e_{mn}|\otimes X), ~~\forall ~ j,k,m,n=0,1.
\end{eqnarray*}
Interestingly,  a similar relation also holds in the filtering problem, cf. (\ref{exp_con_2}).

%%%%%%%%%%%%%%%%%%%%%%%%%%%%%%%
%%%%%%%%%%%%%%%%%%%%%%%%%%%%%%%
\subsection{Quantum filter for the extended system} \label{subsec:filter:extended}

In this subsection, we consider the homodyne detection and present the quantum filter for the extended system as introduced in the previous subsection.

Define the quantum conditional expectation by
\begin{eqnarray}\label{def_con}
\tilde{\pi}_t(A\otimes X)\triangleq\mathbb{E}_\Sigma[A\otimes j_t(X)|I\otimes\mathscr{Y}(t)].
\end{eqnarray}

\begin{theorem}\label{thm:filter_extended}
In the case of homodyne detection, the filtering equation for the conditional expectation $\tilde{\pi}_t(A\otimes X)$ for the extended system is
\begin{eqnarray}\label{conditional}
{\rm d}\tilde{\pi}_t(A\otimes X)=\tilde{\pi}_t(\mathcal{G}(A\otimes X)){\rm d} t+\tilde{\mathcal{H}}_t(A\otimes X){\rm d}\tilde{W}(t),
\end{eqnarray}
in which the superoperator $\mathcal{G}(A\otimes X)$ is that defined in equation (\ref{eq:G}) and
\begin{eqnarray*}
\tilde{\mathcal{H}}_t(A\otimes X) =\tilde{M}_t(A\otimes X) -\tilde{\pi}_t(A\otimes X)\tilde{M}_t(I\otimes I),
\label{eq:tilde_H}
\end{eqnarray*}
with
%\begin{equation} \label{eq:tilde_M}
%\tilde{M}_t(A\otimes X) &\triangleq&  \tilde{\pi}_t(\mathcal{K}_{00}(A)\otimes (XL+L^*X))+\tilde{\pi}_t(\mathcal{K}_{01}(A)\otimes XS)\nonumber\\
%&&+\tilde{\pi}_t(\mathcal{K}_{10}(A)\otimes S^*X).
%\end{equation}
\begin{equation}\label{eq:tilde_M}
\tilde{M}_t(A\otimes X)\triangleq\tilde{\pi}_t(\mathcal{K}_{00}(A)\otimes (XL+L^*X)+\mathcal{K}_{01}(A)\otimes XS+\mathcal{K}_{10}(A)\otimes S^*X).
\end{equation}
Moreover, the stochastic process $\tilde{W}(t)$, which satisfies the following Ito equation
\begin{equation}
{\rm d}\tilde{W}(t)=I\otimes {\rm d} Y(t)-\tilde{M}_t(I\otimes I){\rm d} t,\nonumber
\end{equation}
 is  a Wiener process with respect to the state $|\Sigma\rangle$.
\end{theorem}

%\noindent {\bf Proof.~}
{\em Proof}. We use the characteristic function method by postulating the filter to be of the form
\begin{eqnarray} \label{eq:aug10_extended}
{\rm d}\tilde{\pi}_t(A\otimes X)=\tilde{\mathcal{F}}_t(A\otimes X){\rm d} t+\tilde{\mathcal{H}}_t(A\otimes X)I\otimes {\rm d} Y(t),
\end{eqnarray}
where $\tilde{\mathcal{F}}_t(A\otimes X)$ and $\tilde{\mathcal{H}}_t(A\otimes X)$ are to be determined.
For an arbitrary function $f\in L_2(\mbb{R}^+,\mbb{C})$,  define a random process $c_f(t)$ via
$c_f(t) \triangleq {\rm e}^{\int_0^tf(s){\rm d} Y(s)-\frac{1}{2}\int_0^t f^2(s){\rm d}s}$.
%\begin{eqnarray*}
%c_f(t) \triangleq \rm e^{\int_0^tf(s){\rm d} Y(s)-\frac{1}{2}\int_0^t f^2(s)ds}.
%\end{eqnarray*}
Clearly $c_f(0)=1$. Moreover, $c_f(t)$ satisfies ${\rm d}c_f(t)=f(t)c_f(t){\rm d} Y(t)$. Thus $I\otimes c_f(t)$ is adapted to $I\otimes\mathscr{Y}(t)$. By a property of conditional expectations, we have
\begin{eqnarray}\label{eq:aug27_1}
\mathbb{E}_\Sigma[(A\otimes j_t(X))(I\otimes c_f(t))]=\mathbb{E}_\Sigma[\tilde{\pi}_t(A\otimes X)(I\otimes c_f(t))].
\end{eqnarray}
Differentiating both sides of equation (\ref{eq:aug27_1}) and by means of properties of conditional expectations, we find
\begin{eqnarray*}
{\rm d}\mathbb{E}_{\Sigma}[A\otimes j_t(X)c_f(t)]&=&\mathbb{E}_{\Sigma}\biggl[\bigl(I\otimes c_f(t)\bigr)\tilde{\pi}_t(\mathcal{G}(A\otimes X))+\bigl(I\otimes f(t)c_f(t)\bigr)\tilde{M}_t(A\otimes X)\biggr]{\rm d} t,
\end{eqnarray*}
and
\begin{eqnarray}
{\rm d}\mathbb{E}_{\Sigma}[\tilde{\pi}_t(A\otimes X)(I\otimes c_f(t))]&=&\mathbb{E}_\Sigma\biggl[\bigl(I\otimes c_f(t)\bigr)\bigl\{\tilde{\mathcal{F}}_t(A\otimes X)+\tilde{\mathcal{H}}_t(A\otimes X)\tilde{M}_t(I\otimes I)\bigr\}\nonumber\\
&&+\bigl(I\otimes f(t)c_f(t)\bigr)\bigl\{\tilde{\pi}_t(A\otimes X)\tilde{M}_t(I\otimes I)+\tilde{\mathcal{H}}_t(A\otimes X)\bigr\}\biggr]{\rm d} t.
\nonumber
\end{eqnarray}
Comparing the coefficients of $c_f(t)$ and $f(t)c_f(t)$ respectively, we find the exact forms of $\tilde{\mathcal{F}}_t(A\otimes X)$ and $\tilde{\mathcal{H}}_t(A\otimes X)$. Putting them back into (\ref{eq:aug10_extended}) yields the filter (\ref{conditional}).

We now prove the martingale property $\mathbb{E}_{\Sigma}[\tilde{W}(t)-\tilde{W}(s)|I\otimes \mathscr{Y}(s)]=0$ for all $0\leq s\leq t$. This is equivalent to proving that $\mathbb{E}_\Sigma[(\tilde{W}(t)-\tilde{W}(s))(I\otimes K)]=0$ for all $K\in\mathscr{Y}(s), 0\leq s\leq t$. Obviously,
\begin{eqnarray*}
&&\mathbb{E}_\Sigma[(\tilde{W}(t)-\tilde{W}(s))(I\otimes K)]
\nonumber
\\
&=&
\mathbb{E}_\Sigma[I\otimes (Y(t)-Y(s))(I\otimes K)]-\mathbb{E}_\Sigma\bigl[\int_s^t\tilde{M}_r(I\otimes I){\rm d}r (I\otimes K)\bigr]
\nonumber
\\
&=&
\mathbb{E}_\Sigma[I\otimes (Y(t)-Y(s))(I\otimes K)]\nonumber\\
&&-\mathbb{E}_\Sigma\bigl[ \int_s^t{\tilde{\pi}_r\bigl(\mathcal{K}_{00}(I)\otimes (L+L^*)+\mathcal{K}_{01}(I)\otimes S+\mathcal{K}_{10}(I)\otimes S^*\bigr)}{\rm d}r(I\otimes K)\bigr]
\nonumber
\\
&=&
0.
\end{eqnarray*}
Finally, since ${\rm d}\tilde{W}(t){\rm d}\tilde{W}(t)={\rm d} t$, Levy's Theorem implies that $\tilde{W}(t)$ is a  Wiener process.
\qquad\endproof

{\it Remark }7.
Due to the martingale property of the innovations process $\tilde{W}(t)$, if we take the expected value of (\ref{conditional}), we can recover the master equation (\ref{zhu1}).

%{\it Remark }11.
%Theorem \ref{thm:filter_extended}  presents the quantum filter for the extended system on the tensor space $\mbb{C}^4\otimes\mathsf{H}_{S}\otimes\mathsf{H}_{F}$. In the next subsection, by means of two key observations, equations (\ref{exp_con_2}) and (\ref{def_filter_2}), the filtering equations for the original system $G$ driven by the two-photon state, i.e., Theorem 3.16, will be derived from Theorem \ref{thm:filter_extended}.

%%%%%%%%%%%%%%%%%%%%%%%%%%%%%%%
%%%%%%%%%%%%%%%%%%%%%%%%%%%%%%%
\subsection{Two-photon quantum filter} \label{subsec:filter:2photon}

In this subsection, we derive the quantum filter for the original quantum system $G$ driven by the  two-photon state $|\Phi_{11}\r$ defined in equation (\ref{eq:2_photon_state}).

Define implicitly the conditional expectations $\pi_t^{jk;mn}(X)$, $j,k,m,n=0,1$, for the original system $G$ via
\begin{eqnarray}\label{exp_con_2}
(I\otimes \pi_t^{jk;mn}(X))\tilde{\pi}_t(|e_{11}\rangle\langle e_{11}|\otimes I)=\frac{|\alpha_{11}|^2}{\alpha_{jk}^*\alpha_{mn}}\tilde{\pi}_t(|e_{jk}\rangle\langle e_{mn}|\otimes X),
\end{eqnarray}
where $\tilde{\pi}_t(A\otimes X)$ is the conditional expectation for the extended system, as defined in equation (\ref{def_con}). Clearly, $\pi_0^{jk;mn}(X) = \l\eta |X|\eta\r \l \Phi_{jk}|\Phi_{mn} \r$, and
\begin{eqnarray*}
\pi_t^{mn;jk}(X)=(\pi_t^{jk;mn}(X^*))^*, ~~~ \forall~ j,k,m,n=0,1.
\end{eqnarray*}

The equation (\ref{exp_con_2}) is very important in the derivation of the two-photon quantum filter, since it establishes a relationship between
the conditional expectations of the extended system and the original system. And $\pi_t^{11;11}(X)$ defined in this way is exactly the quantum conditional expectation for the two-photon field state $|\Phi_{11}\rangle$ as shown by the following lemma. Then we can get the desired two-photon quantum filter by means of the filtering equations for the extended system, Theorem \ref{thm:2-photon_filter}.
\begin{lemma} \label{lem:equivalence}
For all $K \in \mathscr{Y}(t)$,
\begin{eqnarray}
\mathbb{E}_{11;11}[\pi_t^{jk;mn}(X)K]=\mbb{E}_{jk;mn}[j_t(X)K],  ~~~ \forall~  j,k,m,n=0,1.
\label{eq:pi_j_t_X}
\end{eqnarray}
In particular,
\begin{equation} \label{eq:nov17_1}
\mathbb{E}_{11;11}[\pi_t^{11;11}(X)K]
=
\mathbb{E}_{11;11}[j_t(X)K].\nonumber
\end{equation}
That is, $\pi_t^{11;11}(X)$ is exactly the quantum conditional expectation for the two-photon field state $|\Phi_{11}\rangle$, namely, $\pi_t^{11;11}(X)=\mathbb{E}_{11;11}[j_t(X)|\mathscr{Y}(t)]$.
%\begin{eqnarray*}\label{def_filter_2}
%\pi_t^{11;11}(X)=\mathbb{E}_{11;11}[j_t(X)|\mathscr{Y}(t)].
%\end{eqnarray*}
\end{lemma}

{\em Proof}. Noticing that for all $j,k,m,n=0,1$ and $K \in \mathscr{Y}(t)$, by equation (\ref{exp_con_2}) we have
\begin{eqnarray}
\hspace{-3em}\mathbb{E}_{11;11}[\pi_t^{jk;mn}(X)K]
&=&
\frac{1}{|\alpha_{11}|^2}\mathbb{E}_{\Sigma}[|e_{11}\rangle\langle e_{11}|\otimes (\pi_t^{jk;mn}(X)K)]
\nonumber
\\
%&=&
%\frac{1}{|\alpha_{11}|^2}\mathbb{E}_{\Sigma}[(|e_{11}\rangle\langle e_{11}|\otimes I)(I\otimes (\pi_t^{jk;mn}(X)K)]
%\nonumber
%\\
&=&\frac{1}{|\alpha_{11}|^2}\mathbb{E}_{\Sigma}[\tilde{\pi}_t(|e_{11}\rangle\langle e_{11}|\otimes I)(I\otimes (\pi_t^{jk;mn}(X))(I\otimes K)]
\nonumber
\\
&=&\frac{1}{\alpha_{jk}^*\alpha_{mn}}\mathbb{E}_{\Sigma}[\tilde{\pi}_t(|e_{jk}\rangle\langle e_{mn}|\otimes X)(I\otimes K)]\nonumber
\\
&=& \frac{1}{\alpha_{jk}^*\alpha_{mn}}\mathbb{E}_{\Sigma}[\tilde{\pi}_t(|e_{jk}\rangle\langle e_{mn}|\otimes j_t(X)K)]
\nonumber
\\
&=&
\mbb{E}_{jk;mn}[j_t(X)K],\nonumber
\end{eqnarray}
which is exactly equation (\ref{eq:pi_j_t_X}).  Setting $j=k=m=n=1$ in (\ref{eq:pi_j_t_X}) gives (\ref{eq:nov17_1}).  Because $K \in \mathscr{Y}(t)$ is arbitrary, $\pi_t^{11;11}(X)$ is exactly the quantum conditional expectation for the two-photon field state $|\Phi_{11}\r$.\qquad \endproof

In what follows we derive the quantum filtering equations for the quantum conditional expectation $\pi_t^{11;11}(X)$.  To present the results clearly, we define the superoperators $M_t^{jk;mn}(X)$ ($j,k,m,n=0,1$) for an arbitrary system operator $X$ as follows:
\begin{eqnarray}\label{eq:aug3_6}
&&\quad\quad M_{t}^{jk;mn}(X)\\
&\triangleq& \pi _{t}^{jk;mn}(XL+L^{\ast }X)+\delta _{m1}\delta _{n0}\xi _{1}(t)\pi
_{t}^{jk;00}(XS)+\delta _{m0}\delta _{n1}\xi _{2}(t)\pi _{t}^{jk;00}(XS)\nonumber\\
&&\hspace{-0.5em}+\delta _{j1}\delta _{k0}\xi
_{1}^{\ast }(t)\pi _{t}^{00;mn}(S^{\ast }X)+\delta _{j0}\delta _{k1}\xi
_{2}^{\ast }(t)\pi _{t}^{00;mn}(S^{\ast }X)+\frac{\delta _{m1}\delta _{n1}}{\sqrt{N_{2}}}\xi _{1}(t)\pi
_{t}^{jk;01}(XS)\nonumber\\
&&\hspace{-0.5em}+\frac{\delta _{m1}\delta _{n1}}{\sqrt{N_{2}}}\xi _{2}(t)\pi
_{t}^{jk;10}(XS)+\frac{\delta _{j1}\delta _{k1}}{\sqrt{N_{2}}}\xi _{1}^{\ast }(t)\pi
_{t}^{01;mn}(S^{\ast }X)+\frac{\delta _{j1}\delta _{k1}}{\sqrt{N_{2}}}\xi
_{2}^{\ast }(t)\pi _{t}^{10;mn}(S^{\ast }X).\nonumber
\end{eqnarray}

\begin{theorem}\label{thm:2-photon_filter}
In the case of homodyne detection, the quantum filter for the quantum system $G$ driven by the $2$-photon state $|\Phi_{11}\rangle$ is given by the following system of Ito differential equations
\begin{eqnarray*}
{\rm d}\pi_t^{11;11}(X)
&=&
\biggl[\pi_t^{11;11}(\mathcal{L}_{00}(X))+\frac{1}{\sqrt{N_2}}\xi_1(t)\pi^{11;01}_t(\mathcal{L}_{01}(X))+\frac{1}{\sqrt{N_2}}\xi_2(t)\pi^{11;10}_t(\mathcal{L}_{01}(X))
\nonumber
\\
&&+\frac{1}{\sqrt{N_2}}\xi_1^*(t)\pi^{01;11}_t(\mathcal{L}_{10}(X))+\frac{1}{\sqrt{N_2}}\xi_2^*(t)\pi^{10;11}_t(\mathcal{L}_{10}(X))
\nonumber
\\
&&+\frac{1}{N_2}|\xi_1(t)|^2\pi^{01;01}_t(\mathcal{L}_{11}(X))+\frac{1}{N_2}|\xi_2(t)|^2\pi^{10;10}_t(\mathcal{L}_{11}(X))\nonumber\\
&&+\frac{1}{N_2}\xi_1(t)\xi_2^*(t)\pi^{10;01}_t(\mathcal{L}_{11}(X))+\frac{1}{N_2}\xi_1^*(t)\xi_2(t)\pi^{01;10}_t(\mathcal{L}_{11}(X))\biggr]{\rm d} t
\nonumber
\\
&&+\biggl[M_t^{11;11}(X)-\pi_t^{11;11}(X)M_t^{11;11}(I)\biggr]{\rm d} W(t),
\label{filter_two}
\end{eqnarray*}
where
\begin{eqnarray*}
{\rm d}\pi_t^{00;00}(X)
&=&
\pi_t^{00;00}(\mathcal{L}_{00}(X)){\rm d} t+\biggl[M_t^{00;00}(X)-\pi_t^{00;00}(X)M_t^{11;11}(I)\biggr]{\rm d} W(t),
\label{filter_two_general_1}
\\
{\rm d}\pi_t^{00;10}(X)
&=&
\biggl[\pi_t^{00;10}(\mathcal{L}_{00}(X))+\xi_1(t)\pi_t^{00;00}(\mathcal{L}_{01}(X))\biggr]{\rm d} t\nonumber\\
&&+\biggl[M_t^{00;10}(X)-\pi_t^{00;10}(X)M_t^{11;11}(I)\biggr]{\rm d} W(t),
\label{filter_two_general_2}
\\
{\rm d}\pi_t^{00;01}(X)
&=&
\biggl[\pi_t^{00;01}(\mathcal{L}_{00}(X))+\xi_2(t)\pi_t^{00;00}(\mathcal{L}_{01}(X))\biggr]{\rm d} t\nonumber\\
&&+\biggl[M_t^{00;01}(X)-\pi_t^{00;01}(X)M_t^{11;11}(I)\biggr]{\rm d} W(t),
\label{filter_two_general_3}
\\
{\rm d}\pi_t^{10;10}(X)
&=&
\biggl[\xi_1(t)\pi_t^{10;00}(\mathcal{L}_{01}(X))+\xi_1^*(t)\pi_t^{00;10}(\mathcal{L}_{10}(X))+|\xi_1(t)|^2\pi_t^{00;00}(\mathcal{L}_{11}(X))\nonumber\\
&&+\pi_t^{10;10}(\mathcal{L}_{00}(X))\biggr]{\rm d} t+\biggl[M_t^{10;10}(X))-\pi_t^{10;10}(X)M_t^{11;11}(I)\biggr]{\rm d} W(t),
\label{filter_two_general_5}
\\
{\rm d}\pi_t^{10;01}(X)
&=&
\biggl[\xi_2(t)\pi_t^{10;00}(\mathcal{L}_{01}(X))+\xi_1^*(t)\pi_t^{00;01}(\mathcal{L}_{10}(X))+\xi_1^*(t)\xi_2(t)\pi_t^{00;00}(\mathcal{L}_{11}(X))\nonumber\\
&&+\pi_t^{10;01}(\mathcal{L}_{00}(X))\biggr]{\rm d} t
+\biggl[M_t^{10;01}(X)-\pi_t^{10;01}(X)M_t^{11;11}(I)\biggr]{\rm d} W(t),
\label{filter_two_general_6}
\\
{\rm d}\pi_t^{01;01}(X)
&=&
\biggl[\xi_2(t)\pi_t^{01;00}(\mathcal{L}_{01}(X))+\xi_2^*(t)\pi_t^{00;01}(\mathcal{L}_{10}(X))+|\xi_2(t)|^2\pi_t^{00;00}(\mathcal{L}_{11}(X))\nonumber\\
&&+\pi_t^{01;01}(\mathcal{L}_{00}(X))\biggr]{\rm d} t
+\biggl[M_t^{01;01}(X)-\pi_t^{01;01}(X)M_t^{11;11}(I)\biggr]{\rm d} W(t),
\label{filter_two_general_7}
\\
{\rm d}\pi_t^{00;11}(X)
&=&
\biggl[\frac{1}{\sqrt{N_2}}\xi_1(t)\pi_t^{00;01}(\mathcal{L}_{01}(X))+\frac{1}{\sqrt{N_2}}\xi_2(t)\pi_t^{00;10}(\mathcal{L}_{01}(X))\nonumber\\
&&+\pi_t^{00;11}(\mathcal{L}_{00}(X))\biggr]{\rm d} t+\biggl[M_t^{00;11}(X)-\pi_t^{00;11}(X)M_t^{11;11}(I)\biggr]{\rm d} W(t),
\label{filter_two_general_4}
\end{eqnarray*}
\begin{eqnarray*}
{\rm d}\pi_t^{10;11}(X)
&=&
\biggl[\frac{1}{\sqrt{N_2}}\xi_1(t)\pi^{10;01}_t(\mathcal{L}_{01}(X))+\frac{1}{\sqrt{N_2}}\xi_2(t)\pi^{10;10}_t(\mathcal{L}_{01}(X))
\nonumber
\\
&&+\xi_1^*(t)\pi^{00;11}_t(\mathcal{L}_{10}(X))+\frac{1}{\sqrt{N_2}}|\xi_1(t)|^2\pi^{00;01}_t(\mathcal{L}_{11}(X))
\nonumber
\\
&&+\frac{1}{\sqrt{N_2}}\xi_1^*(t)\xi_2(t)\pi^{00;10}_t(\mathcal{L}_{11}(X))+\pi_t^{10;11}(\mathcal{L}_{00}(X))\biggr]{\rm d} t\nonumber\\
&&+\biggl[M_t^{10;11}(X)-\pi_t^{10;11}(X)M_t^{11;11}(I)\biggr]{\rm d} W(t),\nonumber\\
\label{filter_two_general_9}
{\rm d}\pi_t^{01;11}(X)
&=&
\biggl[\frac{1}{\sqrt{N_2}}\xi_1(t)\pi^{01;01}_t(\mathcal{L}_{01}(X))+\frac{1}{\sqrt{N_2}}\xi_2(t)\pi^{01;10}_t(\mathcal{L}_{01}(X))
\nonumber
\\
&&+\xi_2^*(t)\pi^{00;11}_t(\mathcal{L}_{10}(X))+\frac{1}{\sqrt{N_2}}\xi_1(t)\xi_2^*(t)\pi^{00;01}_t(\mathcal{L}_{11}(X))\nonumber\\
&&+\frac{1}{\sqrt{N_2}}|\xi_2(t)|^2\pi^{00;10}_t(\mathcal{L}_{11}(X))+\pi_t^{01;11}(\mathcal{L}_{00}(X))\biggr]{\rm d} t\nonumber\\
&&+\biggl[M_t^{01;11}(X)-\pi_t^{01;11}(X)M_t^{11;11}(I)\biggr]{\rm d} W(t),
\label{filter_two_general_8}
\end{eqnarray*}
and
\begin{eqnarray*}
&&\pi_t^{10;00}(X)
=
({\pi_t^{00;10}(X^*)})^*, ~\pi_t^{01;00}(X)
=
({\pi_t^{00;01}(X^*)})^*, ~ \pi_t^{11;00}(X)
=
({\pi_t^{00;11}(X^*)})^*,
\\
&& \pi_t^{01;10}(X)
=
({\pi_t^{10;01}(X^*)})^*, ~ \pi_t^{11;01}(X)
=
({\pi_t^{01;11}(X^*)})^*, ~ \pi_t^{11;10}(X)
=
({\pi_t^{10;11}(X^*)})^*,
\end{eqnarray*}
with the initial conditions $\pi_0^{jk;mn}(X)=\l\eta |X|\eta\r\langle \Phi_{jk}|\Phi_{mn}\rangle$ for all $j,k,m,n=0,1$. Moreover, the innovation process $W(t)$, defined by ${\rm d} W(t) = {\rm d} Y(t)-M_t^{11;11}(I){\rm d} t$,
%\begin{eqnarray*}
%{\rm d} W(t)&=&{\rm d} Y(t)-M_t^{11;11}(I){\rm d} t,
%\end{eqnarray*}
 is a Wiener process with respect to the two-photon state $|\Phi_{11}\rangle$.
\end{theorem}

The proof is given in Appendix.

%{\it Remark }12.
%By equation (\ref{eq:pi_j_t_X}), it is easy to show that
%\begin{eqnarray}
%\mathbb{E}_{11;11}[\pi_t^{jk;mn}(X)]=\mbb{E}_{jk;mn}[j_t(X)]=\omega_t^{jk;mn}(X).\nonumber
%\label{eq:aug21_1}
%\end{eqnarray}
%Moreover, $\mathbb{E}_{11;11}[{\rm d} W(t)]=0$. Averaging both sides of the equations in Theorem \ref{thm:2-photon_filter}
%%(\ref{filter_two})--(\ref{filter_two_general_8})
%with respect to the state $|\Psi_{11}\r$,  the master equations
%%(\ref{eq:mastereq_first})--(\ref{eq:mastereq_1111})
%in Theorem \ref{thm:master_Heisenberg} can be obtained.

In what follows we present the stochastic master equations in Schrodinger picture.  Define conditional density operators $\rho_t^{jk;mn}$ on $\mathsf{H}_{S}\otimes\mathsf{H}_{F}$ in terms of
\begin{equation} \label{eq:nov11_temp1}
\pi_t^{jk;mn}(X)=\mbox{Tr}[(\rho_t^{jk;mn})^* (I\otimes X)], ~~~\forall~ j,k,m,n=0,1.
\end{equation}
Moreover, define superoperators $\mathcal{S}_{t}^{jk;mn}(\rho)$, $j,k,m,n=0,1$, as follows:
\begin{eqnarray*}
\qquad \mathcal{S}_{t}^{jk;mn}(\rho )\triangleq L{\rho _{t}^{jk;mn}+\rho _{t}^{jk;mn}}L^{\ast}+\delta _{m1}\delta _{n0}\xi
_{1}^{\ast }(t){\rho _{t}^{jk;00}}S^{\ast}+\delta _{m0}\delta _{n1}\xi
_{2}^{\ast }(t){\rho _{t}^{jk;00}}S^{\ast}\nonumber\\
+\delta _{j1}\delta _{k0}\xi _{1}(t)S{\rho _{t}^{00;mn}}+\delta_{j0}\delta _{k1}\xi _{2}(t)S{\rho _{t}^{00;mn}}
+\frac{\delta _{m1}\delta _{n1}}{\sqrt{N_{2}}}\xi _{1}^{\ast }(t){\rho
_{t}^{jk;01}}S^{\ast }\nonumber\\
+\frac{\delta _{m1}\delta _{n1}}{\sqrt{N_{2}}}\xi
_{2}^{\ast }(t){\rho _{t}^{jk;10}}S^{\ast }
+\frac{\delta _{j1}\delta _{k1}}{\sqrt{N_{2}}}\xi _{1}(t)S{\rho
_{t}^{01;mn}+}\frac{\delta _{j1}\delta _{k1}}{\sqrt{N_{2}}}\xi _{2}(t)S{\rho
_{t}^{10;mn}}.
\label{eq:S_up}
\end{eqnarray*}

\begin{corollary} \label{cor:filter_up}
In the case of homodyne detection, the stochastic master equations for conditional densities $\rho_t^{jk;mn}$ of the quantum system $G$ driven by the two-photon state $|\Phi_{11}\r$ are
\begin{eqnarray*}
{\rm d}\rho_t^{11;11}
&=&
\biggl[\mathcal{D}_{00}(\rho_t^{11;11})+\frac{1}{\sqrt{N_2}}\xi_1^*(t)\mathcal{D}_{10}(\rho_t^{11;01})+\frac{1}{\sqrt{N_2}}\xi_2^*(t)\mathcal{D}_{10}(\rho_t^{11;10})
\nonumber
\\
&&+\frac{1}{\sqrt{N_2}}\xi_1(t)\mathcal{D}_{01}(\rho_t^{01;11})+\frac{1}{\sqrt{N_2}}\xi_2(t)\mathcal{D}_{01}(\rho_t^{10;11})+\frac{1}{N_2}|\xi_1(t)|^2\mathcal{D}_{11}(\rho_t^{01;01})
\nonumber
\\
&&+\frac{1}{{N_2}}\xi_1(t)\xi_2^*(t)\mathcal{D}_{11}(\rho_t^{01;10})
+\frac{1}{N_2}\xi_1^*(t)\xi_2(t)\mathcal{D}_{11}(\rho_t^{10;01})+\frac{1}{N_2}|\xi_2(t)|^2\mathcal{D}_{11}(\rho_t^{10;10})\biggr]{\rm d} t
\nonumber
\\
&&+\biggl[\mathcal{S}_t^{11; 11}(\rho)-\rho_t^{11;11}\mrm{Tr}[\mathcal{S}_t^{11;11}(\rho)]\biggr]{\rm d}\bar{W}_t,
\end{eqnarray*}
where
\begin{eqnarray*}{\label{filter_2_example}}
{\rm d}\rho_t^{00;00}
&=&
\mathcal{D}_{00}(\rho_t^{00;00}){\rm d} t+\biggl[\mathcal{S}_t^{00;00}(\rho)-\rho_t^{00;00}\mrm{Tr}[\mathcal{S}_t^{11;11}(\rho)]\biggr]{\rm d}\bar{W}_t,
\label{eq:filter_2_0000}
\\
{\rm d}\rho_t^{00;10}
&=&
\biggl[\mathcal{D}_{00}(\rho_t^{00;10})+\xi_1^*(t)\mathcal{D}_{10}(\rho_t^{00;00})\biggr]{\rm d} t+\biggl[\mathcal{S}_t^{00;10}(\rho)-\rho_t^{00;10}\mrm{Tr}[\mathcal{S}_t^{11;11}(\rho)]\biggr]{\rm d}\bar{W}_t,
\label{eq:filter_2_0010}
\\
{\rm d}\rho_t^{00;01}
&=&
\biggl[\mathcal{D}_{00}(\rho_t^{00;01})+\xi_2^*(t)\mathcal{D}_{10}(\rho_t^{00;00})\biggr]{\rm d} t+\biggl[\mathcal{S}_t^{00; 01}(\rho)-\rho_t^{00;01}\mrm{Tr}[\mathcal{S}_t^{11;11}(\rho)]\biggr]{\rm d}\bar{W}_t,
\label{eq:filter_2_0001}
\\
{\rm d}\rho_t^{10;10}
&=&
\biggl[\mathcal{D}_{00}(\rho_t^{10;10})+\xi_1^*(t)\mathcal{D}_{10}(\rho_t^{10;00})+\xi_1(t)\mathcal{D}_{01}(\rho_t^{00;10})+|\xi_1(t)|^2\mathcal{D}_{11}(\rho_t^{00;00})\biggr]{\rm d} t
\nonumber
\\
&&+\biggl[\mathcal{S}_t^{10; 10}(\rho)-\rho_t^{10;10}\mrm{Tr}[\mathcal{S}_t^{11;11}(\rho)]\biggr]{\rm d}\bar{W}_t,
\label{eq:filter_2_1010}
\\
{\rm d}\rho_t^{10;01}
&=&
\biggl[\mathcal{D}_{00}(\rho_t^{10;01})+\xi_2^*(t)\mathcal{D}_{10}(\rho_t^{10;00})+\xi_1(t)\mathcal{D}_{01}(\rho_t^{00;01})+\xi_1(t)\xi_2^*(t)\mathcal{D}_{11}(\rho_t^{00;00})\biggr]{\rm d} t
\nonumber
\\
&&+\biggl[\mathcal{S}_t^{10; 01}(\rho)-\rho_t^{10;01}\mrm{Tr}[\mathcal{S}_t^{11;11}(\rho)]\biggr]{\rm d}\bar{W}_t,
\\
{\rm d}\rho_t^{01;01}
&=&
\biggl[\mathcal{D}_{00}(\rho_t^{01;01})+\xi_2^*(t)\mathcal{D}_{10}(\rho_t^{01;00})+\xi_2(t)\mathcal{D}_{01}(\rho_t^{00;01})+|\xi_2(t)|^2\mathcal{D}_{11}(\rho_t^{00;00})\biggr]{\rm d} t
\nonumber
\\
&&+\biggl[\mathcal{S}_t^{01; 01}(\rho)-\rho_t^{01;01}\mrm{Tr}[\mathcal{S}_t^{11;11}(\rho)]\biggr]{\rm d}\bar{W}_t,
\\
{\rm d}\rho_t^{00;11}
&=&
\biggl[\mathcal{D}_{00}(\rho_t^{00;11})+\frac{1}{\sqrt{N_2}}\xi_1^*(t)\mathcal{D}_{10}(\rho_t^{00;01})+\frac{1}{\sqrt{N_2}}\xi_2^*(t)\mathcal{D}_{10}(\rho_t^{00;10})\biggr]{\rm d} t
\nonumber
\\
&&+\biggl[\mathcal{S}_t^{00; 11}(\rho)-\rho_t^{00;11}\mrm{Tr}[\mathcal{S}_t^{11;11}(\rho)]\biggr]{\rm d}\bar{W}_t,
\\
{\rm d}\rho_t^{10;11}
&=&
\biggl[\mathcal{D}_{00}(\rho_t^{10;11})+\frac{1}{\sqrt{N_2}}\xi_1^*(t)\mathcal{D}_{10}(\rho_t^{10;01})+\frac{1}{\sqrt{N_2}}\xi_2^*(t)\mathcal{D}_{10}(\rho_t^{10;10})
\nonumber
\\
&&+\xi_1(t)\mathcal{D}_{01}(\rho_t^{00;11})+\frac{1}{\sqrt{N_2}}|\xi_1(t)|^2\mathcal{D}_{11}(\rho_t^{00;01})+\frac{1}{\sqrt{N_2}}\xi_1(t)\xi_2^*(t)\mathcal{D}_{11}(\rho_t^{00;10})\biggr]{\rm d} t
\nonumber
\\
&&+\biggl[\mathcal{S}_t^{10; 11}(\rho))-\rho_t^{10;11}\mrm{Tr}[\mathcal{S}_t^{11;11}(\rho)]\biggr]{\rm d}\bar{W}_t,
\nonumber
\end{eqnarray*}
\begin{eqnarray*}
{\rm d}\rho_t^{01;11}
&=&
\biggl[\mathcal{D}_{00}(\rho_t^{01;11})+\frac{1}{\sqrt{N_2}}\xi_1^*(t)\mathcal{D}_{10}(\rho_t^{01;01})+\frac{1}{\sqrt{N_2}}\xi_2^*(t)\mathcal{D}_{10}(\rho_t^{01;10})
\nonumber
\\
&&+\xi_2(t)\mathcal{D}_{01}(\rho_t^{00;11})+\frac{1}{\sqrt{N_2}}\xi_1^*(t)\xi_2(t)\mathcal{D}_{11}(\rho_t^{00;01})+\frac{1}{\sqrt{N_2}}|\xi_2(t)|^2\mathcal{D}_{11}(\rho_t^{00;10})\biggr]{\rm d} t
\nonumber
\\
&&+\biggl[\mathcal{S}_t^{01; 11}(\rho)-\rho_t^{01;11}\mrm{Tr}[\mathcal{S}_t^{11;11}(\rho)]\biggr]{\rm d}\bar{W}_t,
\end{eqnarray*}
and
\begin{eqnarray*}
&&\rho_t^{10;00}=(\rho_t^{00;10})^*,~~\rho_t^{01;00}=(\rho_t^{00;01})^*,~~\rho_t^{01;10}=(\rho_t^{10;01})^*,\\
&&\rho_t^{11;00}=(\rho_t^{00;11})^*,~~\rho_t^{11;10}=(\rho_t^{10;11})^*,~~\rho_t^{11;01}=(\rho_t^{01;11})^*,
\end{eqnarray*}
where the innovation process $\bar{W}_t$ is defined as ${\rm d}\bar{W}_t={\rm d} Y(t)-\mrm{Tr}[\mathcal{S}_t^{11;11}(\rho)]{\rm d} t$.
The initial conditions are $\rho_t^{jk;mn}(0)=\langle\Phi_{mn}|\Phi_{jk}\rangle|\eta\rangle\langle\eta|$ for all $ j,k,m,n=0,1$.
\end{corollary}

{\it Remark }8.
Corollary \ref{cor:filter_up} is an immediate consequence of Theorem \ref{thm:2-photon_filter} and equation (\ref{eq:nov11_temp1}).

%%%%%%%%%%%%%%%%%%%%%%%%%%%%%%%
%%%%%%%%%%%%%%%%%%%%%%%%%%%%%%%
{\it Example }3. We continue to study the system in Examples 1 and 2 by looking at its 2-photon filter. Here, we wish to calculate the conditional excitation probability (namely, the conditional excited state population) under homodyne detection, which can be expressed as $\mathbb{P}_e^c(t) \triangleq \mbox{Tr} [\rho_t^{11;11}|e\rangle\langle e|]$
%\begin{eqnarray*}
%\mathbb{P}_e^c(t) \triangleq \mbox{Tr} [\rho_t^{11;11}|e\rangle\langle e|],
%\end{eqnarray*}
where $\rho_t^{11;11}$ is the solution to the filtering equations in Corollary \ref{cor:filter_up}. Individual trajectories $\mathbb{P}_e^c(t)$ are plotted in figures \ref{fig_ex2_a}--\ref{fig_ex2_c}.   For comparison, we also plotted $\mathbb{P}_e(t)$ for the master equations in red solid lines.  It can be seen clearly that in all the three cases, many quantum trajectories can have maximal excitation probability very close to 1, namely, the unit probability.

\begin{figure}[htb!]
\begin{center}
\includegraphics[width=2.5in]{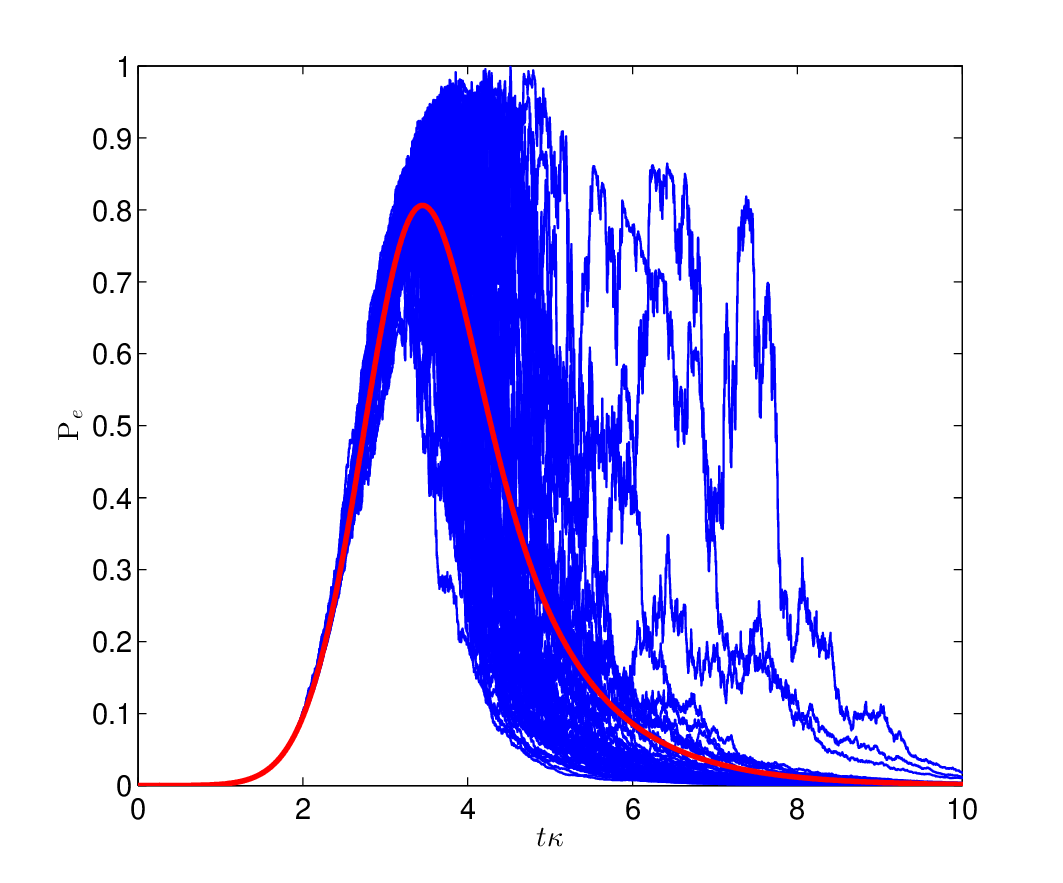}
\caption{The conditional excitation probability of 2-photon filtering for the case when $t_1=t_2=3$ and  $\Omega_1=\Omega_2=1.46\kappa$.  The horizontal axis is time, while the vertical axis is excitation probability.  The red solid line is the unconditional excitation probability  $\mathbb{P}_e(t)$ as calculated by the master equation in Corollary \ref{cor:master_Heisenberg}. The blue lines are individual trajectories of conditional excitation probabilities $\mathbb{P}^c_e(t)$.
}
\label{fig_ex2_a}
\end{center}
\end{figure}

\begin{figure}
\begin{center}
\includegraphics[width=2.5in]{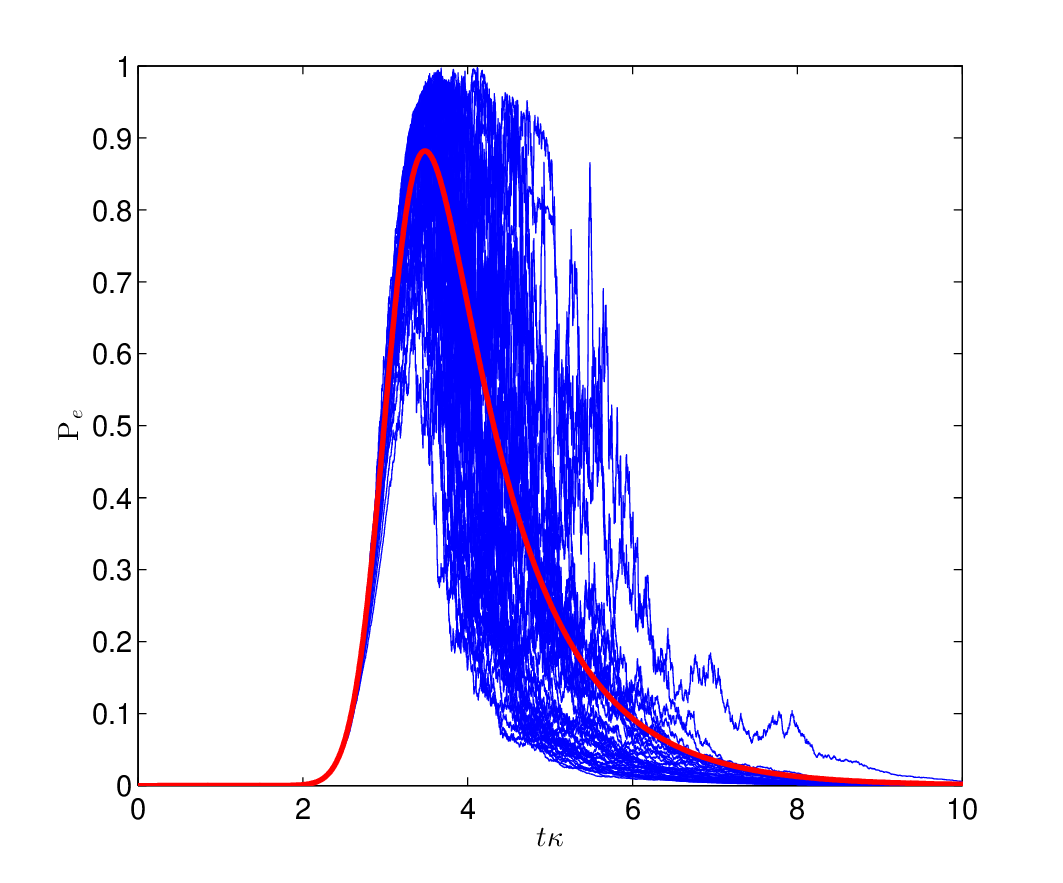}
\caption{The conditional excitation probability of $2$-photon filtering for the case when $t_1=t_2=3$ and $\Omega_1=\Omega_2=2.92\kappa$. The notations of axises and lines are the same to figure \ref{fig_ex2_a}.}
\label{fig_ex2_b}
\end{center}
\end{figure}

\begin{figure}
\begin{center}
\includegraphics[width=2.5in]{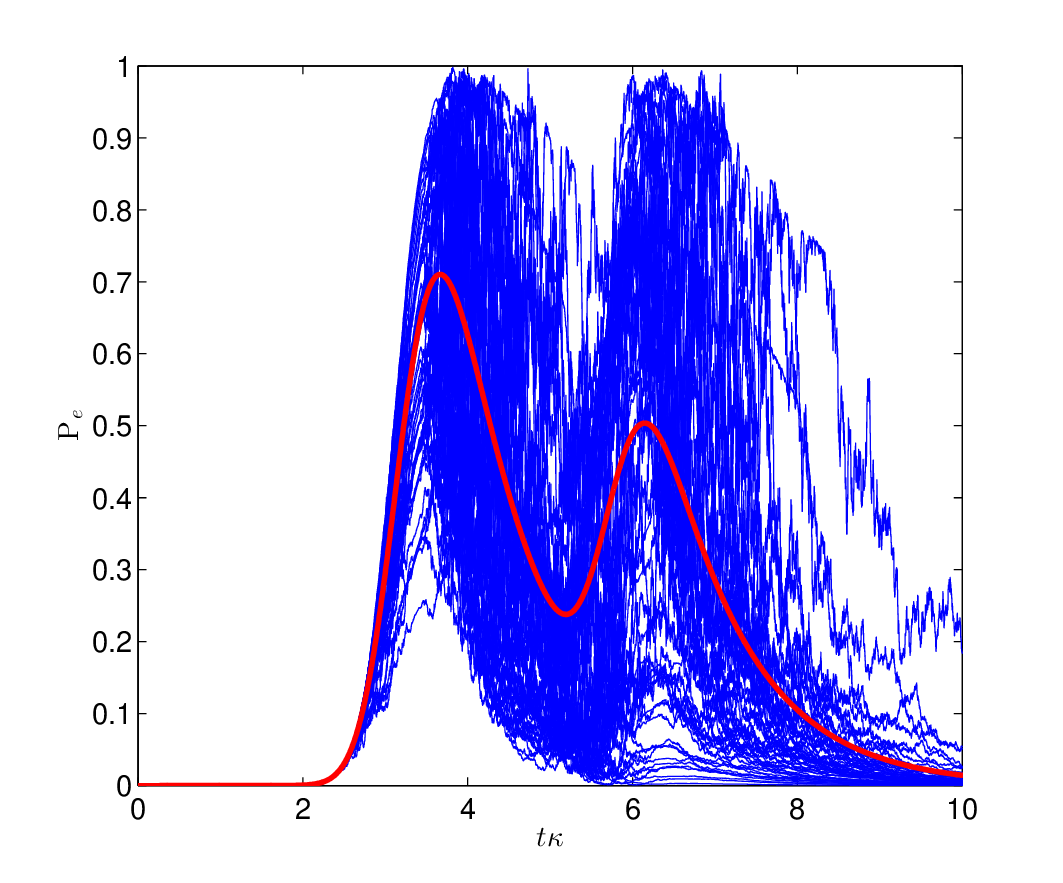}
\caption{The conditional excitation probability of $2$-photon filtering for the case when $t_1=3$, $t_2=5.5$,  and $\Omega_1=\Omega_2=2.92\kappa$.  The notations of axises and lines are the same to figure \ref{fig_ex2_a}.}
\label{fig_ex2_c}
\end{center}
\end{figure}

%%%%%%%%%%%%%%%%%%%%%%%%%%%%%%%%%
%%%%%%%%%%%%%%%%%%%%%%%%%%%%%%%%%
%%%%%%%%%%%%%%%%%%%%%%%%%%%%%%%%%
\section{Multi-photon filtering}\label{sec:multi_photon_filtering}

In this section, we derive filtering equations for an arbitrary quantum system driven by a wavepacket in an $n$-photon state. We follow the logic as carried out in Section \ref{sec:two_photon}. That is, we first derive a filtering equation for an extended system, then present filtering equations for the original system. Multi-photon states are defined in Subsection \ref{subsec:multi_photon_states}. The master equation is presented in Subsection \ref{subsec:multi_photon_master}. Filters in the homodyne detection case are given in Subsection \ref{non_markovian_multi_photon}, and filters in the photon counting case are presented in Subsection \ref{subsec:multi_photon_filter_photoncounting}.

It is worth noting that the notations used in the multi-photon context are slightly different from the 2-photon case, and the following notions turn out very convenient and useful in the derivation of the multi-photon filter. Define a set  $\bar{n} \triangleq \{1,2,\ldots,n\}$. It is implicitly assumed that the elements in each subset of $\bar{n}$ are ordered from the smallest to the largest. Moreover, given a set $R\subset \bar{n}$ and an integer $\mu\in \bar{n}$ but not in $R$, namely, $\mu \in \bar{n}  \setminus R$, define a new ({\it ordered}) subset $R\mu \triangleq R\cup \{\mu\}$ of $\bar{n}$.

Finally, due to page limitation,  all proofs in this section are omitted. %arxiv ?????

%%%%%%%%%%%%%%%%%%%%%%%%%%%%%%%%%
%%%%%%%%%%%%%%%%%%%%%%%%%%%%%%%%%
\subsection{Multi-photon states} \label{subsec:multi_photon_states}
The continuous-mode $n$-photon state is defined as
\begin{eqnarray}\label{def_n_photon}
|\Phi^n\rangle\triangleq\frac{1}{\sqrt{N_n}}\Pi_{j=1}^nB^*(\xi_j)|0\rangle,
\end{eqnarray}
where the superscript $n$ indicates the number of photons, $N_n$ is the normalization coefficient, and $B^\ast(\xi_j)=\int_0^\infty\xi_j(t)b^\ast(t){\rm d} t$ is defined in equation (\ref{eq:B^ast_xi}). This state is completely determined by the set $M_n\triangleq\{\xi_1,\xi_2,\ldots,\xi_n\}$ of $n$ functions in $L_2(\mbb{R}^+,\mbb{C})$. It is worth noting that we distinguish functions in terms of their subscript indices; thus, two (possibly identical) functions with \emph{different} subscript indices are regarded as different functions. For simplicity, we assume all the functions $\xi_k$  are normalized, that is, $\|\xi_k\|=1$ for all $k=1,\ldots,n$. However, these functions are not necessarily orthogonal to each other.   If all the $\xi_i \ (i=1,\ldots,n)$ are equal to $\xi$, the $n$-photon state defined in equation (\ref{def_n_photon}) reduces to  a continuous-mode $n$-photon Fock state:
\begin{eqnarray} \label{eq:Fock_state}
|F^n\rangle \triangleq \frac{1}{\sqrt{n!}}{(B^*(\xi))^n}|0\rangle.
\end{eqnarray}

%%%%%%%%%%%%%%%%%%%%%%%%%%%%%%%%%
%%%%%%%%%%%%%%%%%%%%%%%%%%%%%%%%%
\subsection{Multi-photon master equation} \label{subsec:multi_photon_master}

In this subsection, we present the master equation of the quantum system $G$ driven by an input field initialized in an $n$-photon state as defined in equation (\ref{def_n_photon}).

For an arbitrary system operator $X$ on the Hilbert space $\mathsf{H}_S$, define its expectation with respect to the  $n$-photon state $|\Phi^n\rangle$ by
\begin{equation}\label{eq:omega^nn}
\omega_t^{n;n}(X)\triangleq\langle\eta\Phi^n|j_t(X)|\eta\Phi^n\rangle  \equiv \mbb{E}_{n;n}[j_t(X)].\nonumber
\end{equation}
Geometrically, $\omega_t^{n;n}(X)$ indicates that, at each time instant $t$, how much information of $j_t(X)$ is contained in the projection space $|\eta\Phi^n\rangle\langle\eta\Phi^n|$.

\begin{figure}
\begin{center}
\includegraphics[width=5.0in]{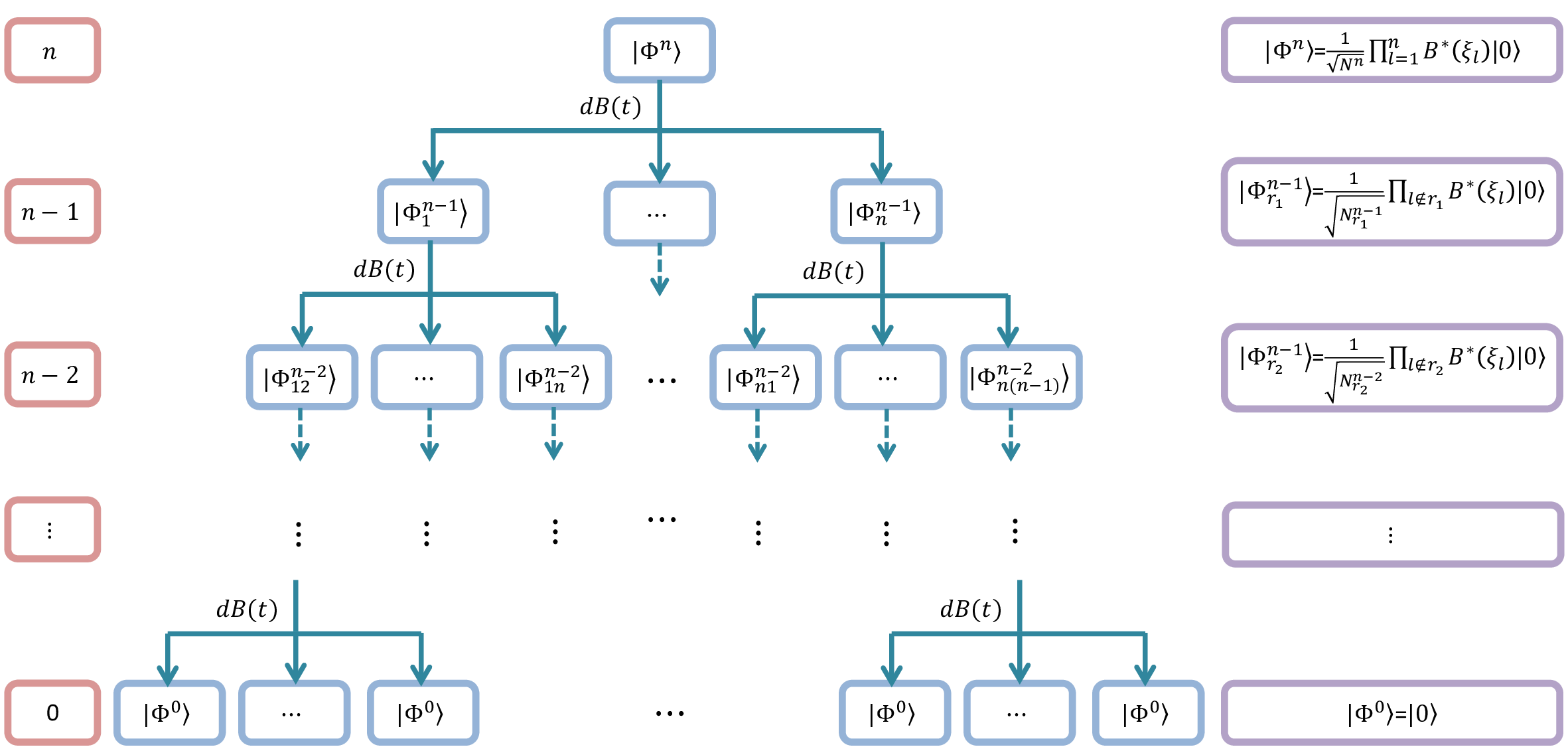}
\caption{The hierarchical structure of the operation of ${\rm d} B(t)$ on photon states.  The left column indicates the number of photons contained in each state on each level. The right column is the general expression of photon states on each level. The column in the middle shows how ${\rm d} B(t)$ acts on various photon states downward, whose ``zoom-in'' version is given in figure \ref{diagram_part}.}
\label{diagram_total}
\end{center}
\end{figure}

In what follows we derive the master equation for $\omega_t^{n;n}(X)$. In analog to (\ref{eq:aug17_temp1}) for the $2$-photon state case, the field operator ${\rm d} B(t)$ acting on the $n$-photon state $|\Phi^n\rangle$ generates $n$ states, each having $n-1$ photons. Similarly, ${\rm d} B(t)$ acting on an $(n-1)$-photon state produces $n-1$ states, each of which has $n-2$ photons; and so on, cf. figure \ref{diagram_total}. As a result, to derive the master equation for $\omega_t^{n;n}(X)$, we have to define the general $(n-k)$-photon states, $k=1,2,\ldots,n$. Moreover, for each $k$, due to the different choices of the functions in the set $M_n$, there are $C_n^k$ different $(n-k)$-photon states. To efficiently distinguish them, we adopt the symbol $|\Phi^{n-k}_{r_k}\rangle$, where the superscript $n-k$ indicates the number of photons, while the subscript $r_k\triangleq \{r^{(1)},r^{(2)},\ldots, r^{(k)}\} \subset \bar{n}$ indicates the set of functions  $M_n \setminus \{\xi_{r^{(1)}},\xi_{r^{(2)}},\ldots,\xi_{r^{(k)}}\}$.  Explicitly, the state $|\Phi^{n-k}_{r_k}\rangle$ is defined as
\begin{eqnarray*} \label{eq:nov11_E_n_k}
|\Phi^{n-k}_{r_k}\rangle \triangleq \frac{1}{\sqrt{N^{n-k}_{r_k}}}\Pi_{l\notin r_k}B^*(\xi_{l})|0\rangle,
\end{eqnarray*}
where $N_{r_k}^{n-k}$ is the corresponding normalization coefficient.  In particular, if $k=n$, then $r_k =\bar{n}$. That is, $|\Phi^{0}_{r_n}\rangle=|0\rangle$ is the vacuum state, cf. the bottom level of the diagram in figure \ref{diagram_total}. When $k=n-1$, $|\Phi^{1}_{r_{n-1}}\r$ is a single-photon state. Because all $\xi_k$ are assumed to be normalized, $N^1_{r_{n-1}}=1$. There are $C_n^{n-1}=n$ such single-photon states which occupy the above-to-bottom level of the diagram in figure \ref{diagram_total}.
 Finally, for notation's convenience, when $k=0$, we denote $r_0  = \emptyset$ (the empty set), and  correspondingly,
$|\Phi^{n}_{r_0}\rangle =  |\Phi^{n}\rangle$, which resides on the top level of the diagram in figure \ref{diagram_total}.

As illustrated in figure \ref{diagram_part}, for the general state $|\Phi^{n-k}_{r_k}\rangle$, we find
\begin{eqnarray} \label{eq:B_Phi_n_k}
{\rm d} B(t)|\Phi^{n-k}_{r_k}\rangle&=&\sum_{m=1}^{n-k}\frac{\sqrt{N^{n-k-1}_{r_kj^{(m)}}}}{\sqrt{N^{n-k}_{r_k}}}\xi_{j^{(m)}}(t)|\Phi^{n-k-1}_{r_k j^{(m)}}\rangle {\rm d} t, \  \ k=0,\ldots, n-1,
\end{eqnarray}
with $\{j^{(1)},j^{(2)},\dots,j^{(n-k)}\}=\bar{n}\setminus r_k$.
Alternatively, the equation (\ref{eq:B_Phi_n_k}) can be re-written as
\begin{eqnarray}\label{eq:B_Phi_n_k_2}
{\rm d} B(t)|\Phi^{n-k}_{r_k}\rangle&=&\sum_{\mu\notin r_k}\frac{\sqrt{N^{n-k-1}_{r_k\mu}}}{\sqrt{N^{n-k}_{r_k}}}\xi_{\mu}(t)|\Phi^{n-k-1}_{r_k \mu}\rangle {\rm d} t, \  \ k=0,\ldots, n-1.
\end{eqnarray}
Finally, when $k=n$,  $r_k =\bar{n}$, $|\Phi^{0}_{r_n}\rangle=|0\rangle$, and thus ${\rm d} B(t)|\Phi^0_{r_n}\rangle = {\rm d} B(t)|0\rangle=0$
%\begin{equation}
%{\rm d} B(t)|\Phi^0_{r_n}\rangle = {\rm d} B(t)|0\rangle=0,\nonumber
%\end{equation}
which serves as the terminal condition.

As above discussed, to derive the master equation for the quantity $\omega_t^{n;n}(X)$, temporal evolutions of the following quantities
\begin{eqnarray*} \label{eq:nov11_mean__k_l}
\omega_t^{n-j,l_j;n-k,r_k}(X)\triangleq \mathbb{E}_{n-j,l_j;n-k,r_k}[j_t(X)] \equiv  \langle\eta\Phi^{n-j}_{l_j}|j_t(X)|\eta\Phi^{n-k}_{r_k}\rangle, \ \ \forall~l_j, r_k \subset \bar{n}
\end{eqnarray*}
have to be derived.  Once  $j=0$ or $k=0$, the notations $\omega_t^{n-j,l_j;n-k,r_k}$ can be simplified as $\omega_t^{n;n-k,r_k}$ or $\omega_t^{n-j,l_j;n}$, respectively.  Finally, to simplify notation, we make use of $\omega_t^{n-j,l_j;n-k,r_k}(X) \equiv 0$ if either $j>n$ or $k>n$. This notational convention is very handy in our study of multi-photon filtering problem.

\begin{figure}
\begin{center}
\includegraphics[width=5.0in]{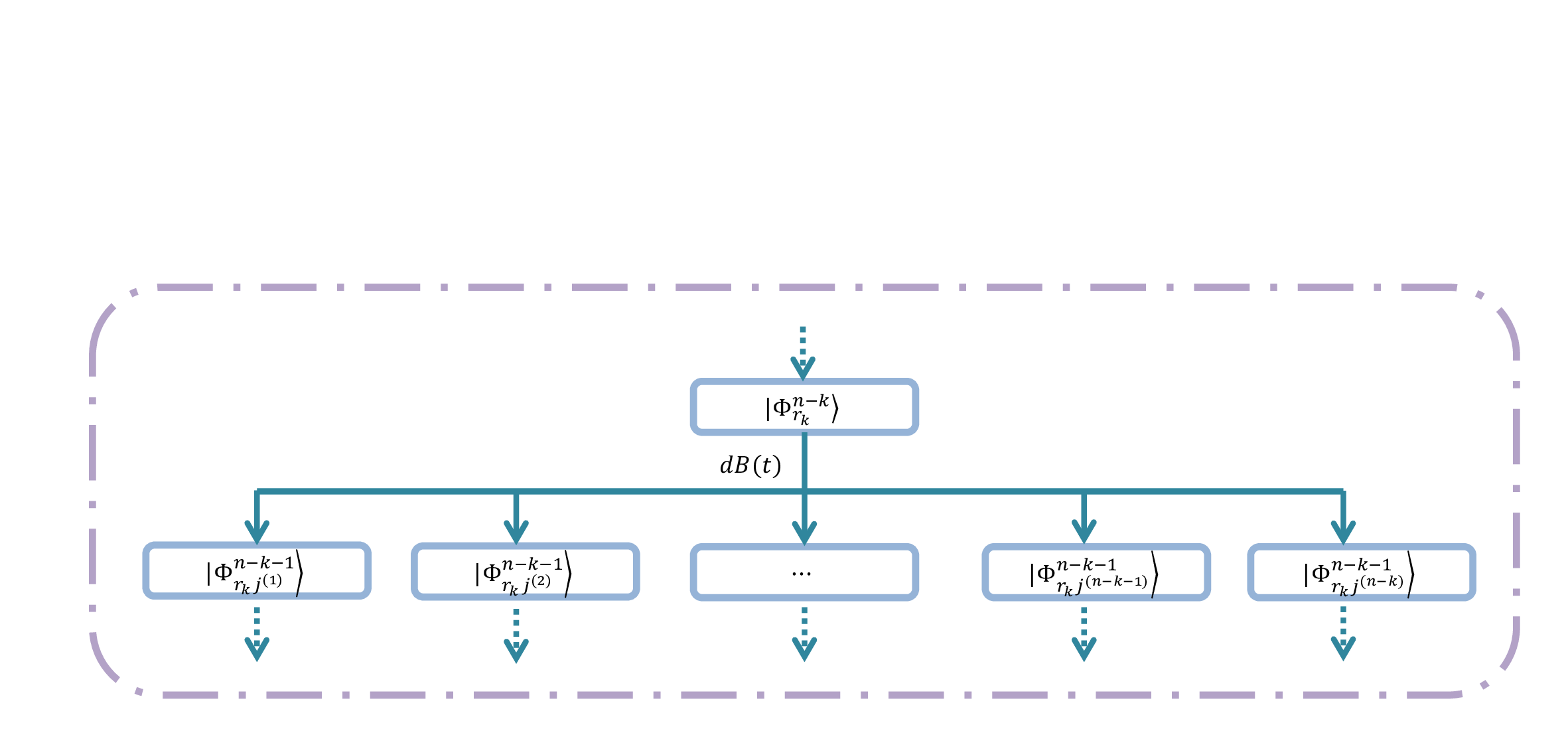}
\caption{The ``zoom-in'' version of figure \ref{diagram_total} .  The action of ${\rm d} B(t)$ on the state $|\Phi_{r_k}^{n-k}\r$ produces $n-k$ states, each of which describes a wavepacket containing $n-k-1$ photons. The subscripts $r_k j^{(i)}$, (i=1,\ldots n-k) are introduced at the end of the second paragraph of Section \ref{sec:multi_photon_filtering}.}
\label{diagram_part}
\end{center}
\end{figure}

From equations (\ref{eq:QDSE_X}) and (\ref{eq:B_Phi_n_k_2}), we can derive the master equation of the system $G$ driven by the $n$-photon state $|\Phi^n\rangle$ as shown by the following theorem, which is the counterpart of Theorem \ref{thm:master_Heisenberg} for the $2$-photon case.

%The following result presents the master equation of  the system $G$ driven by the $n$-photon state $|\Phi^n\rangle$, which is the counterpart of Theorem \ref{thm:master_Heisenberg} for the $2$-photon case.

\begin{theorem}\label{master_n}
The master equation in Heisenberg picture for the system $G$ driven by an input field in the $n$-photon state $|\Phi^n\rangle$ is given by the system of differential equations
\begin{eqnarray*}
\dot{\omega}_t^{n;n}(X)
&=&
\sum_{\mu=1}^n\frac{\sqrt{N^{n-1}_{\mu}}}{\sqrt{N_n}}\xi_{\mu}(t)\omega_t^{n;n-1,\mu}(\mathcal{L}_{01}(X))
+\sum_{\nu=1}^n\frac{\sqrt{N^{n-1}_{\nu}}}{\sqrt{N_n}}\xi^*_{\nu}(t)\omega_t^{n-1,\nu;n}(\mathcal{L}_{10}(X))
\nonumber
\\
&&+\sum_{\mu=1}^n\sum_{\nu=1}^n\frac{\sqrt{N^{n-1}_{\mu}}\sqrt{N^{n-1}_{\nu}}}{N_n}\xi_{\mu}(t)\xi^*_{\nu}(t)\omega_t^{n-1,\nu;n-1,\mu}(\mathcal{L}_{11}(X))+\omega_t^{n;n}(\mathcal{L}_{00}(X)),
\end{eqnarray*}
where, for subsets $l_j, r_k \subset \bar{n}$ ~ ($\forall j,k=0,\ldots,n$),
\begin{eqnarray*}
&&\dot{\omega}_t^{n-j,l_j;n-k,r_k}(X)\nonumber\\
&=&\omega_t^{n-j,l_j;n-k,r_k}(\mathcal{L}_{00}(X))+\sum_{\mu\notin r_k}\frac{\sqrt{N^{n-k-1}_{r_k\mu}}}{\sqrt{N^{n-k}_{r_k}}}\xi_{\mu}(t)\omega_t^{n-j,l_j;n-k-1,r_k\mu}(\mathcal{L}_{01}(X))
\nonumber
\\
&&+\sum_{\nu\notin l_j}\frac{\sqrt{N^{n-j-1}_{l_j\nu}}}{\sqrt{N^{n-j}_{l_j}}}\xi^*_{\nu}(t)\omega_t^{n-j-1,l_j\nu;n-k,r_k}(\mathcal{L}_{10}(X))
\nonumber
\\
&&+\sum_{\mu\notin r_k}\sum_{\nu\notin l_j}\frac{\sqrt{N^{n-k-1}_{r_k\mu}}}{\sqrt{N^{n-k}_{r_k}}}\frac{\sqrt{N^{n-j-1}_{l_j\nu}}}{\sqrt{N^{n-j}_{l_j}}}\xi_\mu(t)\xi^*_\nu(t)\omega_t^{n-j-1,l_j\nu;n-k-1,r_k\mu}(\mathcal{L}_{11}(X)),
\end{eqnarray*}
with initial conditions $\omega_0^{n-j,l_j;n-k,r_k}(X)=\langle\eta|X|\eta\rangle\langle\Phi^{n-j}_{l_j}|\Phi^{n-k}_{r_k}\rangle$.
%\begin{equation}
%\omega_0^{n-j,l_j;n-k,r_k}(X)=\langle\eta|X|\eta\rangle\langle\Phi^{n-j}_{l_j}|\Phi^{n-k}_{r_k}\rangle.\nonumber
%\end{equation}
\end{theorem}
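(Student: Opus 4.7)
The plan is to parallel the two-photon derivation (Theorem~\ref{thm:master_Heisenberg}). I would differentiate
\begin{equation*}
\omega_t^{n-j,l_j;n-k,r_k}(X) = \langle\eta\Phi^{n-j}_{l_j}|j_t(X)|\eta\Phi^{n-k}_{r_k}\rangle
\end{equation*}
term by term using the Heisenberg QSDE~(\ref{eq:QDSE_X}), and then push each of the four field increments onto the Fock vectors via (\ref{eq:B_Phi_n_k}) and its Hermitian adjoint. Concretely, taking adjoints of (\ref{eq:B_Phi_n_k}) yields
\begin{equation*}
\langle \Phi^{n-j}_{l_j}|dB^\ast(t) = \sum_{\nu\notin l_j}\frac{\sqrt{N^{n-j-1}_{l_j\nu}}}{\sqrt{N^{n-j}_{l_j}}}\xi^\ast_\nu(t)\langle\Phi^{n-j-1}_{l_j\nu}|\,dt,
\end{equation*}
while the gauge increment $d\Lambda(t)$ is handled by its Hudson--Parthasarathy action on photon states (equivalently, by $\Lambda(t)B^\ast(\xi)=B^\ast(\chi_{[0,t]}\xi)+B^\ast(\xi)\Lambda(t)$), which in the matrix element produces the double sum over $\mu\notin r_k$ and $\nu\notin l_j$ with exactly the product of normalisation prefactors appearing in the statement.

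Once these three identities are in place, substituting $dj_t(X) = j_t(\mathcal{L}_{00}(X))\,dt+j_t(\mathcal{L}_{01}(X))\,dB(t)+j_t(\mathcal{L}_{10}(X))\,dB^\ast(t)+j_t(\mathcal{L}_{11}(X))\,d\Lambda(t)$ into $d\omega_t^{n-j,l_j;n-k,r_k}(X)$ is mechanical: the $\mathcal{L}_{00}$ term survives as is; the $\mathcal{L}_{01}$ term picks up a single sum over $\mu\notin r_k$ from (\ref{eq:B_Phi_n_k}); the $\mathcal{L}_{10}$ term picks up a single sum over $\nu\notin l_j$ from its adjoint; and the $\mathcal{L}_{11}$ term picks up the double sum from the action of $d\Lambda(t)$. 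Dividing by $dt$ and grouping the four contributions reproduces the four lines in the theorem. The initial condition $\omega_0^{n-j,l_j;n-k,r_k}(X)=\langle\eta|X|\eta\rangle\langle\Phi^{n-j}_{l_j}|\Phi^{n-k}_{r_k}\rangle$ is then immediate from $j_0(X)=X\otimes I$.

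A more elegant alternative is to generalise the embedding of Theorem~\ref{thm:master_extended}: introduce the auxiliary space $\mbb{C}^{2^n}$ with orthonormal basis $\{|e_r\rangle:r\subset\bar n\}$, form the superposition $|\Sigma\rangle=\sum_{r\subset\bar n}\alpha_r|e_r\,\eta\,\Phi^{n-|r|}_r\rangle$, and define superoperators $\mathcal{K}_{jk}$ on $\mbb{C}^{2^n}$ whose matrix elements $\langle e_{l_j}|\mathcal{K}_{jk}(A)|e_{r_k}\rangle$ encode the ratios $\sqrt{N^{n-k-1}_{r_k\mu}/N^{n-k}_{r_k}}\,\xi_\mu(t)$ appearing above, so that the analogues of (\ref{eq_K_00_A})--(\ref{eq_K_11_A}) hold. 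A single compact master equation $\dot{\tilde\omega}_t(A\otimes X)=\tilde\omega_t(\mathcal{G}(A\otimes X))$ with $\mathcal{G}(A\otimes X)=\sum_{j,k=0}^1\mathcal{K}_{jk}(A)\otimes\mathcal{L}_{jk}(X)$ then delivers the whole hierarchy by specialising $A=|e_{l_j}\rangle\langle e_{r_k}|$ and rescaling, exactly as in Remark~\ref{rem:master}.

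The main technical obstacle in either route is the combinatorial bookkeeping of the normalisations $N^{n-k}_{r_k}$: because the single-photon profiles $\xi_i$ need not be mutually orthogonal, $N^{n-k}_{r_k}$ is a non-trivial permanent-like sum of products of overlaps $\langle\xi_i|\xi_j\rangle$, and checking that the ratios $\sqrt{N^{n-k-1}_{r_k\mu}/N^{n-k}_{r_k}}$ produced by repeated application of (\ref{eq:B_Phi_n_k}) assemble correctly into the double-sum $\mathcal{L}_{11}$ coefficient is the one place where genuine attention is required. After that the proof reduces to the same pattern that yields Theorem~\ref{thm:master_Heisenberg} in the $n=2$ case.
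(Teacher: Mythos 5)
Your proposal is correct and matches the route the paper itself indicates: the paper omits an explicit proof, but its intended derivation is exactly your first one --- differentiate $\omega_t^{n-j,l_j;n-k,r_k}(X)$ via the QSDE (\ref{eq:QDSE_X}) and push $dB(t)$, $dB^\ast(t)$, $d\Lambda(t)$ onto the photon states using Eq. (\ref{eq:B_Phi_n_k}) and its adjoint, just as Eq. (\ref{eq:aug17_temp1}) is used in the two-photon case. Your alternative via the $\mbb{C}^{2^n}$ embedding is likewise the paper's own secondary route (Section \ref{non_markovian_multi_photon} together with the analogue of Remark \ref{rem:master}), so no gap remains beyond the normalisation bookkeeping you already flag.
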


{\it Remark }9.
It is clear that the above equations couple downward to the master equation for the vacuum state. This means that for the $n$-photon state case, we should totally consider $2^{2^n}$ equations. Luckily, with the help of the conjugation property
\begin{eqnarray*}
\omega_t^{n-j,l_j;n-k,r_k}(X)=(\omega_t^{n-k,r_k;n-j,l_j}(X^*))^*,
\end{eqnarray*}
 the number of differential equations can be reduced to $\frac{2^n(2^n+1)}{2}$. For example, for the 2-photon case, there are 10 differential equations as shown in Theorem  \ref{thm:master_Heisenberg}.

{\it Remark }10.
Restricted to the Fock state $|F^n\rangle$ defined in equation (\ref{eq:Fock_state}),  the master equation (20) in \cite{baragiola:2012} can be derived from Theorem \ref{master_n}.

%%%%%%%%%%%%%%%%%%%%%%%%%%%%%%%%%
%%%%%%%%%%%%%%%%%%%%%%%%%%%%%%%%%
\subsection{Multi-photon filter: the homodyne detection case}\label{non_markovian_multi_photon}

In this subsection, we derive the quantum filter for the homodyne detection case.   Following the development in Section \ref{sec:two_photon} for the $2$-photon case, we first derive a filter for the extended system, based on which we derive the filter for the original system.

In analog to Subsection \ref {subsec:extended_system},  we construct an $2^n$-level ancilla for the $n$-photon state case. Specifically,  we choose an orthonormal basis  $\{|e^{n-k}_{r_k}\rangle, ~r_k\subset\bar{n},~k=0,\ldots n\}$ for the vector space $\mbb{C}^{2^n}$ which is defined in the following way. Each $|e^{n-k}_{r_k}\rangle $ has one and only one non-zero entry (which is 1) at the $m$-th location (counted from the top to the bottom). More precisely, if $k=0$, then $m=1$, the vector $|e^{n}_{r_0}\rangle =[1, 0, \ldots, 0]^T$. For $k\geq1$, $m=C_n^0+C_n^1+\cdots+C_n^{k-1}+\Gamma(r_k)$, where $\Gamma(r_k)$ represents the location of the set $r_k \subset \bar{n}$ in the ordered collection of all subsets of $\bar{n}$ having $k$ elements. Here the word ``ordered'' means the lexicographical order \cite{Skiena:1990}. For example, $\{1,2,3\} \prec \{1,2,4\} \prec \{1,3,4\} \prec \{2,3,4\}$.

The extended system is initialized in the superposition state  $|\Sigma^n\rangle \in \mbb{C}^{2^n}\otimes\mathsf{H}_S\otimes\mathsf{F}$:
\begin{eqnarray}\label{sigman}
|\Sigma^n\rangle \triangleq \sum_{k=0}^n\sum_{r_k\subset{\bar{n}}}\alpha^{n-k}_{r_k}|e^{n-k}_{r_k}\eta\Phi^{n-k}_{r_k}\rangle,
\end{eqnarray}
where $\alpha^{n-k}_{r_k}\  (k=0,\ldots,n, r_k\subset \bar{n})$ are arbitrary nonzero numbers that satisfy the normalization condition
$\sum_{k=0}^n\sum_{r_k\subset{\bar{n}}}|\alpha^{n-k}_{r_k}|^2=1$.

For an arbitrary $2^n\times 2^n$ complex matrix $A$ on $\mbb{C}^{2^n}$ and an arbitrary operator $X$ on $H_S$, the expectation with respect to the superposition state $|\Sigma^n\rangle$ is
\begin{equation}
\tilde{\omega}^n_t(A\otimes X) \triangleq \mathbb{E}_{\Sigma^n}[A\otimes j_t(X)].\nonumber
\end{equation}

Define superoperators $\mathcal{K}_{00}^n(A),~\mathcal{K}_{01}^n(A), ~\mathcal{K}_{10}^n(A)$, and $\mathcal{K}_{11}^n(A)$ in the similar way as in  the 2-photon case, cf. (\ref{eq_K_00})--(\ref{eq_K_11}), i.e.,
\begin{eqnarray}\label{K_00_N}
&&\mathcal{K}_{00}^n(A)= A, ~~  \mathcal{K}_{01}^n(A)|\Sigma^n\rangle {\rm d} t=A {\rm d} B(t)|\Sigma^n\rangle,
\\
&&\mathcal{K}_{10}^n(A)=(\mathcal{K}_{01}^n(A^*))^*, ~~ \mathcal{K}_{11}^n(A)=\mathcal{K}^n_{10}(\mathcal{K}^n_{01}(A)).\nonumber
\end{eqnarray}
%\begin{eqnarray*}
%&&\mathcal{K}_{00}^n(A)= A,\\
%&&\mathcal{K}_{01}^n(A)|\Sigma^n\rangle {\rm d} t=A {\rm d} B(t)|\Sigma^n\rangle,\\
%&&\mathcal{K}_{10}^n(A)=(\mathcal{K}_{01}^n(A^*))^*,\\
%&&\mathcal{K}_{11}^n(A)=\mathcal{K}^n_{10}(\mathcal{K}^n_{01}(A)).
%\end{eqnarray*}
Then the master equations for $\tilde{\omega}_t^n(A\otimes X)$ are given by the following result.
\begin{theorem}
The expectation $\tilde{\omega}^n_t(A\otimes X)$ for the extended system evolves according to $\dot{\tilde{\omega}}^n_t(A\otimes X)=\tilde{\omega}^n_t(\mathcal{G}^n(A\otimes X))$,
%\begin{eqnarray*}
%\dot{\tilde{\omega}}^n_t(A\otimes X)=\tilde{\omega}^n_t(\mathcal{G}^n(A\otimes X)),
%\end{eqnarray*}
where %$\mathcal{G}^n(A\otimes X)\triangleq\sum_{j,k=0}^1\mathcal{K}_{jk}^n(A)\otimes\mathcal{L}_{jk}(X)$
\begin{eqnarray} \label{G^n_AX}
\mathcal{G}^n(A\otimes X)\triangleq\sum_{j,k=0}^1\mathcal{K}_{jk}^n(A)\otimes\mathcal{L}_{jk}(X)
\end{eqnarray}
with $\mathcal{L}_{jk}(X)$ defined in equations (\ref{eq:L_00})--(\ref{eq:L_11}).
\end{theorem}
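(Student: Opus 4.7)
The plan is to generalise the argument of Theorem~\ref{thm:master_extended} to the $n$-photon setting; the structure of the computation is identical, with all the new combinatorial data absorbed into the definitions of $|\Sigma^n\rangle$ and the superoperators $\mathcal{K}^n_{jk}$. I would begin from $\tilde{\omega}^n_t(A\otimes X)=\langle \Sigma^n|\,A\otimes j_t(X)\,|\Sigma^n\rangle$ and differentiate using the Heisenberg QSDE (\ref{eq:QDSE_X}) for $j_t(X)$, which produces
\begin{eqnarray*}
d\tilde{\omega}^n_t(A\otimes X)
&=&\mathbb{E}_{\Sigma^n}\bigl[A\otimes j_t(\mathcal{L}_{00}(X))\bigr]\,dt
+\mathbb{E}_{\Sigma^n}\bigl[A\otimes j_t(\mathcal{L}_{01}(X))\,dB(t)\bigr] \\
&&+\mathbb{E}_{\Sigma^n}\bigl[A\otimes j_t(\mathcal{L}_{10}(X))\,dB^*(t)\bigr]
+\mathbb{E}_{\Sigma^n}\bigl[A\otimes j_t(\mathcal{L}_{11}(X))\,d\Lambda(t)\bigr].
\end{eqnarray*}

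The three noise-differential terms are then converted to ordinary $dt$ terms via the multi-photon analogues of Eqs.~(\ref{eq_K_00_A})--(\ref{eq_K_11_A}), namely
\begin{eqnarray*}
\mathbb{E}_{\Sigma^n}[A\otimes j_t(Y)\,dB(t)]&=&\tilde{\omega}^n_t(\mathcal{K}^n_{01}(A)\otimes Y)\,dt, \\
\mathbb{E}_{\Sigma^n}[A\otimes j_t(Y)\,dB^*(t)]&=&\tilde{\omega}^n_t(\mathcal{K}^n_{10}(A)\otimes Y)\,dt, \\
\mathbb{E}_{\Sigma^n}[A\otimes j_t(Y)\,d\Lambda(t)]&=&\tilde{\omega}^n_t(\mathcal{K}^n_{11}(A)\otimes Y)\,dt.
\end{eqnarray*}
These follow from the stated defining relation $\mathcal{K}^n_{01}(A)|\Sigma^n\rangle\,dt=A\,dB(t)|\Sigma^n\rangle$ (read as an identity on $\mathbb{C}^{2^n}\otimes\mathsf{H}_S\otimes\mathsf{H}_F$), its bra-side adjoint for $dB^*$, and the Ito composition $\mathcal{K}^n_{11}=\mathcal{K}^n_{10}\circ\mathcal{K}^n_{01}$ for $d\Lambda$, together with the adaptedness of $j_t(Y)$, which allows $A$ and $j_t(Y)$ to be slid past the noise differential when acting on the right-most copy of $|\Sigma^n\rangle$. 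Collecting the four terms and summing then yields $d\tilde{\omega}^n_t(A\otimes X)=\sum_{j,k=0}^{1}\tilde{\omega}^n_t(\mathcal{K}^n_{jk}(A)\otimes\mathcal{L}_{jk}(X))\,dt=\tilde{\omega}^n_t(\mathcal{G}^n(A\otimes X))\,dt$, which is the claimed master equation.

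The only substantive new content relative to the two-photon proof is verifying that the defining relation for $\mathcal{K}^n_{01}(A)$ uniquely specifies a $2^n\times 2^n$ matrix. Using Eq.~(\ref{eq:B_Phi_n_k}) and matching the coefficients of $|e^{n-k}_{r_k}\rangle\otimes|\eta\rangle\otimes|\Phi^{n-k}_{r_k}\rangle$ on the two sides of the relation, one obtains the explicit form
\begin{equation*}
\mathcal{K}^n_{01}(A)=\sum_{k=1}^{n}\sum_{r_k\subset\bar{n}}\sum_{\mu\in r_k}\frac{\alpha^{n-k+1}_{r_k\setminus\mu}}{\alpha^{n-k}_{r_k}}\,\frac{\sqrt{N^{n-k}_{r_k}}}{\sqrt{N^{n-k+1}_{r_k\setminus\mu}}}\,\xi_\mu(t)\,A|e^{n-k+1}_{r_k\setminus\mu}\rangle\langle e^{n-k}_{r_k}|,
\end{equation*}
which reduces to (\ref{eq_K_01}) when $n=2$ and is well-defined because all the $\alpha^{n-k}_{r_k}$ are nonzero by assumption. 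With this matrix in hand, the remaining steps are a line-by-line transcription of the proof of Theorem~\ref{thm:master_extended}. The main (purely bookkeeping) obstacle is thus the combinatorial tracking of subsets $r_k$ when writing out $\mathcal{K}^n_{10}$ and $\mathcal{K}^n_{11}$ and confirming that the expectation identities above do collapse to the single sum $\sum_{j,k=0}^{1}$ on the right-hand side; no new analytic idea beyond the two-photon case is required.
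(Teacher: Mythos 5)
Your proposal is correct and essentially coincides with the paper's treatment: the paper states this theorem without proof, relying on exactly the line-by-line generalisation of the proof of Theorem \ref{thm:master_extended} that you carry out (differentiate via the QSDE (\ref{eq:QDSE_X}), then convert the $dB$, $dB^{*}$, $d\Lambda$ contributions using the $n$-photon analogues of Eqs. (\ref{eq_K_00_A})--(\ref{eq_K_11_A}), with adaptedness justifying sliding $A\otimes j_t(Y)$ past the noise increments). Your explicit expression for $\mathcal{K}^n_{01}(A)$, obtained by matching coefficients against Eq. (\ref{eq:B_Phi_n_k}), is also correct and reduces to Eq. (\ref{eq_K_01}) when $n=2$, filling in a detail the paper leaves implicit.
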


In the homodyne detection case, define the quantum conditional expectation for the extended system to be $\tilde{\pi}^n_t(A\otimes X)\triangleq\mathbb{E}_{\Sigma^n}[A\otimes j_t(X)|I\otimes \mathscr{Y}(t)]$.
%\begin{equation}
%\tilde{\pi}^n_t(A\otimes X)\triangleq\mathbb{E}_{\Sigma^n}[A\otimes j_t(X)|I\otimes \mathscr{Y}(t)].\nonumber
%\end{equation}
%
The following result presents the quantum filtering equation for the extended system, which is the counterpart of Theorem \ref{thm:filter_extended} for the $2$-photon case.
\begin{theorem}\label{thm_nov17_n_extended}
In the case of homodyne detection, the conditional expectation $\tilde{\pi}_t^n(A\otimes X)$ for the extended system satisfies
\begin{eqnarray*}
{\rm d}\tilde{\pi}_t^n(A\otimes X)=\tilde{\pi}_t^n(\mathcal{G}^n(A\otimes X)){\rm d} t+\tilde{\mathcal{H}}^n_t(A\otimes X){\rm d}\tilde{W}^n(t),
\end{eqnarray*}
where the operator $\mathcal{G}^n(A\otimes X)$ is defined in equation (\ref{G^n_AX}), and
\begin{eqnarray*}
\tilde{\mathcal{H}}_t^n(A\otimes X)=\tilde{M}_t^n(A\otimes X)-\tilde{\pi}_t^n(A \otimes X)\tilde{M}_t^n(I\otimes I)
\end{eqnarray*}
with
\begin{equation}\label{M^n_AX}
\tilde{M}_t^n(A\otimes X)
\triangleq \tilde{\pi}_t^n(\mathcal{K}_{00}^n(A)\otimes (XL+L^*X))+\tilde{\pi}_t^n(\mathcal{K}_{01}^n(A)\otimes XS)+\tilde{\pi}_t^n(\mathcal{K}_{10}^n(A)\otimes S^*X).
\end{equation}
The innovation process $\tilde{W}^n(t)$, defined via ${\rm d}\tilde{W}^n(t)=I\otimes {\rm d} Y(t)-\tilde{M}_t^n(I\otimes I){\rm d} t$, is a Wiener process with respect to the state $|\Sigma^n\rangle$.
\end{theorem}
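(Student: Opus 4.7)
The plan is to mimic the characteristic function (reference probability) argument used earlier for the $2$-photon extended filter, since the $n$-photon extended system was defined precisely so that the same machinery carries over. Specifically, I would postulate the ansatz
\begin{eqnarray*}
d\tilde{\pi}_t^n(A\otimes X)=\tilde{\mathcal{F}}_t^n(A\otimes X)\,dt+\tilde{\mathcal{H}}_t^n(A\otimes X)\,I\otimes dY(t),
\end{eqnarray*}
with unknown superoperators $\tilde{\mathcal{F}}_t^n$ and $\tilde{\mathcal{H}}_t^n$, and introduce the classical exponential test process $c_f(t)=\exp(\int_0^t f(s)dY(s)-\tfrac{1}{2}\int_0^t f^2(s)ds)$ for arbitrary $f\in L_2(\mbb{R}^+,\mbb{C})$, so that $I\otimes c_f(t)$ is adapted to $I\otimes\mathscr{Y}(t)$ and satisfies $dc_f=fc_f\,dY$.

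The key identity is the defining property of conditional expectation
\begin{eqnarray*}
\mbb{E}_{\Sigma^n}\bigl[(A\otimes j_t(X))(I\otimes c_f(t))\bigr]
=\mbb{E}_{\Sigma^n}\bigl[\tilde{\pi}_t^n(A\otimes X)(I\otimes c_f(t))\bigr].
\end{eqnarray*}
I would differentiate both sides using the quantum Ito rules (\ref{eq:ito}) applied to (\ref{eq:QDSE_X}) and (\ref{eq:Y_W}), together with the extended-state Ito relations
$\mbb{E}_{\Sigma^n}[A\otimes dB(t)]=\mbb{E}_{\Sigma^n}[\mathcal{K}_{01}^n(A)]dt$,
$\mbb{E}_{\Sigma^n}[A\otimes dB^\ast(t)]=\mbb{E}_{\Sigma^n}[\mathcal{K}_{10}^n(A)]dt$, and
$\mbb{E}_{\Sigma^n}[A\otimes d\Lambda(t)]=\mbb{E}_{\Sigma^n}[\mathcal{K}_{11}^n(A)]dt$, which are built into the very definitions of $\mathcal{K}_{jk}^n$ (via $\mathcal{K}_{01}^n(A)|\Sigma^n\rangle dt=AdB(t)|\Sigma^n\rangle$) and the action formula (\ref{eq:B_Phi_n_k}) combined with the Ito table. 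On the left this produces the drift $\tilde{\pi}_t^n(\mathcal{G}^n(A\otimes X))$ and an additional $f(t)c_f(t)$ term carrying $\tilde{M}_t^n(A\otimes X)$; on the right it produces $\tilde{\mathcal{F}}_t^n+\tilde{\mathcal{H}}_t^n\tilde{M}_t^n(I\otimes I)$ coupled to $c_f$ and $\{\tilde{\pi}_t^n(A\otimes X)\tilde{M}_t^n(I\otimes I)+\tilde{\mathcal{H}}_t^n(A\otimes X)\}$ coupled to $fc_f$. Because $f$ is arbitrary, comparison of the $c_f$ and $fc_f$ coefficients forces
\begin{eqnarray*}
\tilde{\mathcal{H}}_t^n(A\otimes X)=\tilde{M}_t^n(A\otimes X)-\tilde{\pi}_t^n(A\otimes X)\tilde{M}_t^n(I\otimes I),
\end{eqnarray*}
and $\tilde{\mathcal{F}}_t^n(A\otimes X)=\tilde{\pi}_t^n(\mathcal{G}^n(A\otimes X))-\tilde{\mathcal{H}}_t^n(A\otimes X)\tilde{M}_t^n(I\otimes I)$, which after substitution into the ansatz and rewriting in terms of $d\tilde{W}^n(t)=I\otimes dY(t)-\tilde{M}_t^n(I\otimes I)dt$ yields the claimed filter.

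For the Wiener property of $\tilde{W}^n(t)$, I would first check the martingale condition $\mbb{E}_{\Sigma^n}[(\tilde{W}^n(t)-\tilde{W}^n(s))(I\otimes K)]=0$ for every $K\in\mathscr{Y}(s)$ and $0\leq s\leq t$, using the definition of $\tilde{M}_t^n(I\otimes I)$ together with the tower property and the relations (\ref{eq:Y_W}) and (\ref{eq_K_00_A})--(\ref{eq_K_11_A}) in their $n$-photon form; this computation is essentially line-by-line parallel to the one already carried out in the $2$-photon proof. The identity $d\tilde{W}^n(t)d\tilde{W}^n(t)=dt$ is immediate from $dY(t)dY(t)=dt$, so Levy's characterization theorem finishes the argument.

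The main obstacle is purely combinatorial bookkeeping rather than genuinely new analysis: one must verify that the single relation $\mathcal{K}_{01}^n(A)|\Sigma^n\rangle dt=AdB(t)|\Sigma^n\rangle$ correctly captures the full hierarchy (\ref{eq:B_Phi_n_k}) when expanded through every level of the photon-number tree in Fig.~\ref{diagram_total}, and that $\mathcal{K}_{11}^n=\mathcal{K}_{10}^n\circ\mathcal{K}_{01}^n$ correctly reproduces the double application $AdB(t)dB^\ast(t)$ compatibly with $d\Lambda(t)=dB^\ast(t)dB(t)$ on the superposition state. Once these identities are confirmed, the characteristic-function argument is formally identical to the $2$-photon case, so no new conceptual input is required.
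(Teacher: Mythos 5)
Your proposal is correct and follows essentially the same route the paper uses: the paper proves the $2$-photon extended-system filter by exactly this characteristic-function (reference probability) argument with the exponential test process $c_f(t)$, coefficient comparison in $c_f$ and $fc_f$, and the martingale-plus-Levy step, and the $n$-photon theorem is intended as the verbatim extension of that argument with $\mathcal{K}_{jk}$ replaced by $\mathcal{K}_{jk}^n$. Your remark that the only real work is checking that $\mathcal{K}_{01}^n$ and $\mathcal{K}_{11}^n=\mathcal{K}_{10}^n\circ\mathcal{K}_{01}^n$ encode the full action of $dB(t)$ and $d\Lambda(t)$ on $|\Sigma^n\rangle$ via Eq.~(\ref{eq:B_Phi_n_k}) is exactly the right point, and no further input is needed.
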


{\em Proof}.
In analog to the proof of Theorem \ref{thm:filter_extended}, we use the characteristic function method and postulate the form of the filter as
\begin{eqnarray}\label{newe}
{\rm d}\tilde{\pi}_t^n(A\otimes X)=
\tilde{\mathcal{F}}_t^n(A\otimes X)){\rm d} t+\tilde{\mathcal{H}}^n_t(A\otimes X){\rm d}Y(t)
\end{eqnarray}
with the expressions for $\tilde{\mathcal{F}}^n_t(A\otimes X)$ and $\tilde{\mathcal{H}}^n_t(A\otimes X)$ to be determined. Define a stochastic process $c_f(t)={\rm e}^{\int_0^t f(s){\rm d}Y(s)-\frac{1}{2}\int_0^t f^2(s){\rm d}s}$ for an arbitrary function $f\in L_2(\mbb{R}^+, \mbb{C})$. Then it can be verified that this stochastic process $c_f(t)$ satisfies ${\rm d}c_f(t)=f(t)c_f(t){\rm d}Y(t)$ with initial condition $c_f(0)=1$. Since $I\otimes c_f(t)$ is adapted to $I\otimes\mathscr{Y}(t)$, the property of conditional expectations implies that
\begin{eqnarray*}
\mathbb{E}_{\Sigma^n}[(A\otimes j_t(X)) (I\otimes c_f(t))]=\mathbb{E}_{\Sigma^n}[\tilde{\pi}_t^n(A\otimes X)(I\otimes c_f(t))].
\end{eqnarray*}

By differentiating both sides of the above equation and using the properties of conditional expectations, we obtain
\begin{eqnarray*}
&&{\rm d}\mathbb{E}_{\Sigma^n}[A\otimes (j_t(X)c_f(t))]\nonumber\\
&=&\mathbb{E}_{\Sigma^n}[I\otimes c_f(t)A\otimes {\rm d}j_t(X)]+\mathbb{E}_{\Sigma^n}[I\otimes f(t)c_f(t)(A\otimes j_t(X)+A\otimes {\rm d}j_t(X)){\rm d}Y(t)]\nonumber\\
&=&\mathbb{E}_{\Sigma^n}[I\otimes c_f(t)\tilde{\pi}^n_t(\mathcal{G}^n(A\otimes X))]{\rm d}t+\mathbb{E}_{\Sigma^n}[I\otimes f(t)c_f(t)\tilde{M}_t^n(A\otimes X)]{\rm d}t,
\end{eqnarray*}
and
\begin{eqnarray*}
&&{\rm d}\mathbb{E}_{\Sigma^n}[\tilde{\pi}_t^n(A\otimes X)(I\otimes c_f(t))]\nonumber\\
&=&\mathbb{E}_{\Sigma^n}[I\otimes c_f(t){\rm d}\tilde{\pi}_t^n(A\otimes X)]+\mathbb{E}_{\Sigma^n}[I\otimes f(t)c_f(t)(\tilde{\pi}_t^n(A\otimes X)+{\rm d}\tilde{\pi}_t^n(A\otimes X)){\rm d}Y(t)]\nonumber\\
&=&\mathbb{E}_{\Sigma^n}[I\otimes c_f(t)(\tilde{F}^n_t(A\otimes X)+\tilde{H}_t^n(A\otimes X)\tilde{M}_t^n(I\otimes I))]{\rm d}t\nonumber\\
&&+\mathbb{E}_{\Sigma^n}[I\otimes f(t)c_f(t)(\tilde{\pi}^n_t(A\otimes X)\tilde{M}^n_t(I\otimes I)+\tilde{H}_t^n(A\otimes X))]{\rm d}t.
\end{eqnarray*}
Comparing the coefficients of $c_f(t)$ and $f(t)c_f(t)$ respectively, and using the arbitrariness of the function $f(t)$, we find the exact forms of $\tilde{\mathcal{F}}^n_t(A\otimes X)$ and $\tilde{\mathcal{H}}^n_t(A\otimes X)$. Putting them back into equation (\ref{newe}) yields the filter equation in Theorem \ref{thm_nov17_n_extended}.

Now we prove that the innovation process $\tilde{W}^n(t)$ is a Wiener process with respect to the state $|\Sigma^n\rangle$. Firstly, its martingale property is verified. For all $K\in\mathscr{Y}(s),~0\leq s\leq t$,
\begin{eqnarray*}
&&\mathbb{E}_{\Sigma^n}[(\tilde{W}^n(t)-\tilde{W}^n(s))(I\otimes K)]\nonumber\\
&=&\mathbb{E}_{\Sigma^n}[I\otimes (Y(t)-Y(s))(I\otimes K)]-\mathbb{E}_{\Sigma^n}[\int_s^t\tilde{M}^n_r(I\otimes I)(I\otimes K){\rm d} r]\nonumber\\
&=&\mathbb{E}_{\Sigma^n}[I\otimes (Y(t)-Y(s))(I\otimes K)]\nonumber\\
&&-\mathbb{E}_{\Sigma^n}[\int_s^t\tilde{\pi}^n_r(\mathcal{K}_{00}^n(I)\otimes (L+L^*)+\mathcal{K}_{01}^n(I)\otimes S+\mathcal{K}_{10}^n(I)\otimes S^*){\rm d}r(I\otimes K)]\nonumber\\
&=&0.
\end{eqnarray*}
Secondly, it can be readily proved that ${\rm d}\tilde{W}^n(t){\rm d}\tilde{W}^n(t)={\rm d}t$. Then Levy's Theorem indicates that $\tilde{W}^n(t)$ is a Wiener process.
\endproof

In analog to equation (\ref{exp_con_2}), for all $l_j,r_k \subset \bar{n}$, define the conditional quantities $\pi_t^{n-j,l_j;n-k,r_k}(X)$  by means of
\begin{eqnarray}\label{neww2}
\qquad\qquad (I\otimes \pi^{n-j,l_j;n-k,r_k}(X))\tilde{\pi}^n_t(|e^n\rangle\langle e^n|\otimes I)\triangleq\frac{|\alpha^{n}|^2}{(\alpha^{n-j}_{l_j})^\ast\alpha^{n-k}_{r_k}}\tilde{\pi}_t^n(|e^{n-j}_{l_j}\rangle\langle e^{n-k}_{r_k}|\otimes X).
\end{eqnarray}
Similar to (\ref{eq:pi_j_t_X}), it can be shown that
\begin{eqnarray*} \label{eq:nov11_E}
\mathbb{E}_{n;n}[\pi_t^{n-j,l_j;n-k,r_k}(X)K]=\mathbb{E}_{n-j,l_j;n-k,r_k}[j_t(X)K], ~~~~\forall ~K\in\mathscr{Y}(t).
\end{eqnarray*}
In particular, when $j=k=0$,  the above equation reduces to
\begin{equation}
\mathbb{E}_{n;n}[\pi_t^{n;n}(X)K]=\mathbb{E}_{n;n}[j_t(X)K], ~~~~~~~\forall ~K\in\mathscr{Y}(t).\nonumber
\end{equation}
Since $K$ is arbitrary, $\pi_t^{n;n}(X)$ is exactly the conditional system operator of the original system $G$ driven by the $n$-photon state $|\Phi^n\r$, namely,  equation (\ref{eq:def}).
%\begin{equation}
%\pi^{n;n}_t(X)=\mathbb{E}_{n;n}[j_t(X)|\mathscr{Y}(t)].\nonumber
%\end{equation}

%The following result presents the quantum filter in the homodyne detection case for the quantum system $G$ driven by the $n$-photon state $|\Phi^n\r$.

\begin{theorem}\label{homodyne_n_fiter_general}
In the case of homodyne detection, the quantum filter for the conditional expectation $\pi^{n;n}_t(X)$ is given by the following Ito differential equation
%\begin{eqnarray*}
%{\rm d}\pi_t^{n;n}(X)&=&\biggl[\sum_{\mu=1}^n\frac{\sqrt{N^{n-1}_{\mu}}}{\sqrt{N_{n}}}\xi_{\mu}(t)\pi_t^{n;n-1,\mu}(\mathcal{L}_{01}(X))+\sum_{\nu=1}^n\frac{\sqrt{N^{n-1}_{\nu}}}{\sqrt{N_{n}}}\xi_{\nu}^*(t)\pi_t^{n-1,\nu;n}(\mathcal{L}_{10}(X))\nonumber\\
%&&+\sum_{\mu=1}^n\sum_{\nu=1}^n\frac{\sqrt{N^{n-1}_{\mu}}\sqrt{N^{n-1}_{\nu}}}{N_{n}}\xi_{\mu}(t)\xi_\nu^*(t)\pi_t^{n-1,\nu;n-1,\mu}(\mathcal{L}_{11}(X))+\pi_t^{n;n}(\mathcal{L}_{00}(X))\biggr]{\rm d} t\nonumber\\
%&&+\biggl[M_t^{n;n}(X)-\pi_t^{n;n}(X)M_t^{n;n}(I)\biggr]{\rm d} W^n(t),
%\nonumber
%\end{eqnarray*}
\begin{eqnarray*}
{\rm d}\pi_t^{n;n}(X)=L_t^{n;n}(X){\rm d} t+\biggl[M_t^{n;n}(X)-\pi_t^{n;n}(X)M_t^{n;n}(I)\biggr]{\rm d} W^n(t).
\nonumber
\end{eqnarray*}
And more generally for subsets $l_j, r_k \subset \bar{n}$ ~ ($\forall j,l=0,\ldots,n$),
\begin{eqnarray*}
{\rm d}\pi_t^{n-j,l_j;n-k,r_k}(X)&=&L_t^{n-j,l_j;n-k,r_k}(X){\rm d}t
\nonumber
\\
&&+\biggl[M_t^{n-j,l_j;n-k,r_k}(X)-\pi_t^{n-j,l_j;n-k,r_k}(X)M_t^{n;n}(I)\biggr]{\rm d} W^n(t).\nonumber
\end{eqnarray*}
Here,
%\begin{eqnarray*}
%&&M_t^{n-j,l_j;n-k,r_k}(X)\nonumber\\
%&\triangleq&
% \pi_t^{n-j,l_j;n-k,r_k}(XL+L^*X)+\sum_{\mu\notin r_k}\frac{\sqrt{N_{r_k\mu}^{n-k-1}}}{\sqrt{N_{r_k}^{n-k}}} \xi_{\mu}(t)\pi_t^{n-j,l_j;n-k-1,r_k\mu}(XS)
% \nonumber
% \\
%&&+\sum_{\nu\notin l_j}\frac{\sqrt{N_{l_j\nu}^{n-j-1}}}{\sqrt{N_{l_j}^{n-j}}} \xi_{\nu}^*(t)\pi_t^{n-j-1,l_j\nu;n-k,r_k}(S^*X),
%\end{eqnarray*}
\begin{eqnarray*}
&&L_t^{n-j,l_j;n-k,r_k}(X)\nonumber\\
&\triangleq&\pi_t^{n-j,l_j;n-k,r_k}(\mathcal{L}_{00}(X))+\sum_{\mu\notin r_k}\frac{\sqrt{N_{r_k\mu}^{n-k-1}}}{\sqrt{N_{r_k}^{n-k}}} \xi_{\mu}(t)\pi_t^{n-j,l_j;n-k-1,r_k\mu}(\mathcal{L}_{01}(X))\nonumber\\
&&+\sum_{\nu\notin l_j}\frac{\sqrt{N_{l_j\nu}^{n-j-1}}}{\sqrt{N_{l_j}^{n-j}}} \xi_{\nu}^*(t)\pi_t^{n-j-1,l_j\nu;n-k,r_k}(\mathcal{L}_{10}(X))
\nonumber
\\
&&+\sum_{\mu\notin r_k}\sum_{\nu\notin l_j}\frac{\sqrt{N_{r_k\mu}^{n-k-1}}}{\sqrt{N_{r_k}^{n-k}}}\frac{\sqrt{N_{l_j\nu}^{n-j-1}}}{\sqrt{N_{l_j}^{n-j}}} \xi_{\mu}(t)\xi_{\nu}^*(t)\pi_t^{n-j-1,l_j\nu;n-k-1,r_k\mu}(\mathcal{L}_{11}(X))\biggr],
\label{eq:aug26_temp_1}
\end{eqnarray*}
and
\begin{eqnarray*}
&&M_t^{n-j,l_j;n-k,r_k}(X)\nonumber\\
&\triangleq&
 \pi_t^{n-j,l_j;n-k,r_k}(XL+L^*X)+\sum_{\mu\notin r_k}\frac{\sqrt{N_{r_k\mu}^{n-k-1}}}{\sqrt{N_{r_k}^{n-k}}} \xi_{\mu}(t)\pi_t^{n-j,l_j;n-k-1,r_k\mu}(XS)
 \nonumber
 \\
&&+\sum_{\nu\notin l_j}\frac{\sqrt{N_{l_j\nu}^{n-j-1}}}{\sqrt{N_{l_j}^{n-j}}} \xi_{\nu}^*(t)\pi_t^{n-j-1,l_j\nu;n-k,r_k}(S^*X).
\end{eqnarray*}
The innovation process $W^n(t)$ defined by ${\rm d}W^n(t)={\rm d}Y(t)-M_t^{n;n}(I){\rm d}t$, is a $\mathscr{Y}(t)$ Wiener process with respect to the $n$-photon state $|\Phi^n\rangle$.
The initial conditions are $\pi_0^{n-j,l_j;n-k,r_k}(X)=\langle\Phi^{n-j}_{l_j}|\Phi^{n-k}_{r_k}\rangle\langle\eta|X|\eta\rangle$.
%\begin{eqnarray*}
%\pi_0^{n-j,l_j;n-k,r_k}(X)=\langle\Phi^{n-j}_{l_j}|\Phi^{n-k}_{r_k}\rangle\langle\eta|X|\eta\rangle.
%\end{eqnarray*}
\end{theorem}

{\em Proof}.
We postulate that the filter for $\pi^{n-j,l_j;n-k,r_k}(X)$ has the following form
\begin{eqnarray}\label{form_cond}
{\rm d}\pi_t^{n-j,l_j;n-k,r_k}(X)=F_t^{n-j,l_j;n-k,r_k}(X){\rm d}t+H_t^{n-j,l_j;n-k,r_k}(X){\rm d}Y(t),
\end{eqnarray}
where the expressions for $F_t^{n-j,l_j;n-k,r_k}(X)$ and $H_t^{n-j,l_j;n-k,r_k}(X)$ are to be determined.

Differentiating both sides of equation (\ref{neww2}), we can get that
\begin{eqnarray}
&&\qquad \qquad \frac{|\alpha^{n}|^2}{(\alpha^{n-j}_{l_j})^\ast\alpha^{n-k}_{r_k}}{\rm d}\tilde{\pi}_t^n(|e^{n-j}_{l_j}\rangle\langle e^{n-k}_{r_k}|\otimes X)
\label{eq:nov21_1}
\\
& =&(I\otimes {\rm d}\pi_t^{n-j,l_j;n-k,r_k}(X))\tilde{\pi}_t^n(|e^n\rangle\langle e^n|\otimes I)+(I\otimes \pi_t^{n-j,l_j;n-k,r_k}(X)){\rm d}\tilde{\pi}_t^n(|e^n\rangle\langle e^n|\otimes I)
\nonumber
\\
& &+(I\otimes {\rm d}\pi_t^{n-j,l_j;n-k,r_k}(X)){\rm d}\tilde{\pi}_t^n(|e^n\rangle\langle e^n|\otimes I). \nonumber
\end{eqnarray}

%\begin{eqnarray*}
%&&{\rm d}\tilde{\pi}^n_t(|e_{l_j}^{n-j}\rangle\langle e_{r_k}^{n-k}|\otimes X)\nonumber\\
%&=&\tilde{\pi}_t^n(\mathcal{G}^n(|e_{l_j}^{n-j}\rangle\langle e_{r_k}^{n-k}|\otimes X)){\rm d}t\nonumber\\
%&&+[\tilde{M}_t^n(|e_{l_j}^{n-j}\rangle\langle e_{r_k}^{n-k}|\otimes X)-\tilde{\pi}_t^n(|e_{l_j}^{n-j}\rangle\langle e_{r_k}^{n-k}|\otimes X)\tilde{M}^n_t(I\otimes I)]{\rm d}\tilde{W}^n(t).
%\end{eqnarray*}
Based on the definitions of superoperators $\mathcal{K}_{ij}^n(A),~i,j=0,1$ in equation (\ref{K_00_N}) and conditional quantities $\pi_t^{n-j,l_j;n-k,r_k}(X)$ in equation (\ref{neww2}), the following equations can be derived for arbitrary functions $f_i(X),i=1,2,3,4$ of a system operator $X$:
\begin{eqnarray*}
&&\tilde{\pi}_t^n(\mathcal{K}_{00}^n(|e_{l_j}^{n-j}\rangle\langle e_{r_k}^{n-k}|\otimes f_1(X)))
\nonumber
\\
&=&\tilde{\pi}_t^n(|e_{l_j}^{n-j}\rangle\langle e_{r_k}^{n-k}|\otimes f_1(X))
\nonumber
\\
&=&\frac{(\alpha^{n-j}_{l_j})^\ast\alpha^{n-k}_{r_k}}{|\alpha^{n}|^2}(I\otimes \pi_t^{n-j,l_j;n-k,r_k}(f_1(X)))\tilde{\pi}_t^n(|e^n\rangle\langle e^n|\otimes I),
\nonumber
\end{eqnarray*}
\begin{eqnarray*}
&&\tilde{\pi}_t^n(\mathcal{K}_{01}^n(|e_{l_j}^{n-j}\rangle\langle e_{r_k}^{n-k}|\otimes f_2(X)))\nonumber\\
&=&\sum_{\mu\notin r_k}\frac{\alpha_{r_k}^{n-k}}{\alpha_{r_k\mu}^{n-k-1}}\frac{\sqrt{N_{r_k\mu}^{n-k-1}}}{\sqrt{N_{r_k}^{n-k}}} \xi_{\mu}(t)\tilde{\pi}_t^n(|e_{l_j}^{n-j}\rangle\langle e_{r_k\mu}^{n-k-1}|\otimes f_2(X))
\nonumber
\\
&=&\sum_{\mu\notin r_k}\frac{(\alpha^{n-j}_{l_j})^\ast\alpha^{n-k}_{r_k}}{|\alpha^{n}|^2}\frac{\sqrt{N_{r_k\mu}^{n-k-1}}}{\sqrt{N_{r_k}^{n-k}}} \xi_{\mu}(t)(I\otimes\pi_t^{n-j,l_j;n-k-1,r_k\mu}(f_2(X)))\tilde{\pi}_t^n(|e^{n}\rangle\langle e^{n}|\otimes I),
\nonumber
\end{eqnarray*}
\begin{eqnarray*}
&&\tilde{\pi}_t^n(\mathcal{K}_{10}^n(|e_{l_j}^{n-j}\rangle\langle e_{r_k}^{n-k}|\otimes f_3(X)))\nonumber\\
&=&\sum_{\nu\notin l_j}\frac{(\alpha_{l_j}^{n-j})^*}{(\alpha_{l_j\nu}^{n-j-1})^*}\frac{\sqrt{N_{l_j\nu}^{n-j-1}}}{\sqrt{N_{l_j}^{n-j}}} \xi_{\nu}^*(t)\tilde{\pi}_t^n(|e_{l_j\nu}^{n-j-1}\rangle\langle e_{r_k}^{n-k}|\otimes f_3(X))\nonumber\\
&=&\sum_{\nu\notin l_j}\frac{(\alpha^{n-j}_{l_j})^\ast\alpha^{n-k}_{r_k}}{|\alpha^{n}|^2}\frac{\sqrt{N_{l_j\nu}^{n-j-1}}}{\sqrt{N_{l_j}^{n-j}}} \xi_{\nu}^*(t)(I\otimes\pi_t^{n-j-1,l_j\nu;n-k,r_k}(f_3(X)))\tilde{\pi}_t^n(|e^{n}\rangle\langle e^{n}|\otimes I),
\nonumber
\end{eqnarray*}
\begin{eqnarray*}
&&\tilde{\pi}_t^n(\mathcal{K}_{11}^n(|e_{l_j}^{n-j}\rangle\langle e_{r_k}^{n-k}|\otimes f_4(X)))\nonumber\\
&=&\sum_{\nu\notin l_j\atop\mu\notin r_k}\frac{(\alpha_{l_j}^{n-j})^*}{(\alpha_{l_j\nu}^{n-j-1})^*}\frac{\alpha_{r_k}^{n-k}}{\alpha_{r_k\mu}^{n-k-1}}\frac{\sqrt{N_{r_k\mu}^{n-k-1}}}{\sqrt{N_{r_k}^{n-k}}} \frac{\sqrt{N_{l_j\nu}^{n-j-1}}}{\sqrt{N_{l_j}^{n-j}}} \xi_{\nu}^*(t)\xi_{\mu}(t)\tilde{\pi}_t^n(|e_{l_j\nu}^{n-j-1}\rangle\langle e_{r_k\mu}^{n-k-1}|\otimes f_4(X))\nonumber\\
&=&\sum_{\nu\notin l_j \atop \mu\notin r_k}\frac{(\alpha^{n-j}_{l_j})^\ast\alpha^{n-k}_{r_k}}{|\alpha^{n}|^2}\frac{\sqrt{N_{l_j\nu}^{n-j-1}}}{\sqrt{N_{l_j}^{n-j}}}\frac{\sqrt{N_{r_k\mu}^{m-k-1}}}{\sqrt{N_{r_k}^{n-k}}}  \xi_{\nu}^*(t)\xi_{\mu}(t)(I\otimes\pi_t^{n-j-1,l_j\nu;n-k-1,r_k\mu}(f_4(X)))\tilde{\pi}_t^n(|e^{n}\rangle\langle e^{n}|\otimes I) .\nonumber
\end{eqnarray*}
Substituting the above four equations into equations (\ref{G^n_AX}) and (\ref{M^n_AX}) we get
\begin{eqnarray*}
\tilde{\pi}^n_t(\mathcal{G}^n(|e_{l_j}^{n-j}\rangle\langle e_{r_k}^{n-k}|\otimes X))=\frac{(\alpha^{n-j}_{l_j})^\ast\alpha^{n-k}_{r_k}}{|\alpha^{n}|^2}(I\otimes L_t^{n-j,l_j;n-k,r_k}(X))\tilde{\pi}_t^n(|e^n\rangle\langle e^n|\otimes I).
\end{eqnarray*}
and
\begin{eqnarray*}
\tilde{M}_t^n(|e_{l_j}^{n-j}\rangle\langle e_{r_k}^{n-k}|\otimes X)=\frac{(\alpha^{n-j}_{l_j})^\ast\alpha^{n-k}_{r_k}}{|\alpha^{n}|^2}(I\otimes M_t^{n-j,l_j;n-k,r_k}(X))\tilde{\pi}_t^n(|e^n\rangle\langle e^n|\otimes I).
\end{eqnarray*}
By Theorem \ref{thm_nov17_n_extended},
\begin{eqnarray*}
&&\frac{|\alpha^{n}|^2}{(\alpha^{n-j}_{l_j})^\ast\alpha^{n-k}_{r_k}}{\rm d}\tilde{\pi}_t^n(|e^{n-j}_{l_j}\rangle\langle e^{n-k}_{r_k}|\otimes X)\nonumber\\
&=&I\otimes L_t^{n-j,l_j;n-k,r_k}(X)\tilde{\pi}_t^n(|e^n\rangle\langle e^n|\otimes I){\rm d}t+ \bigg [I\otimes M_t^{n-j,l_j;n-k,r_k}(X)\nonumber\\
&&-I\otimes
\pi_t^{n-j,l_j;n-k,r_k}(X)\tilde{M}^n_t(I\otimes I)\bigg ]\tilde{\pi}_t^n
(|e^n\rangle\langle e^n|\otimes I){\rm d}\tilde{W}^n(t),\nonumber
%&=&[I\otimes M_t^{n-j,l_j;n-k,r_k}(X)-I\otimes \pi_t^{n-j,l_j;n-k,r_k}(X)\tilde{M}^n_t(I\otimes I)]\tilde{\pi}_t^n(|e^n\rangle\langle e^n|\otimes I){\rm d}Y(t)\nonumber\\
%&&+[I\otimes L_t^{n-j,l_j;n-k,r_k}(X)-I\otimes M_t^{n-j,l_j;n-k,r_k}(X)\tilde{M}_t^n(I\otimes I)\nonumber\\
%&&+I\otimes \pi_t^{n-j,l_j;n-k,r_k}(X)\tilde{M}^n_t(I\otimes I)\tilde{M}_t^n(I\otimes I)]\tilde{\pi}_t^n(|e^n\rangle\langle e^n|\otimes I){\rm d}t\nonumber
\end{eqnarray*}
and correspondingly,
\begin{eqnarray*}
{\rm d}\tilde{\pi}_t^n(|e^n\rangle\langle e^n|\otimes I)=[I\otimes M_t^{n;n}(I)-\tilde{M}^n_t(I\otimes I)]\tilde{\pi}_t^n(|e^n\rangle\langle e^n|\otimes I){\rm d}\tilde{W}^n(t),
\end{eqnarray*}
where we have used $\mathcal{L}_{ij}(I)=0$ for $i,j=0,1$.

Similarly, for the right-hand side of equation \ref{eq:nov21_1}, we have
\begin{eqnarray*}
&&(I\otimes {\rm d}\pi_t^{n-j,l_j;n-k,r_k}(X))\tilde{\pi}_t^n(|e^n\rangle\langle e^n|\otimes I)\nonumber\\
&=&(I\otimes F_t^{n-j,l_j;n-k,r_k}(X))\tilde{\pi}_t^n(|e^n\rangle\langle e^n|\otimes I){\rm d}t\nonumber\\
&&+(I\otimes H_t^{n-j,l_j;n-k,r_k}(X))\tilde{\pi}_t^n(|e^n\rangle\langle e^n|\otimes I){\rm d}Y(t),\nonumber
\end{eqnarray*}
\begin{eqnarray*}
&&(I\otimes \pi_t^{n-j,l_j;n-k,r_k}(X)){\rm d}\tilde{\pi}^t_n(|e^n\rangle\langle e^n|\otimes I)\nonumber\\
&=&(I\otimes \pi_t^{n-j,l_j;n-k,r_k}(X))[I\otimes M_t^{n;n}(I)-\tilde{M}^n_t(I\otimes I)]\tilde{\pi}_t^n(|e^n\rangle\langle e^n|\otimes I){\rm d}\tilde{W}^n(t),\nonumber
\end{eqnarray*}
\begin{eqnarray*}
&&(I\otimes {\rm d}\pi_t^{n-j,l_j;n-k,r_k}(X)){\rm d}\tilde{\pi}_t^n(|e^n\rangle\langle e^n|\otimes I)\nonumber\\
&=&(I\otimes H_t^{n-j,l_j;n-k,r_k}(X))[I\otimes M_t^{n;n}(I)-\tilde{M}^n_t(I\otimes I)]\tilde{\pi}_t^n(|e^n\rangle\langle e^n|\otimes I){\rm d}t,
\end{eqnarray*}
and ${\rm d}\tilde{W}^n(t)={\rm d}t-\tilde{M}^n_t(I\otimes I){\rm d}Y(t)$. By comparing the coefficients of ${\rm d}t$ and ${\rm d}Y(t)$, we can get that
\begin{eqnarray*}
&&I\otimes F_t^{n-j,l_j;n-k,r_k}(X)+(I\otimes \pi_t^{n-j,l_j;n-k,r_k}(X))[I\otimes M_t^{n;n}(I)-\tilde{M}_t^n(I\otimes I)]\tilde{M}_t^n(I\otimes I)\nonumber\\
&&+I\otimes H_t^{n-j,l_j;n-k,r_k}(X)[I\otimes M_t^{n;n}(I)-I\otimes
\pi_t^{n;n}(I)\tilde{M}^n_t(I\otimes I)]\nonumber\\
&=&-[I\otimes M_t^{n-j,l_j;n-k,r_k}(X)-I\otimes
\pi_t^{n-j,l_j;n-k,r_k}(X)\tilde{M}^n_t(I\otimes I)]\tilde{M}^n_t(I\otimes I)\nonumber\\
&&+I\otimes L_t^{n-j,l_j;n-k,r_k}(X),\nonumber\\
\end{eqnarray*}
and
\begin{eqnarray*}
&&I\otimes M_t^{n-j,l_j;n-k,r_k}(X)-I\otimes
\pi_t^{n-j,l_j;n-k,r_k}(X)\tilde{M}^n_t(I\otimes I)\nonumber\\
&=&I\otimes H_t^{n-j,l_j;n-k,r_k}(X)+I\otimes \pi_t^{n-j,l_j;n-k,r_k}(X)[I\otimes M_t^{n;n}(I)-\tilde{M}^n_t(I\otimes I)].
\end{eqnarray*}
By solving the above two equations, the final expressions of $F^{n-j,l_j;n-k,r_k}(X)$ and $H^{n-j,l_j;n-k,r_k}(X)$ can be obtained as follows
\begin{eqnarray*}
&&H_t^{n-j,l_j;n-k,r_k}(X)=M_t^{n-j,l_j;n-k,r_k}(X)-\pi_t^{n-j,l_j;n-k,r_k}(X)M^{n;n}(I),\nonumber\\
&&F_t^{n-j,l_j;n-k,r_k}(X)=L_t^{n-j,l_j;n-k,r_k}(X)-H_t^{n-j,l_j;n-k,r_k}(X)M^{n;n}(I).
\end{eqnarray*}
Putting them back into the equation (\ref{form_cond}), we can get the quantum filter.

Now we prove the martingale property of the innovation process $W^n(t)$, that is equivalent to prove $\mathbb{E}_{n;n}[(W(t)-W(s))(I\otimes K)]=0$ for all $K\in\mathscr{Y}(s), 0\leq s\leq t$. Obviously,
\begin{eqnarray*}
&&\mathbb{E}_{n;n}[({W}^n(t)-{W}^n(s))(I\otimes K)]\nonumber\\
&=&\mathbb{E}_{n;n}[I\otimes (Y(t)-Y(s))(I\otimes K)]-\mathbb{E}_{n;n}[\int_s^tM^{n;n}_r(I\otimes I)(I\otimes K){\rm d} r]\nonumber\\
&=&\mathbb{E}_{n;n}[I\otimes (Y(t)-Y(s))(I\otimes K)]\nonumber\\
&&-\mathbb{E}_{n;n}\biggl[\int_s^t{\pi}^{n;n}_r(L+L^*)+\sum_{\mu\notin r_k}\frac{\sqrt{N_{r_k\mu}^{n-k-1}}}{\sqrt{N_{r_k}^{n-k}}} \xi_{\mu}(r)\pi_r^{n-j,l_j;n-k-1,r_k\mu}(S)
 \nonumber
 \\
&&+\sum_{\nu\notin l_j}\frac{\sqrt{N_{l_j\nu}^{n-j-1}}}{\sqrt{N_{l_j}^{n-j}}} \xi_{\nu}^*(r)\pi_r^{n-j-1,l_j\nu;n-k,r_k}(S^*){\rm d}r(I\otimes K)\biggr]\nonumber\\
&=&0.
\end{eqnarray*}
Since ${\rm d}{W}^n(t){\rm d}{W}^n(t)={\rm d}t$, Levy's Theorem implies that ${W}^n(t)$ is a Wiener process with respect to the $n$-photon state.
\endproof

{\it Remark }11.
It can be verified that Theorem \ref{homodyne_n_fiter_general} reduces to Theorem \ref{thm:2-photon_filter} when $n=2$, namely the 2-photon case.

\subsection{Multi-photon filter: the photon-counting case}\label{subsec:multi_photon_filter_photoncounting}

%In this subsection, we present the multi-photon filter for the photon-counting case.  Due to page limitation we just present the main result.
In this subsection, we present the multi-photon filter for the photon-counting case by deriving the evolution of the quantum conditional expectation $\hat{\pi}_t^{n;n}(X) \triangleq \mathbb{E}_{n;n}[j_t(X)|\mathscr{Y}^\Lambda(t)]$.

Similar to the development in Subsection \ref{non_markovian_multi_photon}, we first need extend the system and initialize the state as $|\Sigma^n\rangle$ in equation (\ref{sigman}), then we calculate the filtering equation for the extended system, i.e., the evolution of $\tilde{\pi}_t^\Lambda(A\otimes X)\triangleq \mathbb{E}_{\Sigma^n}[A\otimes j_t(X)|I\otimes \mathscr{Y}^\Lambda(t)]$. Finally, with the relationship between $\hat{\pi}_t^{n;n}(X)$ and $\tilde{\pi}_t^\Lambda(A\otimes X)$, we derive the quantum filter for $\hat{\pi}_t^{n;n}(X)$.

%In this case, we continuously measure the number observable $Y^\Lambda(t)=U^*(t)\Lambda(t)U(t)$, which satisfies
%\begin{eqnarray*}
%{\rm d}Y^\Lambda(t)=j_t(L^*L)\rm d t+j_t(L^*S)\rm d B(t)+j_t(S^*L)\rm d B^*(t)+\rm d \Lambda(t),
%\end{eqnarray*}
%and obeys the so-called self-nondemolition property, i.e., $[Y^\Lambda(s), Y^\Lambda(t)]=0$. We denote the commutative von Neumann algebras generated by the observation process $Y^\Lambda(s),~s\leq t$ as $\mathscr{Y}^\Lambda(t)$.

The following theorem is the filtering equation for the extended system, which is the counterpart of Theorem \ref{thm_nov17_n_extended}.

\begin{theorem}
In the case of photon-counting monitoring, the conditional expectation $\tilde{\pi}_t^\Lambda(A\otimes X)$ for the extended system satisfies
\begin{eqnarray*}
{\rm d}\tilde{\pi}_t^\Lambda (A\otimes X)=\tilde{\pi}_t^\Lambda(\mathcal{G}^n(A\otimes X)){\rm d} t+\tilde{H}_t^\Lambda(A\otimes X){\rm d} \tilde{N}^\Lambda(t),
\end{eqnarray*}
where $(\mathcal{G}^n(A\otimes X)$ is defined in equation (\ref{G^n_AX}) and
\begin{eqnarray*}
\tilde{H}_t^\Lambda(A\otimes X)=(\tilde{Q}_t^\Lambda(I\otimes I))^{-1}\tilde{Q}_t^\Lambda(A\otimes X)-\tilde{\pi}_t^\Lambda(A\otimes X),
\end{eqnarray*}
with
\begin{eqnarray*}
\tilde{Q}_t^\Lambda(A\otimes X)&\triangleq&\tilde{\pi}_t^\Lambda(\mathcal{K}_{00}^n(A)\otimes L^*XL)+\tilde{\pi}_t^\Lambda(\mathcal{K}_{01}^n(A)\otimes L^*XS)+\tilde{\pi}_t^\Lambda(\mathcal{K}_{10}^n(A)\otimes S^*XL)\nonumber\\
&&+\tilde{\pi}_t^\Lambda(\mathcal{K}_{11}^n(A)\otimes S^*XS).
\end{eqnarray*}
%\begin{eqnarray*}
%\tilde{Q}_t^\Lambda(A\otimes X)&=&\tilde{\pi}_t^\Lambda(\mathcal{K}_{00}^n(A)\otimes L^*XL+\mathcal{K}_{01}^n(A)\otimes L^*XS
%+\mathcal{K}_{10}^n(A)\otimes S^*XL+\mathcal{K}_{11}^n(A)\otimes S^*XS).
%\end{eqnarray*}
The innovation process is given as ${\rm d}\tilde{N}^\Lambda(t)={\rm d} Y^\Lambda(t)-\tilde{Q}_t^\Lambda(I\otimes I){\rm d} t$.
\end{theorem}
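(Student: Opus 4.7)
The plan is to mirror the characteristic-function argument used for the homodyne case, but with the photon-counting exponential martingale from Lemma~\ref{lem:cf-pd} playing the role of the Wiener exponential. First I would postulate the filter in the general adapted form
\begin{eqnarray*}
d\tilde{\pi}_t^\Lambda(A\otimes X)
=\tilde{\mathcal{F}}_t^\Lambda(A\otimes X)\,dt
+\tilde{\mathcal{E}}_t^\Lambda(A\otimes X)\,(I\otimes dY^\Lambda(t)),
\end{eqnarray*}
with $\tilde{\mathcal{F}}_t^\Lambda$ and $\tilde{\mathcal{E}}_t^\Lambda$ unknown superoperators to be determined. The strategy is to match the right-hand side against the exact dynamics of $A\otimes j_t(X)$ when tested against the algebra $I\otimes\mathscr{Y}^\Lambda(t)$.

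Next I would fix an arbitrary $f\in L_2(\mathbb{R}^+,\mathbb{C})$ with $\ln(f+1)\in L_2$ and consider the process $c_f(t)=\exp\{\int_0^t\ln[f(s)+1]dY^\Lambda(s)\}$, which by Lemma~\ref{lem:cf-pd} satisfies $dc_f(t)=f(t)c_f(t)\,dY^\Lambda(t)$ and is $I\otimes\mathscr{Y}^\Lambda(t)$-adapted. The defining property of the quantum conditional expectation then yields
\begin{eqnarray*}
\mathbb{E}_{\Sigma^n}\!\left[(A\otimes j_t(X))(I\otimes c_f(t))\right]
=\mathbb{E}_{\Sigma^n}\!\left[\tilde{\pi}_t^\Lambda(A\otimes X)(I\otimes c_f(t))\right],
\end{eqnarray*}
which, differentiated in $t$, produces an identity that must hold for every admissible $f$. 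Comparing the $c_f(t)$ and $f(t)c_f(t)$ coefficients on both sides will pin down $\tilde{\mathcal{F}}_t^\Lambda$ and $\tilde{\mathcal{E}}_t^\Lambda$ uniquely.

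The technical core of the argument is the expansion of the left-hand side via the quantum It\^o rules. Using $dj_t(X)$ from Eq.~(\ref{eq:QDSE_X}) and the relations (\ref{eq_K_00_A})--(\ref{eq_K_11_A}) that convert field increments acting on $|\Sigma^n\rangle$ into the superoperators $\mathcal{K}^n_{jk}$, the $dt$ part of the product rule reproduces $\tilde{\pi}_t^\Lambda(\mathcal{G}^n(A\otimes X))$, exactly as in the master-equation calculation of Theorem 3.8. The delicate piece is the cross term $(A\otimes dj_t(X))(I\otimes dc_f(t))=f(t)(A\otimes dj_t(X))(I\otimes dY^\Lambda(t))$: here the canonical It\^o products (\ref{eq:ito}), especially $d\Lambda\,d\Lambda=d\Lambda$, $dB\,d\Lambda=dB$ and $d\Lambda\,dB^\ast=dB^\ast$, force the four surviving combinations that package together into $\tilde{Q}_t(A\otimes X)$. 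This is the step I expect to be the main obstacle, because one has to be careful with the ordering of $L,L^\ast,S,S^\ast$ and with how $\mathcal{K}^n_{jk}$ interacts with the field-side increments before the expectation is taken.

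Matching the $c_f(t)$ coefficient then gives $\tilde{\mathcal{F}}_t^\Lambda(A\otimes X)+\tilde{\mathcal{E}}_t^\Lambda(A\otimes X)\tilde{Q}_t(I\otimes I) =\tilde{\pi}_t^\Lambda(\mathcal{G}^n(A\otimes X))+\tilde{Q}_t(A\otimes X)$ (with appropriate deterministic contributions absorbed), while the $f(t)c_f(t)$ coefficient gives $\tilde{\pi}_t^\Lambda(A\otimes X)\tilde{Q}_t(I\otimes I)+\tilde{\mathcal{E}}_t^\Lambda(A\otimes X)(\tilde{Q}_t(I\otimes I)+1-1) =\tilde{Q}_t(A\otimes X)$, so that $\tilde{\mathcal{E}}_t^\Lambda(A\otimes X)=\tilde{Q}_t^{-1}(I\otimes I)\tilde{Q}_t(A\otimes X)-\tilde{\pi}_t^\Lambda(A\otimes X)=\tilde{\mathcal{H}}_t^\Lambda(A\otimes X)$, and $\tilde{\mathcal{F}}_t^\Lambda(A\otimes X)=\tilde{\pi}_t^\Lambda(\mathcal{G}^n(A\otimes X))-\tilde{\mathcal{H}}_t^\Lambda(A\otimes X)\tilde{Q}_t(I\otimes I)$. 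Re-inserting these into the postulated form and grouping the $dY^\Lambda$ and $dt$ pieces as $d\tilde{W}^\Lambda(t)=I\otimes dY^\Lambda(t)-\tilde{Q}_t(I\otimes I)dt$ yields the stated filter. Finally, the martingale property $\mathbb{E}_{\Sigma^n}[(\tilde{W}^\Lambda(t)-\tilde{W}^\Lambda(s))(I\otimes K)]=0$ for all $K\in\mathscr{Y}^\Lambda(s)$ is verified by exactly the same computation as in the homodyne proof, now using $\mathbb{E}_{\Sigma^n}[I\otimes d\Lambda(t)]+\mathbb{E}_{\Sigma^n}[\text{cross terms}]=\tilde{Q}_t(I\otimes I)dt$ from (\ref{eq_K_00_A})--(\ref{eq_K_11_A}), which closes the argument.
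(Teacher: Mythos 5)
Your approach is the right one, and it is essentially the paper's: the paper does not actually print a proof of this theorem, but it supplies Lemma \ref{lem:cf-pd} precisely so that the characteristic-function (reference-measure) argument used for the homodyne extended-system filter can be repeated verbatim with $c_f(t)=e^{\int_0^t\ln[f(s)+1]dY^\Lambda(s)}$, $dc_f=f c_f\,dY^\Lambda$, which is exactly what you propose. Two remarks. First, your displayed coefficient matching contains a bookkeeping slip: the term $\tilde{Q}_t(A\otimes X)$ arises only from the two contributions carrying $dc_f$ (namely $(A\otimes j_t(X))(I\otimes dc_f)$ and the It\^o cross term $(A\otimes dj_t(X))(I\otimes dc_f)$), so it belongs exclusively to the $f(t)c_f(t)$ coefficient; the $c_f(t)$ coefficient identity should read $\tilde{\mathcal{F}}_t^\Lambda+\tilde{\mathcal{E}}_t^\Lambda\tilde{Q}_t(I\otimes I)=\tilde{\pi}_t^\Lambda(\mathcal{G}^n(A\otimes X))$, without the extra $\tilde{Q}_t(A\otimes X)$ you wrote (your hedge ``deterministic contributions absorbed'' does not fix this, though your final expression for $\tilde{\mathcal{F}}_t^\Lambda$ is the correct one, so the slip does not propagate). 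Second, when you actually carry out the ``delicate'' cross-term computation you flag, the four surviving combinations are $\mathcal{K}^n_{00}(A)\otimes(XL^*L+\mathcal{L}_{01}(X)S^*L)=\mathcal{K}^n_{00}(A)\otimes L^*XL$, $\mathcal{K}^n_{01}(A)\otimes L^*XS$, $\mathcal{K}^n_{10}(A)\otimes S^*XL$ and $\mathcal{K}^n_{11}(A)\otimes S^*XS$; this in fact corrects a typo in the theorem statement, where the $\mathcal{K}^n_{01}(A)\otimes L^*XS$ term is printed twice in place of the $\mathcal{K}^n_{10}(A)\otimes S^*XL$ term.
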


{\em Proof}.
We postulate the form of the filter as
\begin{eqnarray} \label{eq:nov21_2}
{\rm d}\tilde{\pi}_t^\Lambda(A\otimes X)=\tilde{F}^\Lambda_t(A\otimes X){\rm d}t+\tilde{H}^\Lambda_t(A\otimes X){\rm d}Y^\Lambda(t),
\end{eqnarray}
where the expressions of $\tilde{F}^\Lambda_t(A\otimes X)$ and $\tilde{H}^\Lambda_t(A\otimes X)$ are to be determined.

For any positive function $\hat{f}(t)$ such that $\ln[\hat{f}(t)+1]$ is integrable, define a stochastic process $\hat{c}_f(t)$ as $\hat{c}_f(t)={\rm e}^{h(t)}\triangleq {\rm e}^{\int_0^t\ln[\hat{f}(s)+1]{\rm d}Y^\Lambda(s)}$. In what follows we verify that $\hat{c}_f(t)$  satisfies ${\rm d}\hat{c}_f(t)=\hat{f}(t)\hat{c}_f(t){\rm d}Y^\Lambda(t)$ with the initial condition $\hat{c}_f(0)=1$. Specifically, notice that
\begin{eqnarray*}
{\rm d}\hat{c}_f(t)&=&\hat{c}_f(t+{\rm d}t)-\hat{c}_f(t)\nonumber\\
&=&{\rm e}^{h(t+{\rm d}t)}-{\rm e}^{h(t)}={\rm e}^{h(t)+{\rm d}h(t)}-{\rm e}^{h(t)}={\rm e}^{h(t)}[{\rm e}^{{\rm d}h(t)}-1]\nonumber\\
&=&{\rm e}^{h(t)}[1+{\rm d}h(t)+\frac{1}{2!}({\rm d}h(t))^2+\cdots+\frac{1}{k!}({\rm d}h(t))^k+\cdots-1].
\end{eqnarray*}
Since ${\rm d}h(t)=\ln[\hat{f}(t)+1]{\rm d}Y^\Lambda(t)$ and  $({\rm d}h(t))^k=\ln^k[\hat{f}(t)+1]{\rm d}Y^\Lambda(t)$, we can get
\begin{eqnarray*}
{\rm d}\hat{c}_f(t)&=&{\rm e}^{h(t)}[{\rm e}^{\ln[\hat{f}(t)+1]}-1]{\rm d}Y^\Lambda(t)=\hat{c}_f(t)\hat{f}(t){\rm d}Y^\Lambda(t).
\end{eqnarray*}

Using the property of conditional expectations, it can be easily verified that
\begin{eqnarray*}
\mathbb{E}_{\Sigma^n}[A\otimes j_t(X)\hat{c}_f(t)]=\mathbb{E}_{\Sigma^n}[\tilde{\pi}_t^\Lambda(A\otimes X)(I\otimes \hat{c}_f(t))].
\end{eqnarray*}
When we differentiate both sides of the above equation, take expectations and conditional expectations, we can obtain
\begin{eqnarray*}
&&{\rm d}\mathbb{E}_{\Sigma^n}[A\otimes j_t(X)\hat{c}_f(t)]\nonumber\\
&=&\mathbb{E}_{\Sigma^n}[I\otimes \hat{c}_f(t) A\otimes {\rm d}j_t(X)]+\mathbb{E}_{\Sigma^n}[I\otimes \hat{f}(t)\hat{c}_f(t) (A\otimes j_t(X)+ A\otimes {\rm d}j_t(X)){\rm d}Y^\Lambda(t)]\nonumber\\
&=&\mathbb{E}_{\Sigma^n}[I\otimes \hat{c}_f(t) \tilde{\pi}_t^\Lambda(\mathcal{G}^n(A\otimes X))]{\rm d}t+\mathbb{E}_{\Sigma^n}[(I\otimes \hat{f}(t)\hat{c}_f(t)) \tilde{Q}_t^\Lambda(A\otimes X)]{\rm d}t,
\end{eqnarray*}
and
\begin{eqnarray*}
&&{\rm d}\mathbb{E}_{\Sigma^n}[\tilde{\pi}_t^\Lambda(A\otimes X)I\otimes\hat{c}_f(t)]\nonumber\\
&=&\mathbb{E}_{\Sigma^n}[I\otimes \hat{c}_f(t)(\tilde{F}_t^\Lambda(A\otimes X){\rm d}t+\tilde{H}_t^\Lambda(A\otimes X){\rm d}Y^\Lambda(t))]\nonumber\\
&&+\mathbb{E}_{\Sigma^n}[I\otimes \hat{f}(t)\hat{c}_f(t)(\tilde{\mathcal{\pi}}_t^\Lambda(A\otimes X){\rm d}Y^\Lambda(t)+\tilde{H}_t^\Lambda(A\otimes X){\rm d}Y^\Lambda(t))]\nonumber\\
&=&\mathbb{E}_{\Sigma^n}[I\otimes \hat{c}_f(t)(\tilde{F}_t^\Lambda(A\otimes X)+\tilde{H}_t^\Lambda(A\otimes X)\tilde{Q}_t^\Lambda(I\otimes I))]{\rm d}t\nonumber\\
&&+\mathbb{E}_{\Sigma^n}[I\otimes \hat{f}(t)\hat{c}_f(t)(\tilde{\mathcal{\pi}}_t^\Lambda(A\otimes X)\tilde{Q}_t^\Lambda(I\otimes I)+\tilde{H}_t^\Lambda(A\otimes X)\tilde{Q}_t^\Lambda(I\otimes I))]{\rm d}t.
\end{eqnarray*}
Comparing the coefficients of $\hat{c}_f(t)$ and $\hat{f}(t)\hat{c}_f(t)$, and noting the arbitrariness of the function $\hat{f}(t)$, we can get the final expressions of $\tilde{F}_t^\Lambda(A\otimes X)$ and $\tilde{H}^\Lambda_t(A\otimes X)$ as follows,
\begin{eqnarray*}
&&\tilde{H}^\Lambda_t(A\otimes X)=(\tilde{Q}_t^\Lambda(I\otimes I))^{-1}\tilde{Q}_t^\Lambda(A\otimes X)-\tilde{\pi}_t^\Lambda(A\otimes X),\nonumber\\
&&\tilde{F}^\Lambda_t(A\otimes X)=\tilde{\pi}_t^\Lambda(\mathcal{G}^n(A\otimes X))-\tilde{H}_t^\Lambda(A\otimes X)\tilde{Q}_t^\Lambda(I\otimes I).
\end{eqnarray*}
Putting them back to equation \ref{eq:nov21_2} we see that the conditional expectation $\tilde{\pi}^\Lambda_t(A\otimes X)$ satisfies
\begin{eqnarray*}
{\rm d}\tilde{\pi}_t^\Lambda=\tilde{\pi}_t^\Lambda(\mathcal{G}^n(A\otimes X)){\rm d} t+\tilde{H}_t^\Lambda(A\otimes X){\rm d} \tilde{N}^\Lambda(t),
\end{eqnarray*}
with the innovation process ${\rm d}\tilde{N}^\Lambda(t)={\rm d}t-\tilde{Q}_t^\Lambda(I\otimes I){\rm d}Y^\Lambda(t)$.      \endproof

%The conditional expectation under photon-counting measurement for the original system $G$ driven by the $n$-photon field state $|\Phi^n\r$ is defined to be $\hat{\pi}_t^{n;n}(X) \triangleq  \mathbb{E}_{n;n}[j_t(X)|\mathscr{Y}^\Lambda(t)]$.
%\begin{equation}
%\hat{\pi}_t^{n;n}(X) =  \mathbb{E}_{n;n}[j_t(X)|\mathscr{Y}^\Lambda(t)].\nonumber
%\end{equation}

Defining
\begin{eqnarray}\label{conditional_photon_n}
\qquad\qquad (I\otimes \hat{\pi}_t^{n-j,l_j;n-k,r_k}(X))\tilde{\pi}_t^\Lambda(|e^n\rangle\langle e^n|\otimes I)=\frac{|\alpha^n|^2}{(\alpha_{l_j}^{n-j})^\ast\alpha_{r_k}^{n-k}}\tilde{\pi}_t^\Lambda(|e^{n-j}_{l_j}\rangle\langle e_{r_k}^{n-k}|\otimes X),
\end{eqnarray}
we can directly verify the following equation
\begin{eqnarray*}
\mathbb{E}_{n;n}[\hat{\pi}_t^{n-j,l_j;n-k,r_k}(X)K]\!=\!\mathbb{E}_{n-j,l_j;n-k,r_k}[j_t(X)K],~~~\forall K\in\mathscr{Y}^\Lambda(t).
\end{eqnarray*}
If we set $j=k=0$, and note that $K\in\mathscr{Y}^\Lambda(t)$ is arbitrary, we can deduce that $\hat{\pi}_t^{n;n}(X)$ defined in equation (\ref{conditional_photon_n}) is exactly the desired conditional expectation with respect to the $n$-photon field state $|\Phi^n\r$ for the photon detection, namely equation (\ref{eq:poisson}). 

The following result presents the quantum filter for photodetection, the counterpart of Theorem  \ref{homodyne_n_fiter_general}.

\begin{theorem}\label{thm:n_filter_PD}
In the case of photon-counting measurement, the quantum filter for the
conditional expectation $\hat{\pi}_t^{n;n}(X)$ is given by the following Ito differential equation
%\begin{eqnarray*}
%{\rm d}\hat{\pi}_{t}^{n;n}(X)&=&\biggl[\sum_{\mu =1}^{n}\frac{\sqrt{N_{\mu }^{n-1}}}{\sqrt{N_{n}}}\xi
%_{\mu }(t)\hat{\pi}_{t}^{n;n-1,\mu }(\mathcal{L}_{01}(X))+\sum_{\nu =1}^{n}
%\frac{\sqrt{N^{n-1}_\nu}}{\sqrt{N_{n}}}\xi _{\nu }^{\ast }(t)\hat{\pi}_{t}^{n-1,\nu ;n}(\mathcal{L}_{10}(X))
%\nonumber
%\\
%&&+\sum_{\mu =1}^{n}\sum_{\nu =1}^{n}\frac{\sqrt{N_{\mu }^{n-1}}\sqrt{N_{\nu
%}^{n-1}}}{N_{n}}\xi _{\nu }^{\ast }(t)\xi _{\mu }(t)\hat{\pi}_{t}^{n-1,\nu
%;n-1,\mu }(\mathcal{L}_{11}(X))+\hat{\pi}_{t}^{n;n}(\mathcal{L}_{00}(X))\biggr]{\rm d} t
%\nonumber
%\\&&+\biggl[\left( \Delta _{t}^{n;n}(I)\right) ^{-1}\Delta _{t}^{n;n}(X)-\hat{\pi}_{t}^{n;n}(X)\biggr]{\rm d} N_{t},
%\end{eqnarray*}
\begin{eqnarray*}
{\rm d}\hat{\pi}_{t}^{n;n}(X)&=&\hat{P}_t^{n;n}(X){\rm d} t+\biggl[\left( \Delta _{t}^{n;n}(I)\right) ^{-1}\Delta _{t}^{n;n}(X)-\hat{\pi}_{t}^{n;n}(X)\biggr]{\rm d} N_{t}.
\end{eqnarray*}
And more generally for subsets $l_j,r_k\subset \bar{n}$ ~ ($\forall j,k=0,\ldots,n$),
\begin{eqnarray*}
&&{\rm d}\hat{\pi}_{t}^{n-j,l_{j};n-k,r_{k}}(X)\nonumber\\
&=&\hat{P}_t^{n-j,l_j;n-k,r_k}(X){\rm d} t
+\biggl[\left( \Delta _{t}^{n;n}(I)\right) ^{-1}\Delta
_{t}^{n-j,l_{j};n-k,r_{k}}(X)-\hat{\pi}_{t}^{n-j,l_{j};n-k,r_{k}}(X)\biggr]{\rm d} N_{t},
\end{eqnarray*}
where
\begin{eqnarray*}
&&\hat{P}_t^{n-j,l_j;n-k,r_k}(X)\nonumber\\
&\triangleq&\hat{\pi}
_{t}^{n-j,l_{j};n-k,r_{k}}(\mathcal{L}_{00}(X))+\sum_{\mu \notin r_{k}}\frac{\sqrt{N_{r_{k}\mu }^{n-k-1}}}{\sqrt{N_{r_{k}}^{n-k}}}\xi _{\mu }(t)\hat{\pi}_{t}^{n-j,l_{j};n-k-1,r_{k}\mu }(\mathcal{L}_{01}(X))
\nonumber
\\
&&+\sum_{\nu \notin l_{j}}\frac{\sqrt{N^{n-j-1}_{l_j\nu}}}{\sqrt{N_{l_{j}}^{n-j}}}\xi _{\nu }^{\ast }(t)\hat{\pi}_{t}^{n-j-1,l_{j}\nu
;n-k,r_{k}}(\mathcal{L}_{10}(X))
\nonumber
\\
&&+\sum_{\mu \notin r_{k}\atop\nu \notin l_{j}}\frac{\sqrt{N_{r_{k}\mu}^{n-k-1}}}{\sqrt{N_{r_{k}}^{n-k}}}\frac{\sqrt{N_{l_{j}\nu }^{n-j-1}}}{\sqrt{N_{l_{j}}^{n-j}}}\xi _{\nu }^{\ast }(t)\xi _{\mu }(t)\hat{\pi}_{t}^{n-j-1,l_{j}\nu ;n-k-1,r_{k}\mu }(\mathcal{L}_{11}(X)),
\end{eqnarray*}
and
\begin{eqnarray*}
&&\Delta _{t}^{n-j,l_{j};n-k,r_{k}}(X)\nonumber\\
&\triangleq& \hat{\pi}_{t}^{n-j,l_{j};n-k,r_{k}}(L^{\ast }XL)+\sum_{\mu
\notin r_{k}}\frac{\sqrt{N_{r_{k}\mu }^{n-k-1}}}{\sqrt{N_{r_{k}}^{n-k}}}\xi
_{\mu }(t)\hat{\pi}_{t}^{n-j,l_{j};n-k-1,r_{k}\mu }(L^{\ast }XS)\nonumber\\
&&+\sum_{\nu
\notin l_{j}}\frac{\sqrt{N_{l_{j}\nu }^{n-j-1}}}{\sqrt{N_{l_{j}}^{n-j}}}\xi
_{\nu }^{\ast }(t)\hat{\pi}_{t}^{n-j-1,l_{j}\nu ;n-k,r_{k}}(S^{\ast }XL)
\nonumber
\\
&&+\sum_{\mu \notin r_{k}\atop\nu \notin l_{j}}\xi _{\nu }^{\ast }(t)\xi
_{\mu }(t)\frac{\sqrt{N_{r_{k}\mu }^{n-k-1}}}{\sqrt{N_{r_{k}}^{n-k}}}\frac{\sqrt{N_{l_{j}\nu }^{n-j-1}}}{\sqrt{N_{l_{j}}^{n-j}}}\hat{\pi}_{t}^{n-j-1,l_{j}\nu ;n-k-1,r_{k}\mu }(S^{\ast }XS),
\end{eqnarray*}
The innovation process $N_{t}$ is defined by ${\rm d} N_{t}={\rm d} Y^{\Lambda
}(t)-\Delta _{t}^{n;n}(I){\rm d} t$, and the initial conditions are $\hat{\pi}_{0}^{n-j,l_{j};n-k,r_{k}}(X)=\langle \Phi _{l_{j}}^{n-j}|\Phi
_{r_{k}}^{n-k}\rangle \langle \eta |X|\eta \rangle$.
\end{theorem}

{\em Proof}.
Firstly, we postulate the form of the filter as
\begin{eqnarray}\label{form_conu_mul}
{\rm d}\pi_t^{n-j,l_j;n-k,r_k}(X)=\hat{F}_t^{n-j,l_j;n-k,r_k}(X){\rm d}t+\hat{H}_t^{n-j,l_j;n-k,r_k}(X){\rm d}Y^\Lambda(t),
\end{eqnarray}
with the exact expressions of $\hat{F}_t^{n-j,l_j;n-k,r_k}(X)$ and $\hat{H}_t^{n-j,l_j;n-k,r_k}(X)$ to be determined.

Then we differentiate both sides of the equation (\ref{conditional_photon_n}), we can get that
\begin{eqnarray*}
&&\frac{|\alpha^n|^2}{(\alpha_{l_j}^{n-j})^\ast\alpha_{r_k}^{n-k}}{\rm d}\tilde{\pi}_t^\Lambda(|e^{n-j}_{l_j}\rangle\langle e_{r_k}^{n-k}|\otimes X)\nonumber\\
&=&(I\otimes {\rm d}\hat{\pi}_t^{n-j,l_j;n-k,r_k}(X))\tilde{\pi}_t^\Lambda(|e^n \rangle\langle e^n|\otimes I)+(I\otimes \hat{\pi}_t^{n-j,l_j;n-k,r_k}(X)){\rm d}\tilde{\pi}_t^\Lambda(|e^n \rangle\langle e^n|\otimes I)\nonumber\\
&&+(I\otimes {\rm d}\hat{\pi}_t^{n-j,l_j;n-k,r_k}(X)){\rm d}\tilde{\pi}_t^\Lambda(|e^n \rangle\langle e^n|\otimes I).
\end{eqnarray*}
Since for arbitrary functions $f_i(X),~i=1,2,3,4$ of the observable $X$, the following equations hold,
\begin{eqnarray*}
&&\tilde{\pi}_t^\Lambda(|e_{l_j}^{n-j}\rangle\langle e_{r_k}^{n-k}|\otimes f_1(X))\nonumber\\
&=&\frac{(\alpha_{l_j}^{n-j})^*\alpha_{r_k}^{n-k}}{|\alpha^n|^2}I\otimes \hat{\pi}_t^{n-j,l_j;n-k,r_k}(f_1(X))\tilde{\pi}_t^\Lambda(|e^n\rangle\langle e^n|\otimes I),\nonumber
\end{eqnarray*}
\begin{eqnarray*}
&&\tilde{\pi}_t^\Lambda(\mathcal{K}_{01}^n(|e_{l_j}^{n-j}\rangle\langle e_{r_k}^{n-k})|\otimes f_2(X))\nonumber\\
&=&\sum_{\mu\notin r_k}\frac{(\alpha^{n-j}_{l_j})^\ast\alpha^{n-k}_{r_k}}{|\alpha^{n}|^2}\frac{\sqrt{N_{r_k\mu}^{n-k-1}}}{\sqrt{N_{r_k}^{n-k}}} \xi_{\mu}(t)I\otimes{\hat{\pi}}_t^{n-j,l_j;n-k-1,r_k\mu}(f_2(X))\tilde{\pi}_t^\Lambda(|e^{n}\rangle\langle e^{n}|\otimes I),\nonumber
\end{eqnarray*}
\begin{eqnarray*}
&&\tilde{\pi}_t^\Lambda(\mathcal{K}_{10}^n(|e_{l_j}^{n-j}\rangle\langle e_{r_k}^{n-k}|\otimes f_3(X)))\nonumber\\
%&=&\sum_{\nu\notin l_j}\frac{(\alpha_{l_j}^{n-j})^*}{(\alpha_{l_j\nu}^{n-j-1})^*}\frac{\sqrt{N_{l_j\nu}^{n-j-1}}}{\sqrt{N_{l_j}^{n-j}}} \xi_{\nu}^*(t)I\otimes\hat{\pi}_t^n(|e_{l_j\nu}^{n-j-1}\rangle\langle e_{r_k}^{n-k}|\otimes\mathcal{L}_{10}(X))\nonumber\\
&=&\sum_{\nu\notin l_j}\frac{(\alpha^{n-j}_{l_j})^\ast\alpha^{n-k}_{r_k}}{|\alpha^{n}|^2}
\frac{\sqrt{N_{l_j\nu}^{n-j-1}}}{\sqrt{N_{l_j}^{n-j}}} \xi_{\nu}^*(t)I\otimes\hat{\pi}_t^{n-j-1,l_j\nu;n-k,r_k}(f_3(X))\tilde{\pi}_t^\Lambda(|e^{n}\rangle\langle e^{n}|\otimes I),\nonumber
\end{eqnarray*}
%&&\tilde{\pi}_t^\Lambda(\mathcal{K}_{11}^n(|e_{l_j}^{n-j}\rangle\langle e_{r_k}^{n-k}|\otimes f_4(X)))\nonumber\\
%%&=&\sum_{\nu\notin l_j}\sum_{\mu\notin r_k}\frac{(\alpha_{l_j}^{n-j})^*}{(\alpha_{l_j\nu}^{n-j-1})^*}\frac{\alpha_{r_k}^{n-k}}{\alpha_{r_k\mu}^{n-k-1}}\frac{\sqrt{N_{r_k\mu}^{n-k-1}}}{\sqrt{N_{r_k}^{n-k}}} \frac{\sqrt{N_{l_j\nu}^{n-j-1}}}{\sqrt{N_{l_j}^{n-j}}} \xi_{\nu}^*(t)\xi_{\mu}(t)\tilde{\pi}_t^\Lambda(|e_{l_j\nu}^{n-j-1}\rangle\langle e_{r_k\mu}^{n-k-1}|\otimes f_4(X))\nonumber\\
%&=&\sum_{\nu\notin l_j}\sum_{\mu\notin r_k}\frac{(\alpha^{n-j}_{l_j})^\ast\alpha^{n-k}_{r_k}}
%{|\alpha^{n}|^2}\frac{\sqrt{N_{l_j\nu}^{n-j-1}}}{\sqrt{N_{l_j}^{n-j}}}
%\frac{\sqrt{N_{r_k\mu}^{m-k-1}}}{\sqrt{N_{r_k}^{n-k}}}  \xi_{\nu}^*(t)\xi_{\mu}(t)I\otimes\hat{\pi}_t^{n-j-1,l_j\nu;n-k-1,r_k\mu}
%(f_4(X))\tilde{\pi}_t^\Lambda(|e^{n}\rangle\langle e^{n}|\otimes I),\nonumber
\begin{eqnarray*}
&&\tilde{\pi}_t^\Lambda(\mathcal{K}_{11}^n(|e_{l_j}^{n-j}\rangle\langle e_{r_k}^{n-k}|\otimes f_4(X)))(\tilde{\pi}_t^\Lambda(|e^{n}\rangle\langle e^{n}|\otimes I))^{-1}\nonumber\\
%&=&\sum_{\nu\notin l_j}\sum_{\mu\notin r_k}\frac{(\alpha_{l_j}^{n-j})^*}{(\alpha_{l_j\nu}^{n-j-1})^*}\frac{\alpha_{r_k}^{n-k}}{\alpha_{r_k\mu}^{n-k-1}}\frac{\sqrt{N_{r_k\mu}^{n-k-1}}}{\sqrt{N_{r_k}^{n-k}}} \frac{\sqrt{N_{l_j\nu}^{n-j-1}}}{\sqrt{N_{l_j}^{n-j}}} \xi_{\nu}^*(t)\xi_{\mu}(t)\tilde{\pi}_t^\Lambda(|e_{l_j\nu}^{n-j-1}\rangle\langle e_{r_k\mu}^{n-k-1}|\otimes f_4(X))\nonumber\\
&=&\sum_{\nu\notin l_j}\sum_{\mu\notin r_k}\frac{(\alpha^{n-j}_{l_j})^\ast\alpha^{n-k}_{r_k}}
{|\alpha^{n}|^2}\frac{\sqrt{N_{l_j\nu}^{n-j-1}}}{\sqrt{N_{l_j}^{n-j}}}
\frac{\sqrt{N_{r_k\mu}^{m-k-1}}}{\sqrt{N_{r_k}^{n-k}}}  \xi_{\nu}^*(t)\xi_{\mu}(t)I\otimes\hat{\pi}_t^{n-j-1,l_j\nu;n-k-1,r_k\mu}
(f_4(X)),\nonumber
\end{eqnarray*}
we have
\begin{eqnarray*}
&&\tilde{\pi}_t^\Lambda(\mathcal{G}^n(|e_{l_j}^{n-j}\rangle\langle e_{r_k}^{n-k}|\otimes X))=\frac{(\alpha^{n-j}_{l_j})^\ast\alpha^{n-k}_{r_k}}
{|\alpha^{n}|^2}I\otimes \hat{P}_t^{n-j,l_j;n-k,r_k}(X)\tilde{\pi}_t^\Lambda(|e^n\rangle\langle e^n|\otimes I),\nonumber
\end{eqnarray*}
and
\begin{eqnarray*}
&&\tilde{Q}_t^\Lambda(|e_{l_j}^{n-j}\rangle\langle e_{r_k}^{n-k}|\otimes X)=\frac{(\alpha^{n-j}_{l_j})^\ast\alpha^{n-k}_{r_k}}
{|\alpha^{n}|^2}I\otimes \Delta_t^{n-j,l_j;n-k,r_k}(X)\tilde{\pi}_t^\Lambda(|e^n\rangle\langle e^n|\otimes I).\nonumber
\end{eqnarray*}
As a result,
\begin{eqnarray*}
&&\frac{|\alpha^{n}|^2}{(\alpha^{n-j}_{l_j})^\ast\alpha^{n-k}_{r_k}}
{\rm d}\tilde{\pi}_t^\Lambda(|e_{l_j}^{n-j}\rangle\langle e_{r_k}^{n-k}|\otimes X)\nonumber\\
&=&(I\otimes \hat{P}_t^{n-j,l_j;n-k,r_k}(X))\tilde{\pi}_t^\Lambda(|e^n\rangle\langle e^n|\otimes I){\rm d}t+[I\otimes \Delta_t^{n-j,l_j;n-k,r_k}(X)(\tilde{Q}_t^\Lambda(I\otimes I))^{-1}\nonumber\\
&&(-I\otimes \pi_t^{n-j,l_j;n-k,r_k}(X)])\tilde{\pi}_t^\Lambda(|e^n\rangle\langle e^n|\otimes I){\rm d}\tilde{N}^\Lambda(t),\nonumber
\end{eqnarray*}
%&&I\otimes {\rm d}\hat{\pi}_t^{n-j,l_j;n-k,r_k}(X)\tilde{\pi}_t^\Lambda(|e^n\rangle\langle e^n|\otimes I)\nonumber\\
%&=&I\otimes \hat{F}_t^{n-j,l_j;n-k,r_k}(X)\tilde{\pi}_t^\Lambda(|e^n\rangle\langle e^n|\otimes I){\rm d}t+I\otimes \hat{H}_t^{n-j,l_j;n-k,r_k}(X)\tilde{\pi}_t^\Lambda(|e^n\rangle\langle e^n|\otimes I){\rm d}Y^\Lambda(t),\nonumber\\
%\vspace{3ex}
\begin{eqnarray*}
&&(I\otimes \hat{\pi}_t^{n-j,l_j;n-k,r_k}(X)){\rm d}\tilde{\pi}_t^\Lambda(|e^n\rangle\langle e^n|\otimes I)\nonumber\\
&=&(I\otimes \hat{\pi}_t^{n-j,l_j;n-k,r_k}(X))[I\otimes \Delta_t^{n;n}(I)(\tilde{Q}_t^\Lambda(I\otimes I))^{-1}-I]\tilde{\pi}_t^\Lambda(|e^n\rangle\langle e^n|\otimes I){\rm d}\tilde{N}^\Lambda(t),\nonumber
\end{eqnarray*}
\begin{eqnarray*}
&&(I\otimes {\rm d}\hat{\pi}_t^{n-j,l_j;n-k,r_k}(X)){\rm d}\tilde{\pi}_t^\Lambda(|e^n\rangle\langle e^n|\otimes I)\nonumber\\
&=&(I\otimes \hat{H}_t^{n-j,l_j;n-k,r_k}(X))[I\otimes \Delta_t^{n;n}(I)(\tilde{Q}_t^\Lambda(I\otimes I))^{-1}-I]\tilde{\pi}_t^\Lambda(|e^n\rangle\langle e^n|\otimes I){\rm d}Y^\Lambda(t).\nonumber
\end{eqnarray*}
Together with ${\rm d}\tilde{N}^\Lambda(t)={\rm d}Y^\Lambda(t)-\tilde{Q}_t^{\Lambda}(I\otimes I){\rm d}t$, we can get two equations about the coefficients of ${\rm d}t$ and ${\rm d}Y^\Lambda(t)$.

Finally, the expressions of $\hat{F}_t^{n-j,l_j;n-k,r_k}(X)$ and $\hat{H}_t^{n-j,l_j;n-k,r_k}(X)$ can be separately calculated as 
\begin{eqnarray*}
&&\hat{H}_t^{n-j,l_j;n-k,r_k}(X)=(\Delta^{n;n}(I))^{-1}\Delta^{n-j,l_j;n-k,r_k}(X)-\hat{\pi}_t^{n-j,l_j;n-k,r_k}(X),\nonumber\\
&&\hat{F}_t^{n-j,l_j;n-k,r_k}(X)=\hat{P}^{n-j,l_j;n-k,r_k}(X)-\hat{H}_t^{n-j,l_j;n-k,r_k}(X)\Delta^{n;n}(I),
\end{eqnarray*}
by solving the equations. Putting them back into equation (\ref{form_conu_mul}), we can get the quantum filter.
\endproof

{\it Remark }12. The single-photon filter for photodetection studied in \cite[Section 3-F]{gough:2012b} is a special case of  the general $n$-photon filter presented in Theorem \ref{thm:n_filter_PD} when $n=1$.

\vspace{-0.1cm}

%%%%%%%%%%%%%%%%%%%%%%%%%%%%%%%%%
%%%%%%%%%%%%%%%%%%%%%%%%%%%%%%%%%
%%%%%%%%%%%%%%%%%%%%%%%%%%%%%%%%%
\section{Conclusion}\label{sec:conclusion}
In this paper we have investigated the filtering problem for an arbitrary open quantum system driven by an incident wavepacket prepared in a continuous-mode multi-photon state.  The two-photon case has been studied in detail. Moreover, for the general multi-photon case, the filtering equations for both homodyne detection and photodetection have been proposed. A model of a two-level system driven by a two-photon wavepacket has been used to demonstrate some of the results in the paper. This example reveals physical features of the two-photon case distinct from the single-photon case and the Fock state case. Future research includes measurement-based feedback control by means of the multi-photon filtering framework presented here.

%\ack
%This research is supported in part by a National Natural Science Foundation of China (NSFC) grant (NO.61374057, 61134008 and 61227902) and a Hong Kong RGC grant (No. 531213). The authors are grateful to Matthew James and Hendra Nurdin for their very helpful suggestions.
%

{\bf Acknowledgment.} The authors are grateful to Matthew James and Hendra Nurdin for their very helpful discussions.

%\vspace{-1.5cm}

\Appendix\section*{} The proof of Theorem \ref{thm:2-photon_filter} proceeds along the following three steps.
\begin{description}
\item[Step 1.] Express the filtering equations in Theorem  \ref{thm:2-photon_filter} in a unified manner in terms of the conditional expectations  $\pi _{t}^{jk;mn}(X)$  defined in equation (\ref{exp_con_2}).
\item[Step 2.] Postulate the general form of the filtering equation of $\pi _{t}^{jk;mn}(X)$.
\item[Step 3.] Derive the exact expression of the quantum filter postulated in Step 2.
\end{description}

In what follows we work out the detail for each step.
%
% Firstly, superoperators ${\mathcal{T}_{t}^{jk;mn}(X)}$ \ $(j,k,m,n=0,1)$ are defined, by means of which all the equations in
%Theorem \ref{thm:2-photon_filter} can be represented uniformly. This allows
%us to focus on the derivation of the filtering equation for a general conditional
%expectation $\pi _{t}^{jk;mn}(X)$. Secondly, we postulate the filtering
%equation of $\pi _{t}^{jk;mn}(X)$ as
%
%Finally we use the definition in Eq. (\ref{exp_con_2}), the filter for the extended system, the Ito rule and the properties of conditional expectations to find
%superoperators $F_{t}^{jk;mn}(X)$ and $H_{t}^{jk;mn}(X)$.

{\it Step 1.} For given $j,k,m,n=0,1$, define superoperators ${\mathcal{T}_{t}^{jk;mn}(X)}$ to be
\begin{eqnarray*}
&&{\mathcal{T}_{t}^{jk;mn}(X)}
\\
&\triangleq&\pi _{t}^{jk;mn}(\mathcal{L}_{00}(X))
\nonumber
\\
 &&
+[\delta _{j0}\delta _{k1} ~\xi _{2}^{\ast }(t)
  +\delta _{j1}\delta _{k0} ~\xi _{1}^{\ast }(t)] \pi _{t}^{00;mn}(\mathcal{L}_{10}(X))
\nonumber
\\
&&
+[\delta _{m0} \delta _{n1}~\xi _{2}(t) +\delta _{m1}\delta _{n0} ~\xi _{1}(t)]\pi _{t}^{jk;00}(\mathcal{L}_{01}(X))
 \nonumber
 \\
&&
+\delta _{j1}\delta _{k1}\frac{1}{\sqrt{N_2}}[\xi _{1}^{\ast }(t)\pi
_{t}^{01;mn}(\mathcal{L}_{10}(X))+\xi
_{2}^{\ast }(t)\pi _{t}^{10;mn}(\mathcal{L}_{10}(X))]
\nonumber
\\
 &&
+\delta _{m1}\delta _{n1}\frac{1}{\sqrt{N_2}}[\xi _{1}(t)\pi _{t}^{jk;01}(\mathcal{L}_{01}(X))+\xi _{2}(t)\pi _{t}^{jk;10}(\mathcal{L}_{01}(X))]  \nonumber
\\
&&+\delta _{j1}\delta _{k1}\delta _{m1}\delta _{n1}\frac{1}{N_2}[\vert \xi
_1(t)\vert^2 \pi _{t}^{01;01}\left( \mathcal{L}_{11}(X)\right) +\xi _{1}(t)\xi
_{2}^{\ast }(t) \pi _{t}^{10;01}\left( \mathcal{L}_{11}(X)\right)]
   \nonumber
\\
&&
+\delta _{j1}\delta _{k1}\delta _{m1}\delta _{n1}\frac{1}{N_2}[\xi _{1}^{\ast}(t)\xi _{2}(t)\pi _{t}^{01;10}\left( \mathcal{L}_{11}(X)\right)
+\vert \xi_{2}(t)\vert^2\pi _{t}^{10;10}\left( \mathcal{L}_{11}(X)\right) ]  \nonumber
\\
&&
+\delta _{j0}\delta _{k1}\delta _{m1}\delta _{n1}\frac{1}{\sqrt{N_2}} [\xi_1(t)\xi
_2^{\ast}(t) \pi_t^{00;01} (\mathcal{L}_{11}(X)) +\vert \xi_2(t)\vert^2 \pi _{t}^{00;10}(\mathcal{L}_{11}(X))]
 \nonumber
 \\
&&
+\delta _{j1}\delta _{k0}\delta _{m1}\delta _{n1}\frac{1}{\sqrt{N2}}[ \vert \xi
_{1}(t)\vert^2 \pi _{t}^{00;01}\left( \mathcal{L}_{11}(X)\right) ) +\xi_1^{\ast
}(t)\xi_2(t)\pi _{t}^{00;10}\left( \mathcal{L}_{11}(X)\right)]
  \nonumber
\\
&&
+\delta _{j1}\delta _{k1}\delta _{m0}\delta _{n1}\frac{1}{\sqrt{N_2}} [\xi_1^{\ast}(t)\xi_2(t) \pi _{t}^{01;00}\left( \mathcal{L}_{11}(X)\right) + \vert \xi_2(t)\vert ^2 \pi _{t}^{10;00}\left(\mathcal{L}_{11}(X)\right)]
\nonumber
\\
&&
+ \delta _{j1}\delta _{k1}\delta _{m1}\delta _{n0}\frac{1}{\sqrt{N_2}}
[\vert \xi_1(t)\vert^2 \pi _{t}^{01;00}\left( \mathcal{L}_{11}(X)\right)  + \xi_1(t)\xi
_2^{\ast }(t)\pi _{t}^{10;00}\left( \mathcal{L}_{11}(X)\right) ]
 \nonumber
 \\
 &&
+\delta _{j0}\delta
_{k1} [\delta _{m0}\delta _{n1}\left\vert \xi _{2}(t)\right\vert ^{2}+\delta _{m1}\delta _{n0}\xi _{1}(t)\xi _{2}^{\ast }(t)]\pi_{t}^{00;00}\left( \mathcal{L}_{11}(X)\right)
\nonumber
\\
&&
+\delta _{j1}\delta _{k0}[\delta _{m0}\delta _{n1}\xi _{1}^{\ast }(t)\xi
_{2}(t)+\delta _{m1}\delta _{n0}|\xi _{1}(t)|^{2}]\pi
_{t}^{00;00}\left( \mathcal{L}_{11}(X)\right).
\nonumber
%\\
\label{eq:aug10_1}
\end{eqnarray*}
Then it can be verified that for all $j,k,m,n=0,1$ the equations in Theorem \ref{thm:2-photon_filter} can be re-written in a unified way as
\begin{equation}
{\rm d}\pi _{t}^{jk;mn}(X)={\mathcal{T}_{t}^{jk;mn}(X){\rm d} t+}\left\{
M_{t}^{jk;mn}(X)-\pi _{t}^{jk;mn}(X){M_{t}^{11;11}(I)}\right\}
{\rm d} W(t), \label{eq:aug10_2}
\end{equation}
where the superoperators $M_{t}^{jk;mn}(X)$ are given in equation (\ref{eq:aug3_6}). As a result, it suffices to establish (\ref{eq:aug10_2}) to prove the Theorem \ref{thm:2-photon_filter}.

{\it Step 2.} We postulate the filtering equation of $\pi _{t}^{jk;mn}(X)$ to be
\begin{equation}
{\rm d}\pi _{t}^{jk;mn}(X)=F_{t}^{jk;mn}(X){\rm d} t+H_{t}^{jk;mn}(X){\rm d} Y(t), ~~\forall ~ j,k,m,n=0,1.
\label{eq:Aug2_0}
\end{equation}
The expressions for $F_{t}^{jk;mn}(X)$ and $H_{t}^{jk;mn}(X)$ in equation (\ref{eq:Aug2_0}) are to be determined in the next step.

{\it Step 3.} For the sake of clarity, we re-write equation (\ref{exp_con_2}) as below
\begin{equation}
 \frac{|\alpha _{11}|^{2}}{\alpha _{jk}^{\ast }\alpha _{mn}}\tilde{\pi}_{t}(|e_{jk}\rangle \langle e_{mn}|\otimes X)=(I\otimes \pi _{t}^{jk;mn}(X))
\tilde{\pi}_{t}(|e_{11}\rangle \langle e_{11}|\otimes I).
\label{exp_con_3_temp}
\end{equation}
 Differentiating the both sides of equation (\ref{exp_con_3_temp}) and comparing corresponding terms we have
\begin{eqnarray}
&&\frac{|\alpha _{11}|^{2}}{\alpha _{jk}^{\ast }\alpha _{mn}}
\tilde{M}_{t}(|e_{jk}\rangle \langle e_{mn}|\otimes X)\label{aug02_7}\\
&=& I\otimes H_{t}^{jk;mn}(X)\tilde{\pi}_{t}(|e_{11}\rangle \langle e_{11}|\otimes
I)+(I\otimes \pi _{t}^{jk;mn}(X))\tilde{M}_{t}(|e_{11}\rangle \langle
e_{11}|\otimes I), \nonumber
\end{eqnarray}
and
\begin{eqnarray}
&&~~~~~\frac{|\alpha _{11}|^{2}}{\alpha _{jk}^{\ast }\alpha _{mn}}\tilde{\pi}
_{t}(\mathcal{G}(|e_{jk}\rangle \langle e_{mn}|\otimes X))-\frac{|\alpha
_{11}|^{2}}{\alpha _{jk}^{\ast }\alpha _{mn}}\tilde{M}_{t}(|e_{jk}\rangle
\langle e_{mn}|\otimes X)\tilde{M}_{t}(I\otimes I) \label{aug02_6}\\
&=&I\otimes F_{t}^{jk;mn}(X)\tilde{\pi}_{t}(|e_{11}\rangle \langle e_{11}|\otimes I)-(I\otimes \pi _{t}^{jk;mn}(X))\tilde{M}_{t}(|e_{11}\rangle \langle
e_{11}|\otimes I)\tilde{M}_{t}(I\otimes I)
\nonumber\\
&&+I\otimes H_{t}^{jk;mn}(X)\left[ \tilde{M}
_{t}(|e_{11}\rangle \langle e_{11}|\otimes I)-\tilde{\pi}_{t}(|e_{11}\rangle
\langle e_{11}|\otimes I)\tilde{M}_{t}(I\otimes I)\right]. \nonumber
\end{eqnarray}
By means of (\ref{exp_con_3_temp}), the definitions of ${\tilde{M}
_{t}(A\otimes X)}$ in (\ref{eq:tilde_M}) and ${M_{t}^{jk;mn}(X)}$ in (\ref{eq:aug3_6}), we are able to derive
\begin{equation}
\frac{|\alpha _{11}|^{2}}{\alpha _{jk}^{\ast }\alpha _{mn}}\tilde{M}
_{t}(|e_{jk}\rangle \langle e_{mn}|\otimes X)=\left( I\otimes {
M_{t}^{jk;mn}(X)}\right) \tilde{\pi}_{t}(|e_{11}\rangle \langle
e_{11}|\otimes I).  \label{eq:aug18_1}
\end{equation}
Putting (\ref{eq:aug18_1}) back into (\ref{aug02_7}) yields
\begin{equation}
H_{t}^{jk;mn}(X)={M_{t}^{jk;mn}(X)-}\pi _{t}^{jk;mn}(X){M_{t}^{11;11}(I)}.
\label{eq:aug18_2}
\end{equation}
That is, we have derived the expression for $H_{t}^{jk;mn}(X)$. Next we derive the expression for $F_{t}^{jk;mn}(X)$.
Substituting (\ref{eq:aug18_1})--(\ref{eq:aug18_2}) into (\ref{aug02_6}) yields
\begin{eqnarray}
&&~~~~ I\otimes F_{t}^{jk;mn}(X)\tilde{\pi}_{t}(|e_{11}\rangle \langle e_{11}|\otimes I)\label{aug02_9}\\
&=&\frac{|\alpha _{11}|^{2}}{\alpha _{jk}^{\ast }\alpha _{mn}}\tilde{\pi}
_{t}(\mathcal{G}(|e_{jk}\rangle \langle e_{mn}|\otimes X))-\left( I\otimes
H_{t}^{jk;mn}(X){M_{t}^{11;11}(I)}\right) \tilde{\pi}_{t}(|e_{11}\rangle
\langle e_{11}|\otimes I). \nonumber
\end{eqnarray}
Thus we have to find expression for $\tilde{\pi}_{t}(\mathcal{G}
(|e_{jk}\rangle \langle e_{mn}|\otimes X))$. Observe that by (\ref{eq:G}),
\begin{eqnarray}
&&~~~~~~~ \frac{|\alpha _{11}|^{2}}{\alpha _{jk}^{\ast }\alpha _{mn}}\tilde{\pi}_{t}(
\mathcal{G}(|e_{jk}\rangle \langle e_{mn}|\otimes X))  \label{eq:aug3_3} \\
&=&\frac{|\alpha _{11}|^{2}}{\alpha _{jk}^{\ast }\alpha _{mn}}\tilde{\pi}
_{t}(|e_{jk}\rangle \langle e_{mn}|\otimes \mathcal{L}_{00}(X))  +\frac{|\alpha _{11}|^{2}}{\alpha _{jk}^{\ast }\alpha _{mn}}\tilde{\pi}
_{t}(\mathcal{K}_{01}(|e_{jk}\rangle \langle e_{mn}|)\otimes \mathcal{L}
_{01}(X))  \nonumber \\
&&+\frac{|\alpha _{11}|^{2}}{\alpha _{jk}^{\ast }\alpha _{mn}}\tilde{\pi}
_{t}(\mathcal{K}_{10}(|e_{jk}\rangle \langle e_{mn}|)\otimes \mathcal{L}
_{10}(X))  +\frac{|\alpha _{11}|^{2}}{\alpha _{jk}^{\ast }\alpha _{mn}}\tilde{\pi}
_{t}(\mathcal{K}_{11}(|e_{jk}\rangle \langle e_{mn}|)\otimes \mathcal{L}
_{11}(X)), \nonumber
\end{eqnarray}
we need to calculate each term on the right-hand side of equation (\ref{eq:aug3_3}).
Firstly,
\begin{eqnarray}
\quad\quad\frac{|\alpha _{11}|^{2}}{\alpha _{jk}^{\ast }\alpha _{mn}}\tilde{\pi}
_{t}(|e_{jk}\rangle \langle e_{mn}|\otimes \mathcal{L}_{00}(X))\tilde{\pi}_{t}(|e_{11}\rangle \langle e_{11}|\otimes I)^{-1}=
I\otimes \pi _{t}^{jk;mn}(\mathcal{L}_{00}(X)).
\label{eq:aug10:5}
\end{eqnarray}
Secondly, based on the definition of the superoperator $\mathcal{K}_{01}(A)$
in (\ref{eq_K_01}) we have
\begin{eqnarray}
&&\frac{|\alpha _{11}|^{2}}{\alpha _{jk}^{\ast }\alpha _{mn}}\tilde{\pi}_{t}(
\mathcal{K}_{01}(|e_{jk}\rangle \langle e_{mn}|)\otimes \mathcal{L}_{01}(X))\tilde{\pi}_{t}(|e_{11}\rangle \langle e_{11}|\otimes I)^{-1}
\label{eq:aug10:6} \\
&=&\delta _{m1}\delta _{n1}\frac{\xi _{1}(t)}{\sqrt{N_{2}}}(I\otimes \pi
_{t}^{jk;01}(\mathcal{L}_{01}(X))+\delta _{m1}\delta _{n1}\frac{\xi _{2}(t)}{\sqrt{N_{2}}}
(I\otimes \pi _{t}^{jk;10}(\mathcal{L}_{01}(X)) \nonumber \\
&&+\delta _{m1}\delta _{n0}\xi _{1}(t)(I\otimes \pi _{t}^{jk;00}(\mathcal{L}
_{01}(X))+\delta
_{m0}\delta _{n1} \xi _{2}(t)(I\otimes \pi _{t}^{jk;00}(\mathcal{L}
_{01}(X)).  \nonumber
\end{eqnarray}
Thirdly, according to the definition of the superoperator $\mathcal{K}_{10}(A)$ in
(\ref{eq_K_10}), we have
\begin{eqnarray}
&&\frac{|\alpha _{11}|^{2}}{\alpha _{jk}^{\ast }\alpha _{mn}}\tilde{\pi}_{t}(
\mathcal{K}_{10}(|e_{jk}\rangle \langle e_{mn}|)\otimes \mathcal{L}_{10}(X))
\tilde{\pi}_{t}(|e_{11}\rangle \langle e_{11}|\otimes I)^{-1}\label{eq:aug10:7}
 \\
&=&\delta _{j1}\delta _{k1}\frac{\xi _{1}^{\ast }(t)}{\sqrt{N_{2}}}(I\otimes
\pi _{t}^{01;mn}(\mathcal{L}_{10}(X))+\delta _{j1}\delta _{k1}\frac{\xi
_{2}^{\ast }(t)}{\sqrt{N_{2}}}(I\otimes \pi _{t}^{10;mn}(\mathcal{L}_{10}(X))
\nonumber \\
&&+\delta _{j1}\delta _{k0}\xi _{1}^{\ast }(t)(I\otimes \pi _{t}^{00;mn}(
\mathcal{L}_{10}(X))+\delta _{j0}\delta _{k1}\xi _{2}^{\ast }(t)(I\otimes
\pi _{t}^{00;mn}(\mathcal{L}_{10}(X)).  \nonumber
\end{eqnarray}
Fourthly, by the definition of the superoperator $\mathcal{K}_{11}(A)$ in (\ref{eq_K_11}), it can be shown that
\begin{eqnarray}
&&\frac{|\alpha _{11}|^{2}}{\alpha _{jk}^{\ast }\alpha _{mn}} \tilde{
\pi}_{t}(\mathcal{K}_{11}(|e_{jk}\rangle \langle e_{mn}|)\otimes \mathcal{L}
_{11}(X))\tilde{\pi}_{t}(|e_{11}\rangle \langle e_{11}|\otimes I\mathcal{)}
^{-1}\nonumber\\
&=&\delta _{j1}\delta _{k1}\delta _{m1}\delta _{n1}\frac{\left\vert \xi
_{1}(t)\right\vert ^{2}}{N_{2}}I\otimes \pi _{t}^{01;01}\left( \mathcal{L}
_{11}(X)\right) +\delta _{j1}\delta _{k1}\delta _{m1}\delta _{n1}\frac{\xi_{1}(t)\xi
_{2}^{\ast }(t)}{N_{2}}I\otimes \pi _{t}^{10;01}\left( \mathcal{L}
_{11}(X)\right)   \nonumber \\
&&+\delta _{j1}\delta _{k0}\delta _{m1}\delta _{n1}\frac{\left\vert \xi
_{1}(t)\right\vert ^{2}}{\sqrt{N_{2}}}I\otimes \pi _{t}^{00;01}\left(
\mathcal{L}_{11}(X)\right) +\delta _{j0}\delta _{k1}\delta _{m1}\delta _{n1}
\frac{\xi _{1}(t)\xi _{2}^{\ast }(t)}{\sqrt{N_{2}}}I\otimes \pi
_{t}^{00;01}\left( \mathcal{L}_{11}(X)\right)
\nonumber \\
&&+\delta _{j1}\delta _{k1}\delta _{m1}\delta _{n1}\frac{\xi _{1}^{\ast
}(t)\xi _{2}(t)}{N_{2}}I\otimes \pi _{t}^{01;10}\left( \mathcal{L}
_{11}(X)\right) +\delta _{j1}\delta _{k1}\delta _{m1}\delta _{n1}\frac{
\left\vert \xi _{2}(t)\right\vert ^{2}}{N_{2}}I\otimes \pi
_{t}^{10;10}\left( \mathcal{L}_{11}(X)\right)
\nonumber \\
&&+\delta _{j1}\delta _{k0}\delta _{m1}\delta _{n1}\frac{\xi _{1}^{\ast
}(t)\xi _{2}(t)}{\sqrt{N_{2}}}I\otimes \pi _{t}^{00;10}\left( \mathcal{L}
_{11}(X)\right) +\delta _{j0}\delta _{k1}\delta _{m1}\delta _{n1}\frac{
\left\vert \xi _{2}(t)\right\vert ^{2}}{\sqrt{N_{2}}}I\otimes \pi
_{t}^{00;10}\left( \mathcal{L}_{11}(X)\right)
 \nonumber \\
&&+\delta _{j1}\delta _{k1}\delta _{m1}\delta _{n0}\frac{\left\vert \xi
_{1}(t)\right\vert ^{2}}{\sqrt{N_{2}}}I\otimes \pi _{t}^{01;00}\left(
\mathcal{L}_{11}(X)\right) +\delta _{j1}\delta _{k1}\delta _{m1}\delta _{n0}
\frac{\xi _{1}(t)\xi _{2}^{\ast }(t)}{\sqrt{N_{2}}}I\otimes \pi
_{t}^{10;00}\left( \mathcal{L}
_{11}(X)\right)
\nonumber \\
&&+\delta _{j1}\delta _{k0}\delta _{m1}\delta _{n0}|\xi _{1}(t)|^{2}I\otimes
\pi _{t}^{00;00}\left( \mathcal{L}_{11}(X)\right) +\delta _{j0}\delta
_{k1}\delta _{m1}\delta _{n0}\xi _{1}(t)\xi _{2}^{\ast }(t)I\otimes \pi
_{t}^{00;00}\left( \mathcal{L}_{11}(X)\right)   \nonumber \\
&&+\delta _{j1}\delta _{k1}\delta _{m0}\delta _{n1}\frac{\xi _{1}^{\ast
}(t)\xi _{2}(t)}{\sqrt{N_{2}}}I\otimes \pi _{t}^{01;00}\left( \mathcal{L}
_{11}(X)\right) +\delta _{j1}\delta _{k1}\delta _{m0}\delta _{n1}\frac{
\left\vert \xi _{2}(t)\right\vert ^{2}}{\sqrt{N_{2}}}I\otimes \pi
_{t}^{10;00}\left( \mathcal{L}_{11}(X)\right)   \nonumber \\
&&+\delta _{j1}\delta _{k0}\delta _{m0}\delta _{n1}\xi _{1}^{\ast }(t)\xi
_{2}(t)I\otimes \pi _{t}^{00;00}\left( \mathcal{L}_{11}(X)\right) +\delta
_{j0}\delta _{k1}\delta _{m0}\delta _{n1}\left\vert \xi _{2}(t)\right\vert
^{2}I\otimes \pi _{t}^{00;00}\left(\mathcal{L}_{11}(X)\right).\nonumber\\
\label{eq:aug10:9}
\end{eqnarray}
Finally, on substitution of (\ref{eq:aug10:5})--(\ref
{eq:aug10:9}) into equation (\ref{eq:aug3_3}), we have
\begin{equation}
\frac{|\alpha _{11}|^{2}}{\alpha _{jk}^{\ast }\alpha _{mn}}\tilde{\pi}_{t}(
\mathcal{G}(|e_{jk}\rangle \langle e_{mn}|\otimes X))=(I\otimes \mathcal{T}_t^{jk;mn}(X))\tilde{\pi}_t(|e_{11}\rangle\langle e_{11}|\otimes I),  \label{eq:aug10:10}
\end{equation}
where $\mathcal{T}_t^{jk;mn}(X)$ is that defined in (\ref{eq:aug10_1}). Substituting (\ref{eq:aug10:10}) into
(\ref{aug02_9}) gives
\begin{equation}
F_{t}^{jk;mn}(X)=\mathcal{T}_{t}^{jk;mn}(X)-H_{t}^{jk;mn}(I){
M_{t}^{11;11}(I)}.  \label{eq:aug18_3}
\end{equation}
That is, we have derived the expression for $F_{t}^{jk;mn}(X)$.
Putting $H_t^{jk;mn}(X)$ in (\ref{eq:aug18_2}) and $F_{t}^{jk;mn}(X)$ in (\ref{eq:aug18_3}) back into equation (\ref{eq:Aug2_0}) yields, for all $j,k,m,n=0,1$,
\begin{eqnarray*}
{\rm d}\pi _{t}^{jk;mn}(X)=\mathcal{T}_{t}^{jk;mn}(X){\rm d} t+\left[ M_{t}^{jk;mn}(X)-\pi _{t}^{jk;mn}(X){
M_{t}^{11;11}(I)}\right] {\rm d} W(t), \nonumber\label{eq:aug3_10}
\end{eqnarray*}
which is exactly (\ref{eq:aug10_2}). The proof is completed.

{\it Remark }13.
It is worth noting that the coefficients $\alpha_{jk}$\ ($j,k=0,1$) in the superposition
state $|\Sigma\r$ in equation (\ref{eq:sigma_2_photon}) for the extended system studied in Subsection \ref{subsec:extended_system}  do not appear in the filtering equations in Theorem \ref{thm:2-photon_filter}. This can be seen clearly from the above proof. Specifically, the unifying filtering equation (\ref{eq:Aug2_0}) depends on two operators $F_{t}^{jk;mn}(X)$ and $H_{t}^{jk;mn}(X)$, which satisfy two coupled algebraic equations (\ref{aug02_7})--(\ref{aug02_6}). Both these equations contain the coefficients $\alpha_{jk}$. However, by (\ref{eq:aug18_1}), $\alpha_{jk}$ on the left-hand side of equation (\ref{aug02_7}) disappear, so $H_{t}^{jk;mn}(X)$ does not depend on $\alpha_{jk}$, cf. (\ref{eq:aug18_2}). Similarly, by (\ref{eq:aug10:10}) as well as (\ref{eq:aug18_1}), $\alpha_{jk}$ on the left-hand side of equation (\ref{aug02_6}) disappear too. As a result, $F_{t}^{jk;mn}(X)$  does not depend on $\alpha_{jk}$ either, cf. (\ref{eq:aug18_3}). Therefore, the coefficients $\alpha_{jk}$ do not appear in the unifying filtering equation (\ref{eq:Aug2_0}), or equivalently the filtering equations in Theorem \ref{thm:2-photon_filter}.

%\section*{References}

\end{document}